\newcommand{\tuple}[1]{\ensuremath{(#1)}}
\newcommand{\CG}[1]{\ensuremath{G_{#1}}} % Constraint graph
\newcommand{\AC}[1]{\ensuremath{\operatorname{AC}(#1)}} % Arc-consistency closure
\newcommand{\PI}[1]{\ensuremath{\operatorname{Patt}(#1)}}  % Pattern representing instance
\newcommand{\PG}[1]{\ensuremath{\operatorname{Patt}(#1)}}  % Pattern representing graph
\newcommand{\SP}[2]{\ensuremath{#1 \stackrel{SP}{\rightarrow} {#2}}}  % Occurs as a sub-pattern
\newcommand{\TM}[2]{\ensuremath{#1 \stackrel{TM}{\rightarrow} {#2}}}  % Occurs as a top. minor
\newcommand{\ForbSP}[1]{\ensuremath{\operatorname{CSP}_{\rm\overline{SP}}(#1)}} % Instances without a specified set of subpatterns
\newcommand{\ForbTM}[1]{\ensuremath{\operatorname{CSP}_{\rm\overline{TM}}(#1)}} % Instances without a specified set of topological minors
\DeclareMathOperator{\coNP}{coNP}
\DeclareMathOperator{\FPT}{FPT}
\DeclareMathOperator{\WW}{W[1]}
\DeclareMathOperator{\NP}{NP}
\newtheorem{theorem}{Theorem}[section]
\newtheorem{definition}[theorem]{Definition}
\newtheorem{construction}[theorem]{Construction}
\newtheorem{proposition}[theorem]{Proposition}
\newtheorem{lemma}[theorem]{Lemma}
\newtheorem{corollary}[theorem]{Corollary}
\newtheorem{exmp}[theorem]{Example}
\newcommand{\bull}{\rule{.85ex}{1ex} \par \bigskip}
\newcommand{\node}{\makebox(0,0){$\bullet$}}
\newenvironment{proof}{\noindent {\bf Proof:\ }}{\hfill \bull}
\newenvironment{example}{\begin{exmp} \rm }{\hfill $\Box$ \end{exmp}}
\thicklines \setlength{\unitlength}{0.7pt}
\newsavebox{\varfive}
\newsavebox{\varfour}
\newsavebox{\varthree}
\newsavebox{\varii}
\newsavebox{\vartwo}
\newsavebox{\varone}
\begin{document}

\title{
Binary Constraint Satisfaction Problems \\
Defined by Excluded Topological Minors\thanks{An extended abstract of part of
this work appeared in the \emph{Proceedings of the 24th International Joint
Conference on Artificial Intelligence} (IJCAI'15)~\cite{ccjz15:ijcai}. The
authors were supported by EPSRC grant EP/L021226/1. Stanislav \v{Z}ivn\'y  was
supported by a Royal Society University Research Fellowship. This project has
received funding from the European Research Council (ERC) under the European
Union's Horizon 2020 research and innovation programme (grant agreement No
714532). The paper reflects only the authors' views and not the views of the ERC
or the European Commission. The European Union is not liable for any use that
may be made of the information contained therein.}}
 
\author{David A. Cohen \\ 
Royal Holloway, University of London\\  
\texttt{dave@cs.rhul.ac.uk}
\and
Martin C. Cooper \\ 
IRIT, University of Toulouse\\
 \texttt{cooper@irit.fr} \\
\and
Peter G. Jeavons \\
University of Oxford\\
\texttt{peter.jeavons@.cs.ox.ac.uk}
 \and
Stanislav \v{Z}ivn\'y \\
University of Oxford\\
\texttt{standa.zivny@cs.ox.ac.uk}
}

\date{}
\maketitle

\begin{abstract}
The  binary Constraint Satisfaction Problem (CSP) is to decide
whether there exists an assignment to a set of variables which
satisfies specified constraints between pairs of variables. 
A binary CSP instance can be presented as a labelled graph 
encoding both the forms of the constraints and where they are imposed. 
We consider subproblems defined by restricting the
allowed form of this graph. One type of restriction that has
previously been considered is to forbid certain specified
substructures (patterns). This captures some tractable classes of the
CSP, but does not capture classes defined by language restrictions,
or the well-known structural property of acyclicity. 

In this paper we extend the notion of pattern and introduce the notion of a
topological minor of a binary CSP instance. By forbidding a \emph{finite} set of
patterns from occurring as topological minors we obtain a compact mechanism for
expressing novel tractable subproblems of the binary CSP, including new
generalisations of the class of acyclic instances. Forbidding a finite set of
patterns as topological minors also captures all other tractable structural
restrictions of the binary CSP. Moreover, we show that several patterns give
rise to tractable subproblems if forbidden as topological minors but not if
forbidden as sub-patterns. Finally, we introduce the idea of augmented patterns
that allows for the identification of more tractable classes, including all
language restrictions of the binary CSP.
  \end{abstract}

\section{Introduction}  \label{sec:intro}

The Constraint Satisfaction Problem (CSP) is to decide whether it is
possible to find an assignment to a set of variables which satisfies constraints between certain subsets of the variables. This paradigm has been
applied in diverse application areas such as Artificial Intelligence,
Bioinformatics and Operations Research~\cite{Rossi06:handbook,Hell08:survey}.

As the CSP is known to be $\NP$-complete, much theoretical work has been
devoted to the identification of tractable subproblems. Important
tractable cases have been identified by restricting the hypergraph \textit{structure} of the constrained subsets
of variables~\cite{Freuder82:backtrack-free,Dalmau02:width}. 
Other tractable cases have been identified by restricting the forms
of constraints (sometimes called the constraint {\em
language})~\cite{DBLP:journals/jacm/JeavonsCG97,Feder98:monotone}.
Work on both of these areas is now essentially complete: full complexity
classifications have been established for all
structural restrictions~\cite{Grohe07:otherside,DBLP:journals/jacm/Marx13} and all language
restrictions~\cite{Bulatov17:focs,Zhuk17:focs}.

However, identifying the subproblems of the CSP that can be obtained by restricting either 
the structure or the language alone is not a sufficiently rich framework in which to investigate 
the full complexity landscape.  
For example, we may wish to identify all the instances solved by a particular 
algorithm, such as enforcing arc-consistency~\cite{Dechter03:processing,Rossi06:handbook}.
It has been shown~\cite{Feder98:monotone,Cohen16:GACdecides} 
that this class of instances includes all instances defined by 
a certain structural restriction, 
together with all instances defined by a certain language restriction,
as well as further instances that are not defined by either kind of restriction alone.
Hence we need a more flexible mechanism for describing subproblems that will 
allow us to unify and generalise such descriptions.

Here we develop a new mechanism of this kind that uses certain tools
from graph theory to define restricted classes of labelled graphs 
that represent binary CSP instances.
Our mechanism allows us to 
impose simultaneous restrictions on both the structure and the language of an instance,
and hence obtain a more refined collection of subproblems,
allowing a more detailed complexity analysis.
Subproblems of the CSP of this kind are sometimes referred to as \emph{hybrid} 
subproblems and, currently, very little is known about the complexity of
such subproblems~\cite{cz17:survey}.

The tools that we use to obtain restricted classes of labelled graphs build on a
well-established line of research in graph theory, by considering local
``obstructions" or ``forbidden patterns". The idea of using forbidden patterns
has previously been applied to the binary CSP and resulted in the discovery of a
number of new tractable
classes~\cite{Cohen12:pivot,ccez15:jcss,Cooper15:dam,EscamocherThesis}; related
ideas also appeared in~\cite{Madelain07:sicomp,Kun08:forbidden}.
In more detail, \cite{Cohen12:pivot} characterised all so-called negative
patterns that give rise to tractable classes of binary CSPs (this result is summarised in 
Theorem~\ref{thm:pivotSPtractable} below). Moreover, \cite{Cooper15:dam} characterised
all patterns consisting of at most two constraints that give rise to tractable
classes of binary CSPs. Finally, \cite{ccez15:jcss} investigated the notion of
forbidden patterns in the context of variable and value elimination in CSPs.

However, the existing theory of forbidden patterns is not sufficient to capture all known tractable 
structural restrictions, or language restrictions, as we show below.
In particular, we show that even the simplest tractable structural class,
the class of tree-structures CSP instances, cannot be captured by forbidding any finite 
set of patterns (Corollary~\ref{cor:acyclic}).
To describe all the relevant structural, language 
and hybrid restrictions that can ensure tractability therefore requires a more flexible way to define
restricted classes of instances.

In graph theory it proved useful to go beyond the idea of forbidden subgraphs 
and introduce the more flexible concept of forbidden minors. 
A well-known result of Robertson and Seymour states that any set of graphs closed 
under the operation of taking minors is specified by a finite set of forbidden minors.
Rather than adapting the full machinery of graph minors to the CSP framework, 
we consider here the slightly simpler notion of a 
\emph{topological minor}~\cite{Diestel10:graph}.
We show that by adapting the notion of topological minor to the CSP framework 
we are able to provide a unified description of all tractable structural classes,
all tractable language classes, and some new hybrid tractable classes that cannot be captured 
as either structural classes or language classes. 
Moreover, we are able to show that the class of tree-structured CSP instances
has a very simple description in this framework, and 
there exist tractable classes of the binary CSP
that properly extend this class and yet still have a very simple description.

The structure of the paper is as follows: 
in Section~\ref{sec:defs} we define the CSP and the notion of a pattern, and show how to associate
each CSP instance with a corresponding pattern.
In Section~\ref{sec:forbidding} we define what it means for a pattern to occur in another pattern, 
either as a {\em sub-pattern} or as a {\em topological minor}, and use these notions to define
restricted classes of CSP instances where specified patterns are forbidden from occurring in 
one or other of these ways. 

In Section~\ref{sec:structural} we show that all tractable structural classes of the CSP can be
characterised by forbidding certain patterns from occurring as topological minors.
We extend this idea in Section~\ref{sec:scheme} to obtain novel hybrid tractable classes of CSP instances,
including classes that properly extend the class of acyclic instances.

In Section~\ref{sec:detecting} we consider the complexity of determining whether a given pattern occurs as 
a topological minor in a CSP instance, and in Section~\ref{sec:augmented} we 
show that including additional structure in patterns allows us to characterise
more classes of CSP instances, including all tractable language classes.
Finally, in Section~\ref{sec:conclusion}, we conclude with a discussion of our results and present some 
open questions.

\section{Preliminaries} \label{sec:defs}

\subsection{The CSP}\label{sec:csp}

Constraint satisfaction is a paradigm for describing computational problems.
Each problem instance is represented as a constraint network: 
a collection of variables that take their value from some given domain.  
Some subsets of the variables have a further restriction on their allowed simultaneous
assignments, called a constraint.  A solution to such a network assigns a value 
to each variable such that every constraint is satisfied.  

In this paper we consider only \emph{binary} constraint networks,
where every constraint limits the possible assignments of precisely two variables.
It has been shown that 
any constraint network can be reduced to an equivalent binary network over a different
domain of values~\cite{Dechter89:tree,Rossi90:equivalence}.

\begin{definition}
\label{def:CSP}
An instance of the binary constraint satisfaction problem (CSP) 
is a triple \tuple{V,D,C} where $V$ is a finite set of variables, 
for each $v \in V$, $D(v)$ is a finite domain of values for $v$, 
and $C$ is a set of constraints, 
containing a constraint $R_{uv}$ for each pair of variables  $\tuple{u,v}$.  
The constraint $R_{uv} \subseteq D(u)\times D(v)$
is the set of compatible assignments to the variables $u$ and $v$.

A \emph{solution} to a binary CSP instance is an assignment $s:V\to D$ of values to variables
such that, for each constraint $R_{uv}$, $\tuple{s(u),s(v)} \in R_{uv}$.
\end{definition}

We will assume that there is \emph{exactly one} binary constraint between any two variables. 
That is, if we define $R'_{uv}$ as
$\{\tuple{b,a}\mid\tuple{a,b}\in R_{uv}\}$,
then $R_{vu} = R'_{uv}$.  
This is just a notational convenience since we can pre-process each instance, 
replacing $R_{uv}$ with $R_{uv} \cap R'_{vu}$. 
A constraint will be called \emph{trivial} if it is equal to 
the Cartesian product of the domains of its two variables.

The size of a CSP instance will be taken to be the sum of the sizes of the 
constraint relations. Given a fixed bound on the size of the domain for any variable 
and the arity of the constraints, this is polynomial in the number of variables. 
We will say that a class of CSP instances is \emph{tractable} if there is 
a polynomial-time algorithm to decide whether any instance in the class has a solution.

Note that Definition~\ref{def:CSP} describes a standard 
form of mathematical specification for a CSP instance 
that is convenient for theoretical analysis.
In the next subsection we will introduce an alternative representation 
in terms of patterns (see Construction~\ref{constructPI}).
Often more concise representations are used,
and trivial constraints are usually not represented~\cite{Rossi06:handbook}.

Arc-consistency (AC) is a fundamental concept for the binary
CSP~\cite{Dechter03:processing,Rossi06:handbook}.
\begin{definition}
A pair of variables $(u,v)$ is said to be arc-consistent if for each value
$a\in D(u)$ in the domain of $u$, there is a value $b\in D(v)$ in the domain of $v$ such that
$\tuple{a,b} \in R_{uv}$.  

A binary CSP instance is \emph{arc-consistent} if every pair of variables is arc-consistent.
\end{definition}
Given an arbitrary CSP instance $I$ there is a unique minimal set of domain values which can 
be removed to make the instance arc-consistent.  Furthermore the discovery of this unique
minimal set of domain values and their removal, called establishing arc-consistency, 
can be done in polynomial time~\cite{Cooper1989}.  For a given instance $I$ we will denote by
$\AC{I}$  the instance obtained after establishing arc-consistency.

\subsection{Patterns}
\label{sec:patterns}

We now introduce the central notion of a {\em pattern}, which can be thought of as a labelled 
graph, with three distinct kinds of edges.

\begin{definition}
\label{def:patt}
A \emph{pattern} is a structure $\tuple{X,E^{\sim},E^{+},E^{-}}$,
where 
{\samepage
\begin{itemize}
\item $X$ is a set of \emph{points};
\item $E^\sim$ is a binary equivalence relation over $X$ 
whose equivalence classes are called \emph{parts};
\item $E^{+}$ is a symmetric binary relation over $X$ whose tuples are called \emph{positive edges};
\item $E^{-}$ is a symmetric binary relation over $X$ whose tuples are called \emph{negative edges}.
\end{itemize}
}
The sets $E^\sim$ and $E^{+}$ are disjoint, and the sets $E^\sim$ and $E^{-}$ are disjoint.
\end{definition}

In a general pattern 
there may be pairs of points $x$ and $y$ in distinct parts such that 
$(x,y)$ is neither a positive nor a negative edge, 
and there may be pairs of points $x$ and $y$ in distinct parts such that 
$(x,y)$ is \emph{both} a positive and a negative edge.
A pattern is called \emph{complete}
if every pair of points $x$ and $y$ in distinct parts are connected by either a positive
or negative edge (but not both), 
and hence $E^{\sim} \cup E^{+} \cup E^{-} = X^2$.

\begin{example}
Some examples of patterns are illustrated in a standard way in Figure~\ref{fig:sp}.

The pattern shown in Figure~\ref{fig:sp}(a) is complete, but the others are not.
\thicklines \setlength{\unitlength}{0.7pt}
\begin{figure}
\centering

\begin{picture}(480,110)(0,0)

\put(0,10){
\begin{picture}(130,100)(0,0)
\put(0,0){\usebox{\varone}} \put(80,0){\usebox{\varone}}
\put(40,40){\usebox{\varone}} \put(10,20){\line(1,0){80}}
\dashline[50]{7}(10,20)(50,60) \dashline[50]{7}(50,60)(90,20)
\put(50,0){\makebox(0,0){(a)}}
\end{picture}}

\put(120,10){\begin{picture}(130,100)(0,0)
\put(0,0){\usebox{\varone}} \put(80,0){\usebox{\varone}}
\put(40,50){\usebox{\varthree}} \put(10,20){\line(1,1){40}}
\put(10,20){\line(1,0){80}} \put(90,20){\line(-1,1){40}}
\dashline[50]{7}(10,20)(50,70) \dashline[50]{7}(50,80)(90,20)
\put(50,0){\makebox(0,0){(b)}}
\end{picture}}

\put(240,10){\begin{picture}(100,60)(0,0) 
\put(0,0){\usebox{\varone}}
\put(80,0){\usebox{\varone}} \dashline[50]{7}(10,20)(90,20)
\qbezier(10,20)(50,30)(90,20) \put(50,0){\makebox(0,0){(c)}}
\end{picture}}

\put(360,10)
{\begin{picture}(130,100)(0,0)
\put(0,0){\usebox{\varone}} \put(80,0){\usebox{\varone}}
\put(40,50){\usebox{\varthree}} \put(10,20){\line(1,1){40}}
\put(90,20){\line(-1,1){40}} \dashline[50]{7}(10,20)(50,70)
\dashline[50]{7}(50,80)(90,20) \put(50,0){\makebox(0,0){(d)}}
\end{picture}}

\end{picture}

\caption{Some example patterns.
Points are shown as filled circles, parts as ovals, 
positive edges as solid lines and negative edges as dashed lines.
}
\label{fig:sp}
\end{figure}
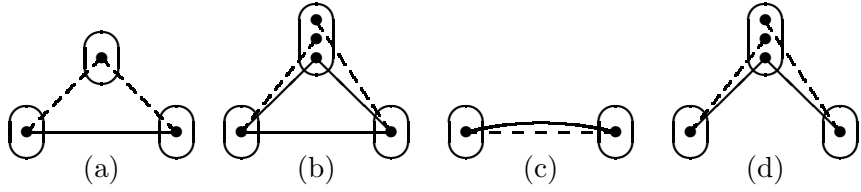
\end{example}

It will often be convenient to build special patterns to represent binary CSP instances, 
so we now define the following construction.
\begin{construction}
\label{constructPI}
For any binary CSP instance $I=\tuple{V,D,C}$, 
where $C = \{R_{uv} \mid u,v \in V, u \neq v\}$,
we define a corresponding complete pattern 
$\PI{I} = \tuple{X,E^{\sim},E^{+},E^{-}}$ where
\begin{itemize}
\item 
$X = \{x_{v,a} \mid v \in V, a \in D(v)\}$;
\item
$E^\sim = \{(x_{u,a},x_{v,b})\in X\times X \mid u = v\}$;
\item
$E^{+} = \{(x_{u,a},x_{v,b})\in X\times X \mid u \neq v, \tuple{a,b} \in R_{uv}\}$;
\item
$E^{-} = \{(x_{u,a},x_{v,b})\in X\times X \mid u \neq v, \tuple{a,b} \not\in R_{uv}\}$.
\end{itemize}
\end{construction}
We remark that for any instance $I$ the points of $\PI{I}$ are the possible 
assignments for each individual variable, and the parts of $\PI{I}$ 
correspond to sets of possible assignments for a particular variable.
Positive edges in $\PI{I}$ 
correspond to allowed pairs of assignments and
are therefore closely related to the edges of the 
\emph{microstructure} representation of $I$ defined in~\cite{Jegou93:microstructure};
negative edges correspond to disallowed pairs of assignments and
are closely related to the edges of the 
\emph{microstructure complement} discussed in~\cite{Cohen2003d}.
\begin{example}
Figure~\ref{fig:sp}(a) shows the pattern $\PI{I}$
for a rather trivial instance $I$ with three variables,
each of which has only one possible value. 
Note that $I$ has no solution because the only possible assignments for two pairs of variables 
are in negative edges and hence disallowed by the constraints.
\end{example}

A pattern with no positive edges will be called a \emph{negative pattern}.
It will sometimes be convenient to build negative patterns from graphs,
so we now define the following construction.
\begin{construction}%
\label{constructPG}
For any graph $G=\tuple{V,E}$, we define a corresponding negative pattern 
$\PG{G} = \tuple{X,E^{\sim},\emptyset,E^{-}}$ where
\begin{itemize}
\item 
$X = \{x_{e,v} \mid e \in E, v \in e\}$;
\item
$E^\sim = \{(x_{e,u},x_{f,v})\in X\times X \mid u = v\}$;
\item
$E^{-} = \{(x_{e,u},x_{f,v})\in X\times X \mid e = f, u \neq v\}$.
\end{itemize}
\end{construction}
\begin{example}
Let $C_3$ be the 3-cycle, that is, the graph with three vertices, $v_1,v_2,v_3$,
and 3 edges $e_1,e_2,e_3$, where $e_1 = \{v_1,v_2\}, e_2 = \{v_2,v_3\}$ and $e_3 = \{v_3,v_1\}$.
The associated negative pattern $\PG{C_3}$ defined by Construction~\ref{constructPG}
is the pattern with 6 points, 3 parts, and 3 negative edges,
shown in  Figure~\ref{fig:acyclic}.
\thicklines \setlength{\unitlength}{0.7pt}
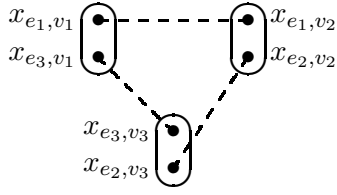
\begin{figure}
\centering

\begin{picture}(160,110)(-30,-10)
\put(0,60){\usebox{\vartwo}}
\put(40,0){\usebox{\vartwo}}
\put(80,60){\usebox{\vartwo}}
\dashline[50]{6}(10,70)(50,30)
\dashline[50]{6}(10,90)(90,90)
\dashline[50]{6}(50,10)(90,70)
\put(-20,90){\makebox(0,0){$x_{e_1,v_1}$}}
\put(-20,70){\makebox(0,0){$x_{e_3,v_1}$}}
\put(120,90){\makebox(0,0){$x_{e_1,v_2}$}}
\put(120,70){\makebox(0,0){$x_{e_2,v_2}$}}
\put(20,30){\makebox(0,0){$x_{e_3,v_3}$}}
\put(20,10){\makebox(0,0){$x_{e_2,v_3}$}}
\end{picture}

\caption{The pattern $\PG{C_3}$ constructed from the cycle graph $C_3$
by Construction~\ref{constructPG}.
}
\label{fig:acyclic}
\end{figure}

\thicklines \setlength{\unitlength}{0.7pt}
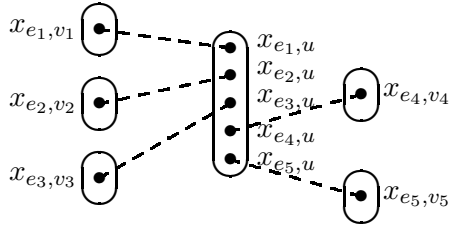
\begin{figure}
\centering

\begin{picture}(220,130)(0,-20)
\put(30,-10){\usebox{\varone}}
\put(30,30){\usebox{\varone}}
\put(30,70){\usebox{\varone}}
\put(170,-20){\usebox{\varone}}
\put(170,35){\usebox{\varone}}
\put(100,10){\usebox{\varfive}}
\dashline[50]{6}(40,10)(110,50)
\dashline[50]{6}(40,50)(110,65)
\dashline[50]{6}(40,90)(110,80)
\dashline[50]{6}(180,55)(110,35)
\dashline[50]{6}(180,0)(110,20)
\put(10,10){\makebox(0,0){$x_{e_3,v_3}$}}
\put(10,50){\makebox(0,0){$x_{e_2,v_2}$}}
\put(10,90){\makebox(0,0){$x_{e_1,v_1}$}}
\put(210,0){\makebox(0,0){$x_{e_5,v_5}$}}
\put(210,55){\makebox(0,0){$x_{e_4,v_4}$}}
\put(140,82){\makebox(0,0){$x_{e_1,u}$}}
\put(140,67){\makebox(0,0){$x_{e_2,u}$}}
\put(140,52){\makebox(0,0){$x_{e_3,u}$}}
\put(140,33){\makebox(0,0){$x_{e_4,u}$}}
\put(140,18){\makebox(0,0){$x_{e_5,u}$}}
\end{picture}

\caption{The pattern $\PG{K_{1,5}}$ constructed from the star graph $K_{1,5}$
by Construction~\ref{constructPG}.
}
\label{fig:K15}
\end{figure}

Let $K_{1,k}$ be a star graph with $k$ leaves; that is, the graph with vertices $\{u,v_1,\ldots,v_k\}$ 
and edges $\{u,v_i\}$ for $1\leq i\leq k$. The pattern $\PG{K_{1,k}}$ has $2k$ points, $k+1$ parts, and $k$ negative edges. The case of $k=5$ is shown in Figure~\ref{fig:K15}.
\end{example}

In graph theory, a \emph{subdivision} operation on a graph replaces an edge $(u,v)$ with 
a path of length two by introducing a new vertex $z_{uv}$, and connecting $u$ to $z_{uv}$ 
and $z_{uv}$ to $v$~\cite{Diestel10:graph}.
A graph $G$ is said to be a topological minor of a graph $H$ if some sequence of subdivision
operations on $G$ yields a subgraph of $H$~\cite{Diestel10:graph}.
We now define an operation on patterns that is analogous to the subdivision operation on graphs,
but takes into account the three different types of edges that are present in a pattern.
This subdivision operation for patterns 
is crucial to the idea of defining topological minors in patterns,
as described in Section~\ref{sec:forbidding}.

\begin{definition}
\label{def:subdivision}
Let $P = \tuple{X,E^{\sim},E^{+},E^{-}}$ be a pattern.

For any two distinct parts $U,V$ of $P$, 
we define $E^+_{UV} = E^+ \cap (U \times V)$,
$E^-_{UV} = E^- \cap (U \times V)$,
and $Z_{UV} = \{z_{xy} \mid (x,y) \in E^+_{UV} \} \cup \{z'_{xy},z''_{xy} \mid (x,y) \in E^-_{UV}\}$.
The \emph{subdivision} of $P$ at $U,V$ 
is defined to be the pattern $P_d = \tuple{X_d,E_d^{\sim},E_d^{+},E_d^{-}}$ where
\begin{itemize}
\item 
$X_d = X \cup Z_{UV}$;
\item 
$E_d^\sim = E^\sim \cup (Z_{UV} \times Z_{UV})$;
\item
$
\begin{aligned}[t]
E_d^+ = (E^+ \setminus & \{(x,y),(y,x) \mid (x,y) \in E^+_{UV} \})\\
 & \cup \{(x,z_{xy}),(z_{xy},x),(z_{xy},y),(y,z_{xy}) \mid (x,y) \in E^+_{UV}\};
\end{aligned}
$
\item
$
\begin{aligned}[t]
E_d^- = (E^- \setminus & \{(x,y),(y,x) \mid (x,y) \in E^-_{UV}\})\\
 & \cup \{(x,z'_{xy}),(z'_{xy},x),(z''_{xy},y),(y,z''_{xy}) \mid (x,y) \in E^-_{UV}\}.
\end{aligned}
$
\end{itemize}
Pattern $P'$ is called a \emph{subdivision} of $P$ if it can be obtained from $P$ 
by some (possibly empty) sequence of subdivision operations.
\end{definition}
\begin{example}
\label{ex:subdivision}
The pattern shown in Figure~\ref{fig:sp}(d) can be obtained by performing 
a single subdivision operation on the pattern shown in Figure~\ref{fig:sp}(c).
\end{example}

We remark that positive and negative edges are treated differently in
Definition~\ref{def:subdivision}: a single extra point, $z_{xy}$, is added for each 
positive edge $(x,y)$, and \emph{two} extra points, $z'_{xy}$ and $z''_{xy}$, 
are added for each negative edge (see Example~\ref{ex:subdivision}).
This difference reflects a semantic difference between positive and negative edges 
in a CSP instance, which we illustrate as follows.
Suppose that the assignment of value $a$ to variable $u$ and value $b$ to variable $v$ extends to a solution. 
In this case, for any other variable $w$,
the points $\tuple{u,a}$ and $\tuple{v,b}$ must both be compatible with some common point
$\tuple{w,c}$.  On the other hand, the assignment of $a$ to variable $u$ and $b$ to variable $v$ 
may fail to extend to a solution if there are points \tuple{w,c} and \tuple{w,d} where 
\tuple{u,a} is incompatible with \tuple{w,c}, \tuple{v,b} is incompatible with \tuple{w,d} 
and the rest of the instance forces $w$ to take either value $c$ or value $d$.

\section{Forbidding patterns}
\label{sec:forbidding}

In the remainder of this paper we consider classes of binary CSP instances that are 
defined by \emph{forbidding} a specified set of patterns from occurring in certain ways,
which we now define.

\subsection{Occurrences of one pattern in another} 

\begin{definition}
\label{def:patternhomomorphism}
A pattern $P_1 = \tuple{X_1,E_1^{\sim},E_1^{+},E_1^{-}}$ is said to 
have a \emph{homomorphism} to 
a pattern $P_2 = \tuple{X_2,E_2^{\sim},E_2^{+},E_2^{-}}$,
if there is a mapping $h:X_1 \rightarrow X_2$ such that 
\begin{itemize}
\item 
if $(x,y) \in E_1^\sim$ then $(h(x),h(y)) \in E_2^\sim$, and
\item
if $(x,y) \in E_1^{+}$ then $(h(x),h(y)) \in E_2^{+}$, and
\item
if $(x,y) \in E_1^{-}$ then $(h(x),h(y)) \in E_2^{-}$.
\end{itemize}
\end{definition}
A homomorphism $h$ from a pattern $P_1 = \tuple{X_1,E_1^{\sim},E_1^{+},E_1^{-}}$
to a pattern $P_2 = \tuple{X_2,E_2^{\sim},E_2^{+},E_2^{-}}$ will be said to
\emph{preserve parts} if it satisfies the additional
property that for all $(x,y) \in X_1^2$, if  
$(x,y) \not\in E_1^\sim$, then $(h(x),h(y)) \not\in E_2^\sim$.
\begin{definition}
\label{def:sub-pattern}
A pattern $P_1$ is said to \emph{occur as a sub-pattern} in  
a pattern $P_2$, 
denoted $\SP{P_1}{P_2}$,
if there is a homomorphism from $P_1$ to $P_2$ that preserves parts.
\end{definition}

Earlier papers~\cite{Cohen12:pivot,Cooper15:dam} 
have defined the notions of pattern and the notion of occurring as a sub-pattern 
in slightly different ways, but these are all essentially equivalent to 
Definition~\ref{def:sub-pattern}.

\begin{example}
The pattern shown in Figure~\ref{fig:sp}(d) 
has a homomorphism to the pattern shown in Figure~\ref{fig:sp}(c),
but does not occur as a sub-pattern in this pattern.
The pattern shown in Figure~\ref{fig:sp}(d) 
does occur as a sub-pattern in the pattern shown in Figure~\ref{fig:sp}(b).
\end{example}

Now we introduce a new form of occurrence that will be our focus in this paper, 
and will allow us to define a wider range of restricted subproblems of the CSP.
\begin{definition}
\label{def:top-minor}
A pattern $P_1$ is said to 
\emph{occur as a topological minor}
in a pattern $P_2$,
denoted $P_1 \stackrel{TM}{\rightarrow} P_2$,
if some subdivision of $P_1$ occurs as a sub-pattern in $P_2$.
\end{definition}

\begin{example}
The pattern shown in Figure~\ref{fig:sp}(c) occurs as a topological minor in
the pattern shown in Figure~\ref{fig:sp}(d) and
in the pattern shown in Figure~\ref{fig:sp}(b).
\end{example}

\begin{lemma}
\label{lem:properties}
For any patterns $P, P'$ and $P''$ the following properties hold:
\begin{enumerate}[label=(\alph*)]
\item \label{lem:propreflexive} 
$\SP{P}{P}$ and $\TM{P}{P}$;
\item \label{lem:propSPimpliesTM} 
If $\SP{P}{P'}$, then $\TM{P}{P'}$;
\item \label{lem:propSPtrans} 
If $\SP{P}{P'}$ and $\SP{P'}{P''}$, then $\SP{P}{P''}$;
\item \label{lem:propTMtrans} 
If $\TM{P}{P'}$ and $\TM{P'}{P''}$, then $\TM{P}{P''}$.
\end{enumerate}
\end{lemma}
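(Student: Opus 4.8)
The plan is to prove each of the four properties in turn, treating them as routine consequences of the definitions of sub-pattern (Definition~\ref{def:sub-pattern}), subdivision (Definition~\ref{def:subdivision}) and topological minor (Definition~\ref{def:top-minor}). For \ref{lem:propreflexive}, I would observe that the identity map $\mathrm{id}:X\to X$ is a part-preserving homomorphism from $P$ to itself: it trivially respects $E^\sim$, $E^+$, $E^-$ and, since $(x,y)\notin E^\sim$ implies $(\mathrm{id}(x),\mathrm{id}(y))=(x,y)\notin E^\sim$, it preserves parts; hence $\SP{P}{P}$. Then $\TM{P}{P}$ follows because the empty sequence of subdivisions yields $P$ itself, which occurs as a sub-pattern in $P$. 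For \ref{lem:propSPimpliesTM}, the same remark applies: if $\SP{P}{P'}$ then the (empty-subdivision) pattern $P$ is a subdivision of $P$ occurring as a sub-pattern in $P'$, so by Definition~\ref{def:top-minor} $\TM{P}{P'}$.

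For \ref{lem:propSPtrans}, I would compose homomorphisms. Given part-preserving homomorphisms $h:X\to X'$ witnessing $\SP{P}{P'}$ and $g:X'\to X''$ witnessing $\SP{P'}{P''}$, I claim $g\circ h$ witnesses $\SP{P}{P''}$. The three edge-type conditions of Definition~\ref{def:patternhomomorphism} are preserved under composition because each $E^{\bullet}$-edge is mapped into an $E^{\bullet}$-edge of the same type at each stage. For part-preservation: suppose $(x,y)\notin E^\sim$; since $h$ preserves parts, $(h(x),h(y))\notin E'^\sim$; since $g$ preserves parts, $(g(h(x)),g(h(y)))\notin E''^\sim$, as required.

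The main obstacle — and the only step needing genuine care — is \ref{lem:propTMtrans}, because composing topological-minor occurrences requires pushing a subdivision of $P$ through a subdivision of $P'$. The key lemma I would isolate and prove first is: \emph{if $\SP{Q}{P'}$ via a part-preserving homomorphism $h$, and $P'_d$ is a subdivision of $P'$, then there is a subdivision $Q_d$ of $Q$ with $\SP{Q_d}{P'_d}$.} To see this, suppose $P'_d$ is obtained from $P'$ by subdividing at the pair of parts $(U,V)$. Each positive (resp.\ negative) edge of $Q$ that $h$ maps into $E^+_{UV}$ (resp.\ $E^-_{UV}$) should be subdivided in $Q$ in exactly the same way (one new point $z_{xy}$ for a positive edge, two new points $z'_{xy},z''_{xy}$ for a negative edge), producing $Q_d$; the homomorphism is then extended on the new points of $Q_d$ by sending them to the corresponding new points of $P'_d$ (matching $z\mapsto z$, $z'\mapsto z'$, $z''\mapsto z''$). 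One checks directly from the formulas in Definition~\ref{def:subdivision} that this extension respects all three edge relations and still preserves parts, since all new points lie in the newly created part $Z_{UV}$ (resp.\ its image). Iterating over the finitely many subdivision steps producing $P'_d$ from $P'$ gives the lemma in general, by composing with \ref{lem:propSPtrans}. Now for \ref{lem:propTMtrans} proper: $\TM{P}{P'}$ gives a subdivision $P_d$ of $P$ with $\SP{P_d}{P'}$, and $\TM{P'}{P''}$ gives a subdivision $P'_d$ of $P'$ with $\SP{P'_d}{P''}$. Applying the key lemma with $Q=P_d$ yields a subdivision $(P_d)_{d'}$ of $P_d$ with $\SP{(P_d)_{d'}}{P'_d}$; since a subdivision of a subdivision of $P$ is itself a subdivision of $P$, and by \ref{lem:propSPtrans} $\SP{(P_d)_{d'}}{P''}$, we conclude $\TM{P}{P''}$, completing the proof.
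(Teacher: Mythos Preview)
Your proposal is correct and follows essentially the same approach as the paper. Parts \ref{lem:propreflexive}--\ref{lem:propSPtrans} are handled identically (identity map, empty subdivision, composition), and for \ref{lem:propTMtrans} the paper isolates exactly the same key observation you do---that a part-preserving homomorphism $h:Q\to P'$ can be ``lifted'' through a single subdivision step of $P'$ by performing the corresponding subdivision on $Q$ and extending $h$ to the new points---then iterates and composes; your write-up simply spells out the extension of $h$ more explicitly than the paper does. (Note that because $h$ preserves parts, at most one part of $Q$ maps to $U$ and at most one to $V$, so the ``subdivide those edges'' step really is a single subdivision operation on $Q$, which justifies your claim that all new points of $Q_d$ lie in one newly created part.)
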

\begin{proof}
Part~\ref{lem:propreflexive} 
is obtained by taking the identity function as a homomorphism,
and an empty sequence of subdivisions.
Part~\ref{lem:propSPimpliesTM} 
is obtained by taking an empty sequence of subdivisions.
Part~\ref{lem:propSPtrans} 
is obtained by composing the two homomorphisms.

Part~\ref{lem:propTMtrans} 
follows from the following observation: 
assume that $h$ is a homomorphism from $P_1$ to $P_2$ that preserves parts, 
and that $P_3$ is the pattern obtained by performing a subdivision operation on $P_2$ 
at parts $U$ and $V$. 
Now consider the pattern $Q$ obtained by performing a 
subdivision operation on $P_1$ at the parts that are mapped by $h$ to $U$ and $V$.
By our definition of subdivision, it follows that $h$ can be extended 
to a homomorphism $h'$ from $Q$ to $P_3$ that preserves parts.

Hence in any sequence of subdivision operations and homomorphisms that preserve parts
we can re-order the operations to perform all subdivisions at the start, and then compose
all the homomorphisms.  
\end{proof}

Recall that establishing arc-consistency in an instance $I$ involves removing domain values from $I$ and yields the (unique) instance AC($I$), hence it 
cannot introduce an occurrence of a pattern as a sub-pattern or as a topological minor 
if it did not already occur. This gives the following result.
\begin{lemma}
\label{lem:AC}
For any patterns $P$ and $I$, where $I$ represents an instance, the following properties hold:
\begin{enumerate}
\item[(a)] If $\SP{P}{\PI{\AC{I}}}$, then $\SP{P}{\PI{I}}$;
\item[(b)] If $\TM{P}{\PI{\AC{I}}}$, then $\TM{P}{\PI{I}}$.
\end{enumerate}
\end{lemma}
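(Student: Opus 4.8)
The key observation is that establishing arc-consistency only \emph{removes} domain values: the value set of $\AC{I}$ is contained in that of $I$, and for every surviving pair of values the constraint relation is unchanged. Translating to patterns, this means that $\PI{\AC{I}}$ is obtained from $\PI{I}$ by deleting some points (those $x_{v,a}$ whose value $a$ was pruned from $D(v)$) together with all edges incident to them, while keeping the part structure and every positive/negative edge between surviving points exactly as it was. Formally, the inclusion map $\iota:X_{\AC{I}}\hookrightarrow X_I$ is a part-preserving homomorphism of patterns, and moreover it is an \emph{induced} embedding: for surviving points $x,y$ in distinct parts, $(x,y)$ is a positive (resp.\ negative) edge of $\PI{\AC{I}}$ if and only if it is a positive (resp.\ negative) edge of $\PI{I}$. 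I would state this as the single lemma doing all the work.

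\textbf{Part (a).} Suppose $\SP{P}{\PI{\AC{I}}}$, so there is a part-preserving homomorphism $g$ from $P$ into $\PI{\AC{I}}$. Compose with the inclusion $\iota$ above to get $\iota\circ g:P\to\PI{I}$. Since a composition of part-preserving homomorphisms is a part-preserving homomorphism (the homomorphism conditions on $E^\sim,E^+,E^-$ are closed under composition, as in the proof of Lemma~\ref{lem:properties}\ref{lem:propSPtrans}, and part-preservation is preserved because $\iota$ reflects non-$E^\sim$ pairs into non-$E^\sim$ pairs), we conclude $\SP{P}{\PI{I}}$.

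\textbf{Part (b).} Suppose $\TM{P}{\PI{\AC{I}}}$. By Definition~\ref{def:top-minor} there is a subdivision $P'$ of $P$ with $\SP{P'}{\PI{\AC{I}}}$. Applying Part (a) with $P'$ in place of $P$ (the argument there works for any pattern) gives $\SP{P'}{\PI{I}}$, and hence $\TM{P}{\PI{I}}$. Alternatively one can invoke Lemma~\ref{lem:properties}\ref{lem:propTMtrans} together with Lemma~\ref{lem:properties}\ref{lem:propSPimpliesTM} applied to $\iota$.

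\textbf{Main obstacle.} There is no deep difficulty here; the only thing that needs care is verifying that $\iota$ really is a part-preserving homomorphism \emph{and} that it reflects edges, which rests on the precise fact that arc-consistency enforcement never alters a constraint relation but only discards values. I would make sure the opening paragraph records this fact explicitly (citing \cite{Cooper1989}) so that both parts follow by the one-line composition argument, exactly mirroring the structure of the proof of Lemma~\ref{lem:properties}.
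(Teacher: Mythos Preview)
Your proposal is correct and follows exactly the paper's approach: the paper does not give a formal proof at all, but simply precedes the lemma with the one-sentence remark that establishing arc-consistency only removes domain values and hence cannot introduce an occurrence of any pattern. Your write-up is a careful formalisation of that remark via the inclusion map $\iota$ and composition (using Lemma~\ref{lem:properties}); the only superfluous ingredient is the claim that $\iota$ \emph{reflects} positive/negative edges, which you never actually need---the forward homomorphism property together with part-preservation suffices for both parts.
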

Establishing arc-consistency can be done in polynomial time, 
so for many of our results we will only need to consider arc-consistent CSP instances.

\subsection{Restricted classes of instances}

We can use Definition~\ref{def:sub-pattern} to define restricted classes of 
binary CSP instances by forbidding the occurrence of certain patterns 
as sub-patterns in those instances.
\begin{definition}
\label{def:CSPSP}
Let $\cal S$ be a set of patterns.

We denote by \ForbSP{\cal S} the set
of all binary CSP instances $I$ such that for all $P \in {\cal S}$ 
it is not the case that $\SP{P}{\PI{I}}$.
\end{definition}

\begin{definition}
We will say that a pattern $P$ is \emph{sub-pattern tractable} if 
\ForbSP{\{P\}} is tractable;
we will say that a pattern $P$ is \emph{sub-pattern $\NP$-complete} 
if \ForbSP{\{P\}} is $\NP$-complete.
\end{definition}

For simplicity, we write \ForbSP{P} for \ForbSP{\{P\}}.

The complexity of the class \ForbSP{\cal S} has been determined
for a wide range of patterns~\cite{Cooper10:BTP,Cohen12:pivot,Cooper15:dam}.
In fact, for all \emph{negative} patterns $P$ 
the complexity of \ForbSP{P}
has been completely characterised~\cite{Cohen12:pivot}.
To define this characterisation, we need to introduce the idea of \emph{star patterns}.

A connected graph $G$ is called a \emph{star} if it is acyclic, and has 
exactly one vertex of degree greater than 2. 
The vertex of degree greater than 2 in a star graph will
be called the central vertex. 
A pattern $P$ will be called a \emph{star pattern} 
if it can be obtained from the pattern $\PG{G}$ for some star graph $G$ by merging
zero or more points in the part of $\PG{G}$ corresponding to the central vertex of $G$.

\begin{example}
\label{ex:starpatterns}
Since the empty graph is a star graph, 
the simplest star pattern is the empty pattern, which has no points.
Some other examples of star patterns are shown in Figure~\ref{fig:NPCstars}.
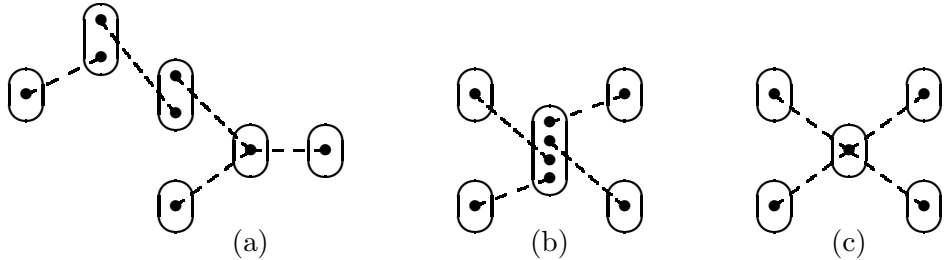
\begin{figure}[ht]
\thicklines \setlength{\unitlength}{0.7pt} 
\centering
\begin{picture}(500,140)(0,0)

\put(0,10){
\begin{picture}(180,130)(-80,0)
\put(0,0){\usebox{\varone}}
\put(0,60){\usebox{\vartwo}}
\put(-40,90){\usebox{\vartwo}}
\put(-80,60){\usebox{\varone}}
\dashline[50]{7}(-30,120)(10,70)
\dashline[50]{7}(-70,80)(-30,100)
\put(40,30){\usebox{\varone}} \put(80,30){\usebox{\varone}}
\dashline[50]{7}(10,20)(50,50)  \dashline[50]{7}(10,90)(50,50)
\dashline[50]{7}(50,50)(90,50) \put(50,0){\makebox(0,0){(a)}} \end{picture}}

\put(240,10){
\begin{picture}(100,90)(0,0)
\put(0,0){\usebox{\varone}} \put(0,60){\usebox{\varone}} \put(80,0){\usebox{\varone}} \put(80,60){\usebox{\varone}} \put(40,25){\usebox{\varfour}} \dashline[50]{7}(10,20)(50,35)
\dashline[50]{7}(10,80)(50,45) \dashline[50]{7}(50,55)(90,20)
\dashline[50]{7}(50,65)(90,80) \put(50,0){\makebox(0,0){(b)}} \end{picture}}

\put(400,10){
\begin{picture}(100,90)(0,0)
\put(0,0){\usebox{\varone}} \put(0,60){\usebox{\varone}} \put(80,0){\usebox{\varone}} \put(80,60){\usebox{\varone}} \put(40,30){\usebox{\varone}} \dashline[50]{7}(10,20)(50,50)
\dashline[50]{7}(10,80)(50,50) \dashline[50]{7}(50,50)(90,20)
\dashline[50]{7}(50,50)(90,80) \put(50,0){\makebox(0,0){(c)}} \end{picture}}

\end{picture}
\caption{Examples of star patterns.}
\label{fig:NPCstars}
\end{figure}
\end{example}

\begin{definition}[\cite{Cohen12:pivot}]
\label{def:pivotk}
For any $k \geq 1$, the star pattern with 3 branches, each of length $k$,
where exactly two points are merged in the central part, 
as shown in Figure~\ref{fig:pivot}, is called Pivot($k$).
\end{definition}

\begin{figure}[ht]
\bigskip
\bigskip
\thicklines \setlength{\unitlength}{0.7pt}
\centering
\begin{picture}(380,125) 
\put(0,80){\usebox{\varii}}
\put(80,80){\usebox{\varii}} \put(130,80){\usebox{\varii}}
\put(180,60){\usebox{\varii}} \put(230,80){\usebox{\varii}}
\put(280,80){\usebox{\varii}} \put(360,80){\usebox{\varii}}
\put(0,20){\usebox{\varii}} \put(80,20){\usebox{\varii}}
\put(130,20){\usebox{\varii}} 
\dashline[50]{7}(10,110)(35,100)
\dashline[50]{7}(65,100)(90,90) \dashline[50]{7}(90,110)(140,90)
\dashline[50]{7}(140,110)(190,90) \dashline[50]{7}(190,90)(240,110)
\dashline[50]{7}(240,90)(290,110) \dashline[50]{7}(290,90)(315,100)
\dashline[50]{7}(345,370,110) \dashline[50]{7}(10,30)(35,40)
\dashline[50]{7}(65,40)(90,50) \dashline[50]{7}(90,30)(140,50)
\dashline[50]{7}(140,30)(190,70) \dashline[50]{7}(345,100)(370,110)
\put(50,100){\makebox(0,0){.\;.\;.}} \put(50,40){\makebox(0,0){.\;.\;.}}
\put(330,100){\makebox(0,0){.\;.\;.}}
\put(0,0){\makebox(200,280){$\overbrace{\makebox(180,0){}}^{\mbox{$k$ edges}}$}}
\put(0,0){\makebox(200,0){$\underbrace{\makebox(180,0){}}_{\mbox{$k$ edges}}$}}
\put(0,0){\makebox(565,80){$\underbrace{\makebox(180,0){}}_{\mbox{$k$ edges}}$}}
\end{picture}
\vspace{0.3cm}
\caption{The pattern Pivot($k$). 
} 
\label{fig:pivot}
\setlength{\unitlength}{1pt}
\end{figure}
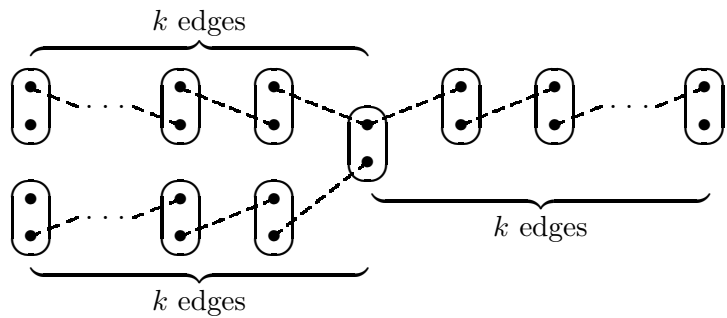

\begin{theorem}[\cite{Cohen12:pivot}]
\label{thm:pivotSPtractable}
For any $k \geq 1$, the negative pattern Pivot(k) shown in Figure~\ref{fig:pivot}
is sub-pattern tractable, as are all negative patterns $P$ such that
$\SP{P}{Pivot(k)}$; all other negative patterns are sub-pattern $\NP$-complete.
\end{theorem}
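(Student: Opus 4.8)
## Proof Proposal

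The plan is to reduce the question of tractability of $\ForbSP{P}$ for a negative pattern $P$ to understanding how negative patterns interact with arc-consistency, and to exploit the fact that $\SP{P}{\text{Pivot}(k)}$ has a purely combinatorial characterisation in terms of star patterns. There are two directions to establish: tractability when $\SP{P}{\text{Pivot}(k)}$ for some $k$, and $\NP$-completeness otherwise.

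For the tractable direction, I would first observe by Lemma~\ref{lem:properties}\ref{lem:propSPtrans} that it suffices to handle $P = \text{Pivot}(k)$ itself, since if $\SP{P}{\text{Pivot}(k)}$ then $\ForbSP{\text{Pivot}(k)} \subseteq \ForbSP{P}$, so tractability transfers downward along $\SP{}{}$. The key algorithmic idea is that forbidding $\text{Pivot}(k)$ as a sub-pattern forces a strong structural restriction on how negative edges propagate through the microstructure complement of an arc-consistent instance. By Lemma~\ref{lem:AC}(a), we may assume $I = \AC{I}$. The claim to prove is that in an arc-consistent instance whose pattern $\PI{I}$ does not contain $\text{Pivot}(k)$ as a sub-pattern, any value that survives arc-consistency in fact extends to a solution — or, more carefully, that a greedy/incremental assignment strategy cannot get stuck: the pattern $\text{Pivot}(k)$ is precisely the obstruction that would be created if, after committing to a partial assignment along a ``path'' of variables of length $k$, two already-assigned values both conflict with the only surviving options at a new variable. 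The absence of $\text{Pivot}(k)$ for every $k$ up to the number of variables means no such conflict chain can arise, so the instance is solvable in polynomial time. I would formalise this by showing that establishing arc-consistency, possibly combined with a bounded-width propagation (path-consistency or a singleton-arc-consistency style step), already decides the instance, and that a failure of this procedure would exhibit $\text{Pivot}(k)$ as a sub-pattern for a suitable $k$.

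For the hardness direction, I would take an arbitrary negative pattern $P$ with $P \not\stackrel{SP}{\rightarrow} \text{Pivot}(k)$ for all $k$ and show $\ForbSP{P}$ is $\NP$-complete. Membership in $\NP$ is immediate (guess an assignment, check constraints). For hardness, the strategy is to reduce from a known $\NP$-complete problem — 3-colouring or general CSP — by encoding instances so that the only negative patterns that could possibly occur are ``harmless'', i.e., occur as sub-patterns of $\text{Pivot}(k)$, while $P$ does not. Concretely: since $P$ is a negative pattern not below any $\text{Pivot}(k)$, it must fail to be a sub-pattern of a star pattern with three branches — by the classification one shows such a $P$ must contain, as a sub-pattern, either a cycle-derived pattern like $\PG{C_\ell}$, or a ``fork'' with more than three branches, or two separate vertices of high degree, or a long ``two-sided'' configuration. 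In each case one designs a gadget-based reduction whose natural pattern representation has bounded negative-edge structure avoiding $P$. The cleanest route is probably to observe that instances with domain size at most $2$, or instances whose microstructure complement has a very restricted shape (e.g., coming from graph colouring where negative edges form disjoint cliques of bounded structure), only ever realise negative sub-patterns lying inside some $\text{Pivot}(k)$, and then verify $P$ is not among these.

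The main obstacle I expect is the hardness direction: one must argue uniformly over \emph{all} negative patterns $P$ not dominated by any $\text{Pivot}(k)$ that there is a single reduction class avoiding $P$ while remaining $\NP$-hard. This requires a careful case analysis of the possible ``shapes'' of such a $P$ (branches, cycles, multiple hubs) and, for each shape, a tailored gadget — effectively an exhaustive structural dichotomy on negative patterns. The tractable direction's main subtlety is pinning down exactly which propagation algorithm suffices and proving that its failure manifests as the pattern $\text{Pivot}(k)$ rather than some larger obstruction; here the definition of $\text{Pivot}(k)$ with its three branches of equal length $k$ and exactly-two-merged central points has evidently been reverse-engineered to match precisely the stuck configuration of such an algorithm, so the proof should follow the algorithm's analysis closely.
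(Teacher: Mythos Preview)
The paper does not prove this theorem: it is stated with the citation \cite{Cohen12:pivot} and used thereafter as a black box (e.g., in Proposition~\ref{prop:star}, Theorems~\ref{thm:KisTMtractable}, \ref{thm:line}, \ref{thm:P2}). There is therefore no ``paper's own proof'' to compare your proposal against; the authors simply import the dichotomy from the cited reference.

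As for the substance of your sketch, two remarks. On the tractable side, your guess that arc-consistency (or a fixed level of local consistency) decides $\ForbSP{\text{Pivot}(k)}$ is not what the cited paper does, and it is not clear that it is true: the algorithm in \cite{Cohen12:pivot} is a bespoke procedure that exploits the existence of a ``pivot'' value at the central variable, not a standard propagation scheme, and the correctness argument does not proceed by exhibiting $\text{Pivot}(k)$ as the obstruction to a stuck propagation. Your intuition that the three-branch shape with two merged central points encodes a ``conflict after committing along a path'' is suggestive but would need to be made precise before it yields an algorithm. On the hardness side, you correctly anticipate that the argument is a structural case analysis on negative patterns not dominated by any $\text{Pivot}(k)$, but the actual proof in \cite{Cohen12:pivot} identifies a small explicit list of minimal obstructions (patterns that cannot embed in any $\text{Pivot}(k)$) and gives a separate $\NP$-hardness reduction for each; your proposal names roughly the right shapes (cycles, too many branches, two hubs) but stops short of the reductions, which are the bulk of the work.
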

\begin{example}
By Theorem~\ref{thm:pivotSPtractable} all the negative patterns shown in Figures~\ref{fig:acyclic}, \ref{fig:K15} and \ref{fig:NPCstars} are sub-pattern NP-complete.
\end{example}
To go beyond the earlier results for forbidden 
sub-patterns~\cite{Cohen12:pivot,ccez15:jcss,Cooper15:dam,EscamocherThesis},
and define a wider range of restricted classes,
we use Definition~\ref{def:top-minor} 
to define restricted classes of binary CSP instances by forbidding 
the occurrence of certain patterns as topological minors in those instances.
\begin{definition}
\label{def:CSPTM}
Let $\cal S$ be a set of patterns.

We denote by \ForbTM{\cal S} the set
of all binary CSP instances $I$ such that for all $P \in {\cal S}$ 
it is not the case that $\TM{P}{\PI{I}}$.
\end{definition}

\begin{definition}
We will say that a pattern $P$ is \emph{topological-minor tractable} if 
\ForbTM{\{P\}} is tractable;
we will say that a pattern $P$ is \emph{topological-minor $\NP$-complete} if 
\ForbTM{\{P\}} is $\NP$-complete.
\end{definition}

For simplicity, we write \ForbTM{P} for \ForbTM{\{P\}}.

By Lemma~\ref{lem:properties}\,\ref{lem:propSPimpliesTM}, if $P$ occurs as a
sub-pattern of some pattern $Q$, then it also occurs as a topological minor of
$Q$.  Hence for any pattern $P$ we have that $\ForbTM{P} \subseteq \ForbSP{P}$.
The following is an immediate consequence.
\begin{lemma} \label{lem:sp} 
If a pattern $P$ is sub-pattern tractable then $P$
is also topological-minor tractable. 
\end{lemma}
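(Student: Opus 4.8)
The plan is to derive this immediately from the already-established set inclusion $\ForbTM{P} \subseteq \ForbSP{P}$, which is stated in the paragraph preceding the lemma as a consequence of Lemma~\ref{lem:properties}\,\ref{lem:propSPimpliesTM}. So the proof is genuinely short: it is essentially the observation that a polynomial-time algorithm for a class $\cal C$ of instances also works, unchanged, on any subclass $\cal C' \subseteq \cal C$.

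First I would recall the hypothesis: $P$ is sub-pattern tractable means, by definition, that $\ForbSP{P}$ is tractable, i.e.\ there is a polynomial-time algorithm that decides whether any instance $I \in \ForbSP{P}$ has a solution. Next I would invoke the inclusion $\ForbTM{P} \subseteq \ForbSP{P}$: every instance $I$ that forbids $P$ as a topological minor also forbids $P$ as a sub-pattern (contrapositive of Lemma~\ref{lem:properties}\,\ref{lem:propSPimpliesTM}, since if $\SP{P}{\PI{I}}$ then $\TM{P}{\PI{I}}$). Therefore the same polynomial-time algorithm, run on instances of $\ForbTM{P}$, correctly decides solvability; formally, restricting a correct decision procedure to a subset of its valid inputs yields a correct decision procedure on that subset, with the same running time bound. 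Hence $\ForbTM{P}$ is tractable, i.e.\ $P$ is topological-minor tractable.

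There is essentially no obstacle here — the only thing to be careful about is that "tractable" is defined in the paper as the existence of a polynomial-time decision algorithm for the class, so one must note that an algorithm designed for the larger class $\ForbSP{P}$ is still polynomial-time when its inputs happen to come from the smaller class $\ForbTM{P}$ (its correctness is inherited, and polynomial running time in the input size is preserved a fortiori). No new combinatorial argument about patterns, subdivisions, or homomorphisms is needed beyond the inclusion already derived. One could phrase the whole proof in a single sentence, but I would spell out the two steps — invoking the definition of tractability and then the inclusion — for clarity.
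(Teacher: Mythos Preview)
Your proposal is correct and mirrors the paper's own treatment exactly: the paper states the inclusion $\ForbTM{P} \subseteq \ForbSP{P}$ (via Lemma~\ref{lem:properties}\,\ref{lem:propSPimpliesTM}) and then declares Lemma~\ref{lem:sp} an immediate consequence without further argument. Your spelled-out two-step version (definition of tractability, then the inclusion) is precisely the intended reasoning.
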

\begin{example} \label{ex:1a1bSPandTMtractable}
By the results of earlier work, 
the two patterns shown in Figure~\ref{fig:sp}(a) and~\ref{fig:sp}(b) are known to be sub-pattern tractable: 
the tractability of the pattern shown in Figure~\ref{fig:sp}(a) follows from the tractability of a more
general pattern (called JWP) defined in~\cite{cz11:ai}; the tractability of the pattern 
shown in Figure~\ref{fig:sp}(b) follows from~\cite[Lemma~46]{EscamocherThesis} (where it corresponds to pattern $U'_{30}$).

Hence both patterns are also topological-minor tractable, by Lemma~\ref{lem:sp}.
\end{example}

By Lemma~\ref{lem:properties}\,\ref{lem:propTMtrans}, 
if $P$ occurs as a topological minor in $Q$ then 
$\ForbTM{P} \subseteq \ForbTM{Q}$. 
The following is an immediate consequence.
\begin{lemma} 
\label{lem:trans}
If pattern $P \stackrel{TM}{\rightarrow} Q$, and $Q$
is topological-minor tractable, then $P$ is also topological-minor tractable.
\end{lemma}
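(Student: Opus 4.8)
The plan is to derive Lemma~\ref{lem:trans} directly from the containment $\ForbTM{P} \subseteq \ForbTM{Q}$ together with the definition of topological-minor tractability. First I would make the inclusion $\ForbTM{P} \subseteq \ForbTM{Q}$ precise: suppose $I \in \ForbTM{P}$, so $\TM{P}{\PI{I}}$ does not hold. If we had $\TM{Q}{\PI{I}}$, then since $\TM{P}{Q}$ by hypothesis, Lemma~\ref{lem:properties}\,\ref{lem:propTMtrans} (transitivity of $\stackrel{TM}{\rightarrow}$) would give $\TM{P}{\PI{I}}$, a contradiction. Hence $\TM{Q}{\PI{I}}$ fails, i.e.\ $I \in \ForbTM{Q}$. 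This is exactly the "immediate consequence" already asserted in the text just before the lemma, so the only remaining content is translating the set inclusion into a statement about complexity.

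For that step I would argue as follows. Assume $Q$ is topological-minor tractable, meaning there is a polynomial-time algorithm $\mathcal{A}$ deciding solvability for every instance in $\ForbTM{Q}$. Since $\ForbTM{P} \subseteq \ForbTM{Q}$, the very same algorithm $\mathcal{A}$ correctly decides solvability for every instance in $\ForbTM{P}$ in polynomial time; therefore $\ForbTM{P}$ is tractable, i.e.\ $P$ is topological-minor tractable. One subtlety worth a sentence: the notion of "size of an instance" is the same (sum of sizes of constraint relations, as fixed in Section~\ref{sec:csp}), so "polynomial time" is measured consistently across both classes, and no recomputation or conversion is needed — the algorithm simply runs unchanged on the (smaller) class.

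I do not anticipate a genuine obstacle here; the lemma is a packaging of Lemma~\ref{lem:properties}\,\ref{lem:propTMtrans} plus the trivial observation that a subset of a tractable class is tractable. If anything, the one point to be careful about is purely expository: one should not claim the stronger converse (that $P$ topological-minor tractable implies $Q$ is), since the inclusion only goes one way, and one should note that tractability of a class is inherited by subclasses but hardness is not — so the analogous statement for $\NP$-completeness would require the reverse inclusion and does not follow from this argument. The proof itself is two lines: invoke the displayed inclusion $\ForbTM{P} \subseteq \ForbTM{Q}$ and observe that any polynomial-time decision procedure for the larger class restricts to one for the smaller.
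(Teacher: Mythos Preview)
Your proposal is correct and matches the paper's approach exactly: the paper states the inclusion $\ForbTM{P} \subseteq \ForbTM{Q}$ (derived from Lemma~\ref{lem:properties}\,\ref{lem:propTMtrans}) just before the lemma and declares the lemma an immediate consequence, which is precisely the argument you spell out.
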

\begin{example}
\label{ex:figSPisTMtractable}
We can deduce from Lemma~\ref{lem:trans} that
Figure~\ref{fig:sp}(d) is topological-minor tractable, since
Figure~\ref{fig:sp}(d) occurs as a sub-pattern (and hence also as a topological minor) 
in Figure~\ref{fig:sp}(b),
and it was shown in Example~\ref{ex:1a1bSPandTMtractable} 
that Figure~\ref{fig:sp}(b) is topological-minor tractable.
\end{example}

The converse of Lemma~\ref{lem:sp} does not hold: there exist patterns that are topological-minor
tractable but sub-pattern NP-complete, as the following example demonstrates.
More significant examples will be discussed in Section~\ref{sec:scheme}.
\begin{example}
Figure~\ref{fig:sp}(c) is sub-pattern $\NP$-complete, since it 
cannot occur as a sub-pattern of any instance, so for this pattern P,
\ForbSP{P} contains all possible CSP instances.
However, by Lemma~\ref{lem:trans}, Figure~\ref{fig:sp}(c) is topological-minor tractable, 
since it occurs as a topological minor in Figure~\ref{fig:sp}(d),
and it was shown in Example~\ref{ex:figSPisTMtractable} 
that Figure~\ref{fig:sp}(d) is topological-minor tractable.
\end{example}

For some patterns $P$, the sets \ForbSP{P} and \ForbTM{P} are identical, 
as our next result shows.
A pattern $P$ will be called 
\emph{star-like} if removing the positive edges from $P$
gives a negative pattern $P'$ such that $\SP{P'}{P''}$ for some
star pattern $P''$.

\begin{example}
\label{ex:starlike}
All of the patterns shown in Figures~\ref{fig:sp}, \ref{fig:K15} and~\ref{fig:NPCstars} are star-like, 
but the pattern shown in Figure~\ref{fig:acyclic} is not star-like.
\end{example}

\begin{proposition}
\label{prop:star}
If $P$ is a star-like negative pattern, then $\ForbTM{P} = \ForbSP{P}$.
\end{proposition}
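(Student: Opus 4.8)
The plan is to prove the non-trivial inclusion $\ForbSP{P} \subseteq \ForbTM{P}$; the reverse inclusion is immediate from Lemma~\ref{lem:properties}\,\ref{lem:propSPimpliesTM}, as already observed. Equivalently, taking contrapositives, I want to show that if $P$ occurs as a topological minor in $\PI{I}$ then $P$ already occurs as a sub-pattern in $\PI{I}$. So suppose $\TM{P}{\PI{I}}$, meaning some subdivision $P'$ of $P$ satisfies $\SP{P'}{\PI{I}}$ via a part-preserving homomorphism $h$. The goal is to "undo" the subdivisions — to recover a part-preserving homomorphism from $P$ itself into $\PI{I}$ — using the hypothesis that $P$ is a star-like negative pattern.

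The key observation I would exploit is that $P$ is negative, so $P'$ is also negative, and the subdivision of a negative edge $(x,y)$ replaces it by the configuration $x - z'_{xy}$ and $y - z''_{xy}$, i.e. it creates two \emph{independent} pendant negative edges rather than a path — the two new points $z'_{xy}, z''_{xy}$ are not joined to each other by any edge, only put in a common (new) part. This is the crucial asymmetry between the positive-edge and negative-edge subdivision rules highlighted in the text after Definition~\ref{def:subdivision}. Iterating, a subdivision of a negative pattern $P$ is obtained by, for each negative edge $(x,y)$ of $P$, attaching a pendant negative edge at $x$ and a pendant negative edge at $y$ (each carrying a fresh private part), possibly repeated; so $P'$ is, up to isomorphism, just $P$ with a forest of pendant negative edges hung off its points. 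Now I use star-likeness: since $P$ (being already negative, so $P' = P$ after removing positive edges — there are none) maps as a sub-pattern into some star pattern $P''$, and $P''$ comes from $\PG{G}$ for a star graph $G$ by merging points in the central part. The point is that $\PI{I}$, being a complete pattern, has a positive or negative edge between every pair of points in distinct parts; and the pendant-edge structure that subdivision adds can be \emph{folded back}: given the part-preserving homomorphism $h : P' \to \PI{I}$, I define a map on $P$ by restricting $h$ to the original points $X \subseteq X'$. I must check this restriction is still a homomorphism of $P$ into $\PI{I}$, i.e. that for each negative edge $(x,y)$ of $P$, $(h(x),h(y))$ is a negative edge of $\PI{I}$. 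In $P'$ the edge $(x,y)$ has been \emph{removed}, so $h$ alone does not guarantee this — and this is exactly where the star-like hypothesis must do real work: I will argue that because $P$ embeds (after deleting positive edges — vacuous here) into a star pattern, its negative edges form a structure for which the presence of the pendant witnesses $h(z'_{xy}), h(z''_{xy})$ in $\PI{I}$, together with arc-consistency / completeness of $\PI{I}$, forces $h(x)$ and $h(y)$ to already be joined by a negative edge, or else lets me re-route $h$ on the branches of the star to make it so.

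Concretely, the step I expect to carry out is: organise $P$ around its star structure (a central part $W$ and branches that are paths of negative edges); reduce to showing that a single subdivided branch can be contracted back; and on that branch use that $\PI{I}$ is complete so each "missing" negative edge is either present as a negative edge (done) or present as a positive edge, in which case I push the mismatch one step further along the branch toward a leaf, where it can be absorbed because a leaf point of $P$ has no further constraints to respect. Formally this is an induction on the number of subdivision operations used to produce $P'$ from $P$: in the inductive step, $P'$ is a single subdivision of some $P_1$ with $P \stackrel{TM}{\to} P_1$ (and $P_1$ still star-like negative), and I remove the two pendant points $z', z''$ created by the last subdivision, checking the restricted map is a valid part-preserving homomorphism into $\PI{I}$ by the branch-contraction argument, then apply the induction hypothesis to $P_1$.

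The main obstacle is precisely this branch-contraction / re-routing step: verifying that star-likeness is exactly the right hypothesis to guarantee that deleting the pendant points leaves a genuine homomorphism — that is, showing that in a star pattern every negative edge that was "broken" by subdivision can be restored, using completeness of $\PI{I}$ and the fact that the broken edge sits on a branch whose far end is a free leaf. Handling the merged points in the central part of the star (where several branches meet, so re-routing on one branch must not spoil another) is the delicate case and will require care with the part-preservation condition; I expect this to be where most of the genuine content of the proof lies, and where the hypothesis that $P$ is a \emph{star} (acyclic, one high-degree vertex) rather than an arbitrary negative pattern is essential.
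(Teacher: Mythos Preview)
Your proposal takes a substantially more complicated route than the paper's, and part of that complexity comes from a misframing. The paper's proof never touches $\PI{I}$ or its completeness. It works purely at the level of patterns: the single observation is that for a star pattern $Q$, every subdivision $Q'$ satisfies $\SP{Q}{Q'}$. The reason is exactly the ``shift toward the leaf'' you describe --- subdividing a negative edge on a branch of $Q$ just produces a longer branch of the same form (recall that in $\PG{G}$ consecutive negative edges on a branch already use \emph{distinct} points in the intermediate part, so the two new points $z',z''$ fit this template perfectly), and the shorter branch embeds into the longer one. Composing with $\SP{P}{Q}$ then gives what is needed.

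Two specific points where your write-up goes astray. First, your description of an iterated subdivision as ``$P$ with a forest of pendant negative edges hung off its points'' is wrong: the subdivided negative edges are \emph{removed}, so $P$ is not literally a substructure of $P'$ on the original point set --- which you immediately acknowledge, but the earlier sentence should be deleted. Second, and more importantly, your plan to invoke completeness of $\PI{I}$ and case-split on whether $(h(x),h(y))$ is a positive or negative edge is a red herring. If $(h(x),h(y))$ happens to be negative you are lucky, but the ``push the mismatch toward a leaf'' manoeuvre you propose for the positive case is precisely the construction of a homomorphism $P\to P'$; once you have that, you compose with $h$ and you are done \emph{regardless} of which case you were in. So the case analysis buys nothing, and the whole argument should be carried out inside $P'$ rather than $\PI{I}$. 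Your acknowledged ``main obstacle'' --- handling the merged central part --- largely evaporates once you adopt this cleaner formulation: you only ever shift outward along a single branch, and the other branches (and the centre) are left fixed.
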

\begin{proof}
By Lemma~\ref{lem:properties}\,\ref{lem:propSPimpliesTM}, 
for any pattern $P$ we have that 
$\ForbTM{P} \subseteq \ForbSP{P}$. 

To obtain the reverse inclusion,
let $P$ be a star-like negative pattern, 
and let $Q$ be a star pattern such that $\SP{P}{Q}$.
By the definition of star pattern, for any subdivision $Q'$ of $Q$, 
we have that $\SP{Q}{Q'}$.
Hence, by Lemma~\ref{lem:properties}\,\ref{lem:propSPtrans}
$\SP{P}{Q'}$, so $\ForbSP{P} \subseteq \ForbSP{Q'}$.
But this implies, by Definition~\ref{def:top-minor}, that
$\ForbSP{P} \subseteq \ForbTM{P}$.
\end{proof}
\begin{example}
By Theorem~\ref{thm:pivotSPtractable}, any pattern Pivot($k$) is sub-pattern tractable,
and by Proposition~\ref{prop:star} we know that forbidding Pivot($k$) as a
topological minor defines the same set of instances as forbidding
Pivot($k$) as a sub-pattern. 
Therefore, for any $k \geq 1$, the pattern Pivot($k$) is also topological-minor tractable. 

Similarly,
by Theorem~\ref{thm:pivotSPtractable}, each star pattern $P$ 
shown in Figure~\ref{fig:NPCstars} is sub-pattern $\NP$-complete. 
By Proposition~\ref{prop:star}, for each of these patterns 
$\ForbTM{P} = \ForbSP{P}$.
Consequently, these patterns are also topological-minor $\NP$-complete.
\end{example}

We now give a partial converse of Proposition~\ref{prop:star}, 
by showing that for all patterns $P$ that are not star-like, 
\ForbTM{P} cannot be expressed by forbidding
any finite set of sub-patterns. This means that the notion of forbidding the occurrence of
a pattern as a topological minor provides more expressive power than forbidding
arbitrary (finite) sets of patterns from occurring as sub-patterns.
\begin{proposition}
\label{prop:notstarlike}
If $P$ is a pattern that is not star-like, then 
$\ForbTM{P} \neq\ \ForbSP{\cal S}$
for all finite sets of patterns $\cal S$.
\end{proposition}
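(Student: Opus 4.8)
The plan is to argue by contradiction: suppose $\ForbTM{P}=\ForbSP{\cal S}$ for some finite set of patterns $\cal S$. We may discard from $\cal S$ any pattern that has an edge which is both positive and negative, since such a pattern cannot occur as a sub-pattern of any $\PI{I}$; let $N$ be the largest number of points of a pattern in $\cal S$. Write $P'$ for the negative pattern obtained by deleting all positive edges of $P$, so that ``$P$ not star-like'' says that $P'$ is not a sub-pattern of any star pattern. First I would record a structural reformulation: a negative pattern is a sub-pattern of a star pattern iff it is a sub-pattern of $\mathrm{Star}_{m}$ for $m$ large (the star pattern with a single point in the central part joined to $m$ long branches), and from the shape of $\mathrm{Star}_{m}$ one reads off that $P'$ fails this precisely when the \emph{part graph} of $P'$ (one vertex per part, an edge whenever two parts carry a negative edge) contains a cycle, or $P'$ contains two \emph{heavy} parts, where a part is heavy if it has degree at least $3$ in the part graph or contains a point negatively adjacent to points of two distinct parts. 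I would also note two closure facts: being a sub-pattern of some star pattern is inherited by sub-patterns, and if a subdivision of a negative pattern $Q$ is a sub-pattern of a star pattern then so is $Q$.

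The core of the argument isolates a single cleverly-chosen $R^{*}\in\cal S$. For each $k$ let $\tilde P_{k}$ be the subdivision of $P$ obtained by subdividing every pair of parts $k$ times, and let $I_{k}$ be the instance realising it, i.e.\ with positive edges exactly those pairs not forced negative by $\tilde P_{k}$, so that $\PI{I_{k}}$ is $\tilde P_{k}$ with its positive edges completed; then $\SP{\tilde P_{k}}{\PI{I_{k}}}$, hence $\TM{P}{\PI{I_{k}}}$, hence $I_{k}\notin\ForbTM{P}=\ForbSP{\cal S}$, so some $R\in\cal S$ occurs as a sub-pattern of $\PI{I_{k}}$, and since $\cal S$ is finite a fixed $R^{*}$ works for arbitrarily large $k$. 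Restricting such an occurrence to negative edges (the negative part of $\PI{I_{k}}$ being the $k$-fold subdivision of $P'$, up to finitely many isolated points which play no role below) shows that the negative part $R^{*-}$ of $R^{*}$ occurs as a sub-pattern of arbitrarily large subdivisions of $P'$. Separately, realising $R^{*}$ itself as an instance $J$ gives $\SP{R^{*}}{\PI{J}}$, hence $J\notin\ForbSP{\cal S}=\ForbTM{P}$, hence $\TM{P}{\PI{J}}$, and restricting to negative edges yields $\TM{P'}{R^{*-}}$. By the closure facts and the reformulation above, $R^{*-}$ is therefore not a sub-pattern of any star pattern, and carries — within its at most $N$ points — a cycle in its part graph, or two heavy parts, inherited from $P'$.

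These two properties of $R^{*-}$ are contradictory once the subdivision is large enough. Fix $k$ large with $R^{*-}$ a sub-pattern of $\tilde P_{k}'$ (the $k$-fold subdivision of $P'$), via a part-preserving homomorphism $h$. If $R^{*-}$ has a cycle of length $\ell\le N$ in its part graph, then $h$ carries that cycle to $\ell$ distinct, consecutively negatively-adjacent parts of $\tilde P_{k}'$, i.e.\ to a cycle of length $\ell$ in the part graph of $\tilde P_{k}'$; but subdividing multiplies the girth of the part graph, so for $k$ large this girth exceeds $N\ge\ell$, a contradiction. If instead $R^{*-}$ has heavy parts $A_{1},A_{2}$, then $h(A_{1})\ne h(A_{2})$ are heavy parts of $\tilde P_{k}'$; since subdivision neither creates heavy parts nor changes which old parts are heavy, these lie among the finitely many heavy parts of $P'$, hence pairwise at distance greater than $N$ in $\tilde P_{k}'$ once $k>N$. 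When $A_{1},A_{2}$ lie in a common component of $R^{*-}$ this is absurd, because $h$ maps a path of length at most $N$ between them onto a walk of that length between their images.

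The one case this does not settle — and the main obstacle — is when $A_{1}$ and $A_{2}$ lie in different components of $R^{*-}$, which is exactly the situation where $P'$ is disconnected and its obstruction to being sub-star is spread across several components (for instance when $P'$ is a disjoint union of two spider patterns). This case requires a more delicate argument exploiting that $\cal S$ must do more than merely meet each $\PI{I_{k}}$: since $\ForbSP{\cal S}=\ForbTM{P}$, for \emph{every} subdivision of $P'$ some pattern of $\cal S$ occurs in it, and every $R\in\cal S$ satisfies $\TM{P'}{R^{-}}$; one must combine these constraints on the whole of $\cal S$ to rule out a finite set of patterns simultaneously witnessing, in a bounded region, the coexistence of the several heavy components that $P'$ forces. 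Turning this sketch into a proof, together with verifying in full the structural reformulation of the first paragraph, is where the real work lies.
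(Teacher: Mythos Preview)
Your route is considerably more elaborate than the paper's, and the obstacle you flag in your final paragraph---the case where the two heavy parts of $R^{*-}$ lie in different connected components of $R^{*-}$---is a genuine gap that your sketch does not close. The paper avoids this difficulty entirely: it never isolates a particular $R^{*}\in\mathcal{S}$, never uses the assumed equality in both directions, and never needs the closure facts you spend your first paragraph setting up.

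The paper's argument is a direct witness construction. It observes that since $P$ is not star-like, the negative pattern $P'$ obtained by deleting the positive edges of $P$ must contain either a cycle of parts linked by negative edges, or two \emph{distinguished} parts (a part being distinguished if two of its negative edges meet at a common point, or if it has negative edges to at least three other parts---exactly your ``heavy'' notion). Given any finite $\mathcal{S}$, let $k$ bound the number of parts of its members. One then subdivides $P'$ sufficiently many times to obtain a pattern $P''$ in which the cycle has length exceeding $k$, or the two distinguished parts are separated by a chain of more than $k$ parts; completing $P''$ with positive edges yields an instance $I$. The paper then asserts that $I\notin\ForbTM{P}$ while no pattern in $\mathcal{S}$ can occur as a sub-pattern of $\PI{I}$, so this single instance $I$ separates the two classes.

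The contrast with your approach is that the paper uses nothing about $\mathcal{S}$ beyond the bound $k$ on the number of parts: it simply argues that a $k$-part window on the highly subdivided $P''$ cannot contain the cycle or both distinguished parts. Your machinery---pigeonholing out $R^{*}$, realising it as an instance to deduce $\TM{P'}{R^{*-}}$, and then chasing heavy parts through components---is not needed, and your problematic disconnected case never arises. It should be said that the paper's concluding assertion (that \emph{no} pattern of $\mathcal{S}$ occurs in $\PI{I}$) is itself stated very tersely and is not literally correct for an unconstrained $\mathcal{S}$---a purely positive pattern would occur trivially---so a fully rigorous write-up still requires care at that step; but the intended line is the direct one-instance construction just described, not the contradiction-based extraction you pursue.
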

\begin{proof}
Let $P$ be a pattern that is not star-like, and let $P'$ be the negative pattern obtained
by removing all positive edges of $P$. 
Note that $\SP{P'}{P}$.

In any pattern, say that a part $U$ is \emph{distinguished} if two
negative edges share a single point in $U$ 
or if there are negative edges from $U$ to more than two other parts.

Since $P$ is not star-like, the negative pattern $P'$ must contains a
cycle of parts connected by negative edges,
or at least two distinguished parts. 

Hence, for any fixed $k$, by a sufficiently long 
sequence of subdivision operations, we can construct a subdivision $P''$ of $P'$ 
which either has a cycle of parts of length greater than $k$ or two
distinguished parts separated by a sequence of connected parts of length greater than $k$.
By adding positive edges, we can then convert $P''$ into a complete pattern 
of the form $\PI{I}$ for some CSP instance $I$.

Now for any fixed finite set of patterns $\cal S$ there will be a bound $k$ on  
the number of parts of any pattern in $\cal S$.
It follows that \ForbTM{P} cannot be defined by forbidding the sub-patterns in $\cal S$, 
since $I \notin \ForbTM{P}$ but no pattern in $\cal S$
can occur as a sub-pattern in $\PI{I}$.
\end{proof}

\section{Structural restrictions}
\label{sec:structural}

For any CSP instance $I=\tuple{V,D,C}$, the \emph{constraint graph} of $I$
is defined to be the graph $\tuple{V,E}$, where $E$ is the set of pairs $\{x,y\}$ 
for which the associated constraint $R_{xy}$ is non-trivial.
A number of tractable subproblems of the CSP have been defined by specifying
restrictions on the constraint graph; such restricted classes of instances are known as
\emph{structural classes}~\cite{Grohe07:otherside,DBLP:journals/jacm/Marx13}.

It is known that a structural class of binary CSP instances defined in this way is tractable if and only if 
every instance has a constraint graph of bounded treewidth~\cite[Theorem~5.1]{Grohe07:otherside}
(subject to the standard complexity-theoretic assumption that $\FPT \neq \WW$,
which we will assume throughout this section; we refer the reader to the
textbooks~\cite{Downey99:parametrized,Flum06:parametrized} for more details).
We show in this section that structural classes of this kind cannot be defined by forbidding 
the occurrence of a finite set of sub-patterns. 
However, they \emph{can} be defined by forbidding the occurrence of one or more
patterns as topological minors.

We will also use this characterisation of tractable structural classes
to show that a large class of negative patterns are topological minor tractable.

First we extend the notion of a constraint graph to arbitrary patterns.
\begin{definition}
\label{def:constraintgraphpattern}
For any pattern $P$, the \emph{constraint graph} of $P$, denoted $\CG{P}$,
is defined to be the graph $\tuple{V,E}$, where 
$V$ is the set of all parts of $P$, and $E$ is the set of pairs of parts $\{U,W\}$
such that there is a negative edge $(x,y) \in P$ with $x \in U$ and $y \in W$.
\end{definition}
For any binary CSP instance $I$, the constraint graph of $I$ defined above
is equal to $\CG{\PI{I}}$. For simplicity, this graph will usually be denoted by $G_I$.

Now we note the close link between 
our notion of a pattern occurring as a topological minor of another pattern
and the standard notion of a topological minor in a graph~\cite{Diestel10:graph}.
\begin{lemma}
\label{lem:graphtopminor}
For any graph $G$ and any pattern $P$,
$\TM{\PG{G}}{P}$ if and only if 
$G$ is a topological minor of the graph $\CG{P}$.
\end{lemma}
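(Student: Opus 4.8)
The plan is to prove both directions by relating the subdivision operation on the pattern $\PG{G}$ to the subdivision operation on the graph $G$, using the fact that $\PG{G}$ is built from $G$ by Construction~\ref{constructPG} and that $\CG{\PG{G}} = G$ (up to the obvious identification of parts with vertices, which should first be checked: the parts of $\PG{G}$ are indexed by vertices of $G$, and a negative edge joins the two parts corresponding to the endpoints of an edge of $G$, so $\CG{\PG{G}}$ is isomorphic to $G$). The key structural observation is that subdividing $\PG{G}$ at the two parts $U_u, U_v$ corresponding to an edge $\{u,v\} \in E(G)$ — which, by Definition~\ref{def:subdivision}, replaces the single negative edge between these parts with a new part containing the two points $z', z''$ and negative edges $(x,z')$, $(z'',y)$ — produces exactly the pattern $\PG{G'}$ where $G'$ is $G$ with the edge $\{u,v\}$ subdivided by a new vertex. (There is a small bookkeeping point: Construction~\ref{constructPG} puts \emph{two} points in the part for a degree-$2$ vertex, one per incident edge, joined by a negative edge; the subdivision definition for a negative edge likewise introduces two points joined through the old endpoints, so the match is exact.)

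First I would make this correspondence precise as a lemma-internal claim: for any graph $G$ and any graph $G'$ obtained from $G$ by a single edge subdivision, $\PG{G'}$ is (isomorphic to) the pattern obtained from $\PG{G}$ by a single subdivision at the appropriate pair of parts, and conversely every single subdivision of $\PG{G}$ at a pair of parts $U,V$ that are joined by a (necessarily unique, since $\PG{G}$ is a negative pattern coming from a simple graph) negative edge arises this way. Iterating, any subdivision $P_1'$ of $\PG{G}$ with the property that every subdivision step was performed at a pair of parts joined by a negative edge is of the form $\PG{G'}$ for some subdivision $G'$ of $G$; and all subdivisions of $G$ arise so. The only subtlety is that Definition~\ref{def:subdivision} allows subdividing at \emph{any} two distinct parts $U,V$, including parts with no negative edge between them (in which case $Z_{UV} = \emptyset$ and the pattern is unchanged, so this is harmless) — so without loss of generality we may assume every subdivision step in a given sequence is "productive", i.e.\ at a pair of parts carrying a negative edge.

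For the forward direction, suppose $\TM{\PG{G}}{P}$, so some subdivision $P_1'$ of $\PG{G}$ has a part-preserving homomorphism $h$ into $P$. By the above, $P_1' = \PG{G'}$ for some subdivision $G'$ of $G$. A part-preserving homomorphism from a negative pattern $\PG{G'}$ into $P$ sends each negative edge of $\PG{G'}$ to a negative edge of $P$ lying between the two distinct parts that are the $h$-images of the two parts containing the endpoints; distinctness of these images is exactly part-preservation. This data — an injective-on-parts-used map... actually injectivity on vertices needs care, so let me instead argue: the map on parts induced by $h$ together with the fact that every negative edge of $\PG{G'}$ maps to a negative edge of $P$ shows that $\CG{G'}\cong G'$ maps homomorphically, preserving the edge relation, onto a subgraph of $\CG{P}$; one then checks this graph homomorphism is in fact a subgraph embedding of $G'$ (the branch vertices of $G'$, being the original vertices of $G$ which have degree $\ge 3$ or are endpoints, must receive distinct parts, and the subdivided paths map to internally-disjoint paths in $\CG{P}$ because distinct points of $\PG{G'}$ within a path-part map into distinct parts of $P$). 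Hence $G'$, and therefore $G$, is a topological minor of $\CG{P}$.

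For the converse, suppose $G$ is a topological minor of $\CG{P}$, witnessed by a subdivision $G'$ of $G$ that is a subgraph of $\CG{P}$. Then $\PG{G'}$ is a subdivision of $\PG{G}$ by the correspondence above, so it suffices to exhibit a part-preserving homomorphism $\PG{G'} \to P$. Since $G'$ is a subgraph of $\CG{P}$, for each edge $\{U,W\}$ of $G'$ there is a negative edge of $P$ from part $U$ to part $W$; choose one such for each edge, and define $h$ on the two points of the part of $\PG{G'}$ corresponding to a vertex $v$ by sending them to the appropriate endpoints in $P$ of the chosen negative edges at the edges of $G'$ incident to $v$ — for a degree-$2$ vertex this requires the two chosen endpoints to be connected through $v$'s part, which they are since any two points in the same part of $P$ are $E^{\sim}$-related, and $\PG{G'}$ has a negative edge between the two points at $v$ that we are free to map anywhere within part(s) — here one must be slightly careful and may need to route through the structure, but since $P$ need not be complete we instead observe that in $\PG{G'}$ the "internal" negative edge at a degree-$2$ vertex $v$ connects the two path-points, and we should map both path-points of $v$ to a \emph{single} point of $P$ on the chosen negative path, which is legitimate because the two incident negative edges of $\PG{G'}$ at $v$ then map to the two chosen negative edges of $P$ and the internal edge maps to a loop... since $E^{-}$ need not be reflexive this fails, so the cleanest fix is to only subdivide "just enough": one takes $G'$ minimal, and then each degree-$2$ vertex of $G'$ really does correspond to a genuine $2$-point part in $P$ along the chosen path. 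Making this routing argument airtight is the main obstacle; I expect to handle it by choosing, for each subdivided edge of $G'$, an actual path in $\CG{P}$ and reading off consecutive negative edges of $P$ along it, and noting that Construction~\ref{constructPG} and Definition~\ref{def:subdivision} were designed precisely so that two points per internal vertex is the right count to match a path of consecutive negative edges in $P$.

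In short, the whole lemma reduces to the single clean fact that \emph{subdivision of $\PG{G}$ at a negative-edge-carrying pair of parts is the same as subdivision of the underlying graph $G$ at the corresponding edge}, plus the routine translation between part-preserving homomorphisms of negative patterns and subgraph embeddings of their constraint graphs; the only real care needed is the "two points per subdivided vertex" bookkeeping, which is exactly why Definition~\ref{def:subdivision} treats negative edges the way it does.
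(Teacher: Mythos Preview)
The paper states this lemma without proof, treating it as a routine translation between the two notions of topological minor. Your overall approach is correct and would succeed, but you have manufactured a phantom obstacle in the converse direction, and you overcomplicate the forward direction.

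\textbf{Forward direction.} You hesitate over whether the induced map on vertices is injective, but this is immediate from the definition of ``part-preserving'': if $x,y$ lie in distinct parts of $\PG{G'}$ then $(x,y)\notin E^{\sim}$, so $(h(x),h(y))\notin E^{\sim}$, i.e.\ $h(x),h(y)$ lie in distinct parts of $P$. Since a homomorphism also sends $E^{\sim}$ to $E^{\sim}$, the induced map on parts is well-defined and injective. Together with the fact that negative edges map to negative edges, this gives an injective graph homomorphism $G'\to\CG{P}$, so $G'$ is a subgraph of $\CG{P}$. No appeal to branch vertices or internally-disjoint paths is needed.

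\textbf{Converse direction.} Your ``main obstacle'' stems from the false belief that $\PG{G'}$ has a negative edge between the two points $x_{e_1,v}$ and $x_{e_2,v}$ at a degree-$2$ vertex $v$. Re-read Construction~\ref{constructPG}: a negative edge $(x_{e,u},x_{f,w})$ requires $e=f$, so two points in the same part (same vertex, different incident edges) are \emph{never} joined by a negative edge. You actually had this right in your first paragraph when you verified that pattern subdivision matches graph subdivision (the two new points $z',z''$ lie in the same new part with no edge between them), and then forgot it. Once the phantom edge is gone, the homomorphism is immediate: given an injection $\phi$ of $V(G')$ into the parts of $P$ witnessing that $G'$ is a subgraph of $\CG{P}$, choose for each edge $e=\{u,w\}$ of $G'$ a negative edge $(p_e^u,p_e^w)$ of $P$ with $p_e^u\in\phi(u)$ and $p_e^w\in\phi(w)$, and set $h(x_{e,v})=p_e^v$. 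Same-part points map into the same part $\phi(v)$; negative edges map to the chosen negative edges; distinct parts map to distinct parts because $\phi$ is injective. There is no routing to do.
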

The simplest structural class of CSP instances of bounded treewidth is the class 
of instances whose constraint graph is {\em acyclic} (that is, has treewidth 1).
This class is known as the class of acyclic binary CSP instances
and was one of the first sub-problems of the CSP to be shown to be tractable~\cite{Freuder82:backtrack-free}.
We now show that this class can be characterised very simply by excluding 
the single pattern $\PG{C_3}$ shown in Figure~\ref{fig:acyclic}
from occurring as a topological minor.
\begin{proposition}
\label{prop:acyclic}
The class of acyclic binary CSP instances equals $\ForbTM{\PG{C_3}}$.
\end{proposition}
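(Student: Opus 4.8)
The plan is to use Lemma~\ref{lem:graphtopminor} to reduce the statement to a purely graph-theoretic fact about cycles. Recall that $\PG{C_3}$ is the negative pattern built from the 3-cycle $C_3$ by Construction~\ref{constructPG}. By Lemma~\ref{lem:graphtopminor}, for any CSP instance $I$ we have $\TM{\PG{C_3}}{\PI{I}}$ if and only if $C_3$ is a topological minor of the graph $\CG{\PI{I}}$. So the proof splits into two observations: first, that $\CG{\PI{I}}$ is exactly the constraint graph of $I$ (this is already remarked in the text after Definition~\ref{def:constraintgraphpattern}); and second, that a graph has $C_3$ as a topological minor if and only if it contains a cycle.

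First I would make precise the second observation. A subdivision of $C_3$ is simply a cycle (of length at least $3$), so $C_3$ is a topological minor of a graph $H$ precisely when some subdivision of $C_3$ is a subgraph of $H$, i.e.\ precisely when $H$ contains a cycle. The forward direction is immediate since any cycle is itself a subdivision of $C_3$. The reverse direction uses the elementary fact that a graph with no cycle is a forest, so if $H$ contains a cycle it contains one as a subgraph, and that cycle is a subdivision of $C_3$. I should be slightly careful about degenerate cases: loops and multi-edges do not arise here because $\CG{P}$ is a simple graph by Definition~\ref{def:constraintgraphpattern}, so ``contains a cycle'' unambiguously means ``contains a cycle of length $\geq 3$ as a subgraph'', which is the same as ``is not a forest''.

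Putting these together: $I$ is acyclic by definition exactly when its constraint graph $\CG{\PI{I}}$ has no cycle, which by the graph fact is exactly when $C_3$ is not a topological minor of $\CG{\PI{I}}$, which by Lemma~\ref{lem:graphtopminor} is exactly when $\TM{\PG{C_3}}{\PI{I}}$ fails, which by Definition~\ref{def:CSPTM} is exactly when $I \in \ForbTM{\PG{C_3}}$. Hence the class of acyclic binary CSP instances equals $\ForbTM{\PG{C_3}}$.

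I do not expect a serious obstacle here; the content is all packaged in Lemma~\ref{lem:graphtopminor}, which is stated (without proof) in the excerpt and which I am permitted to assume. The only mild subtlety is making sure the notion of ``acyclic instance'' in the statement matches ``constraint graph is a forest'' rather than something stronger, and confirming that subdivisions of $C_3$ are precisely the cycles — both of which are routine. If anything needs care it is the trivial-constraint bookkeeping: the constraint graph of $I$ only records pairs with a \emph{non-trivial} constraint, and a non-trivial constraint $R_{xy}$ is one that is not the full Cartesian product, i.e.\ one that contributes at least one negative edge to $\PI{I}$, so the edge set of $\CG{\PI{I}}$ indeed coincides with the edge set of the constraint graph of $I$, as already noted in the text.
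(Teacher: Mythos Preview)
Your proof is correct and follows essentially the same approach as the paper: reduce via Lemma~\ref{lem:graphtopminor} to the graph-theoretic fact that a graph is acyclic if and only if it does not contain $C_3$ as a topological minor, using that $\CG{\PI{I}}$ is the constraint graph of $I$. The paper's proof is just a terser version of yours, omitting the explicit verification that subdivisions of $C_3$ are precisely cycles and the trivial-constraint bookkeeping you spell out.
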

\begin{proof}
The class of acyclic graphs may be characterised as graphs which do not contain $C_3$
as a topological minor~\cite{Diestel10:graph}.
Hence, by Lemma~\ref{lem:graphtopminor} and Definition~\ref{def:constraintgraphpattern},
a binary CSP instance $I$ has an acyclic constraint graph if and only if 
it is not the case that $\TM{\PG{C_3}}{\PI{I}}$.
\end{proof}
Since the pattern $\PG{C_3}$ is not star-like (see Example~\ref{ex:starlike}),
it follows immediately from Proposition~\ref{prop:notstarlike} that 
acyclic CSP instances cannot be defined by any finite set of forbidden sub-patterns.
\begin{corollary}
\label{cor:acyclic}
The class of acyclic binary CSP instances is not equal to 
\ForbSP{\cal S} for any finite set of patterns $\cal S$.
\end{corollary}

Proposition~\ref{prop:acyclic}
can easily be extended to any of the tractable structural classes of binary CSP instances
defined by imposing any fixed bound on the treewidth of the constraint
graph~\cite{Freuder85:backtrack-bounded}, although in this case the set of of
forbidden patterns is explicitly known only for $k\leq 3$~\cite{Arnborg90:forbidden}.
\begin{theorem}
For any fixed $k\geq 1$, the class of binary CSP instances whose constraint
graph has treewidth at most $k$ equals \ForbTM{{\cal S}_k},
for some finite set of patterns ${\cal S}_k$.
\end{theorem}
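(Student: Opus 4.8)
The plan is to reduce the statement to the corresponding fact about graphs, namely that for each fixed $k$ the class of graphs of treewidth at most $k$ is closed under taking topological minors and hence — by the graph minor theorem together with the known relationship between minors and topological minors for bounded treewidth — is characterised by a \emph{finite} set of forbidden topological minors. Concretely, first I would recall that a graph $H$ is a topological minor of a graph $G$ only if $H$ is a minor of $G$, so the class of treewidth-$\leq k$ graphs, being minor-closed, is also closed under topological minors. By a result of Robertson and Seymour, any minor-closed class of graphs of bounded treewidth (indeed any minor-closed class with bounded tree-depth, or bounded branchwidth; the point is that bounded treewidth forbids an arbitrarily large grid, hence in particular forbids a large clique, so all the excluded minors have bounded size) can in fact be characterised by a finite set of forbidden \emph{topological} minors; equivalently, the class of graphs of treewidth $\le k$ has a finite obstruction set $\mathcal{H}_k$ under the topological-minor relation. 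Let $\mathcal{H}_k = \{H_1,\dots,H_m\}$ be this finite set of graphs.

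Next I would transfer this to patterns using the machinery already set up in the excerpt. Define ${\cal S}_k = \{\PG{H_1},\dots,\PG{H_m}\}$, a finite set of (negative) patterns. By Lemma~\ref{lem:graphtopminor}, for any binary CSP instance $I$ and any graph $H$ we have $\TM{\PG{H}}{\PI{I}}$ if and only if $H$ is a topological minor of the graph $\CG{\PI{I}}$, which by the remark following Definition~\ref{def:constraintgraphpattern} is precisely the constraint graph of $I$. Therefore $I \in \ForbTM{{\cal S}_k}$ if and only if the constraint graph of $I$ contains no $H_i$ as a topological minor, which by the choice of $\mathcal{H}_k$ holds if and only if the constraint graph of $I$ has treewidth at most $k$. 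This is exactly the assertion of the theorem.

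The one genuine obstacle is justifying that the obstruction set for treewidth $\le k$ under the \emph{topological}-minor relation is finite. The graph minor theorem directly gives a finite set of forbidden \emph{minors}, but a priori a minor-closed class could require infinitely many forbidden topological minors (as $C_3$ itself illustrates for the class of forests — there one forbidden topological minor suffices, but in general one must be careful). Here the resolution is the standard fact that when the forbidden minors all have maximum degree at most $3$ — which can be arranged for treewidth obstructions, or alternatively one invokes that every minor of bounded size is ``realised'' as a topological minor after replacing high-degree branch vertices appropriately, and bounded treewidth bounds the relevant degrees — the minor and topological-minor relations coincide, so the finite minor-obstruction set doubles as a finite topological-minor obstruction set. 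For $k \le 3$ the explicit obstruction sets of~\cite{Arnborg90:forbidden} can be checked directly; for general $k$ one cites the existence (non-constructively) of a finite topological-minor obstruction set, which is all the statement requires. I would present the degree-$3$ reduction as the key lemma and keep the rest routine.
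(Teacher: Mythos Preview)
Your overall approach is exactly the paper's: invoke the graph minor theorem to obtain a finite minor-obstruction set for treewidth $\le k$, convert it to a finite \emph{topological}-minor obstruction set, and then apply Lemma~\ref{lem:graphtopminor} with ${\cal S}_k = \{\PG{H} \mid H \in \mathcal{H}_k\}$. The transfer step via $\PG{\cdot}$ and $\CG{\PI{I}}$ is fine.

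Where your write-up wobbles is precisely at the step you flag as the obstacle. Two of the justifications you offer are not correct as stated. First, the minor obstructions for treewidth $\le k$ do \emph{not} all have maximum degree $\le 3$: already for $k=3$ the obstruction set of~\cite{Arnborg90:forbidden} contains $K_5$, and in general $K_{k+2}$ is an obstruction with degree $k+1$. So you cannot simply ``arrange'' the obstructions to be subcubic and then quote Proposition~1.7.4(ii) of~\cite{Diestel10:graph}. Second, bounded treewidth does not bound vertex degrees (stars have treewidth $1$), so that remark does not help either.

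The idea you mention in passing, ``replacing high-degree branch vertices appropriately'', \emph{is} the right one, and it is what the paper relies on (citing~\cite[Exercise~34, Chapter~12]{Diestel10:graph}). Concretely: for any fixed graph $H$, one forms the finite family $\mathcal{F}(H)$ of all graphs obtained from $H$ by replacing each vertex of degree $d$ by a tree with $d$ leaves and internal vertices of degree $3$, attaching the incident edges of $H$ to the leaves. Each graph in $\mathcal{F}(H)$ has maximum degree $\le 3$, and $H$ is a minor of $G$ iff some member of $\mathcal{F}(H)$ is a topological minor of $G$. Since each $\mathcal{F}(H_i)$ is finite, $\bigcup_i \mathcal{F}(H_i)$ is the required finite topological-minor obstruction set. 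If you state the conversion this way (or simply cite the Diestel exercise, as the paper does), the proof is complete and identical in structure to the paper's.
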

\begin{proof}
The graph minor theorem~\cite{Robertson04:jctb} implies that for any fixed
$k\geq 1$ there is a finite set $O_k$ of graphs such that the class of graphs of
treewidth at most $k$ is precisely the class of graphs excluding all graphs from
the set $O_k$ as topological minors~\cite{Diestel10:graph}. (More precisely, the
graph minor theorem gives a finite set of minors as obstructions but this set
can be turned into a finite set of  
topological minors as obstructions in a standard way, 
see~\cite[Exercise 34, Chapter 12]{Diestel10:graph}.)
Consequently, by Lemma~\ref{lem:graphtopminor},
for any $k\geq 1$ the class of binary CSP instances with
constraint graphs of treewidth at most $k$ can be defined 
as $\ForbTM{{\cal S}_k}$ for the finite set of negative patterns ${\cal S}_k$
given by ${\cal S}_k = \{\PG{G} \mid G\in O_k\}$.
\end{proof}

In fact, we are able to show that many other patterns are topological-minor tractable 
using other standard results from graph theory. 
The following theorem characterises the topological-minor tractability of patterns of the
form $\PG{G}$, for all graphs $G$ of maximum degree three.
\begin{theorem}
Let $G$ be an arbitrary graph of maximum degree three. 
Then, \PG{G} is topological-minor tractable if and only if $G$ is planar
\emph{(}assuming $\FPT \neq \WW$\emph{)}.
\end{theorem}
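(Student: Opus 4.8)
The plan is to prove the equivalence by relating topological-minor tractability of $\PG{G}$ directly to a structural restriction on the constraint graph, via Lemma~\ref{lem:graphtopminor}. The key observation is that, by Lemma~\ref{lem:graphtopminor}, an instance $I$ lies in $\ForbTM{\PG{G}}$ precisely when $G$ is \emph{not} a topological minor of the constraint graph $\CG{\PI{I}}$. So $\ForbTM{\PG{G}}$ is exactly the structural class whose constraint graphs exclude $G$ as a topological minor. By the characterisation of tractable structural classes (\cite[Theorem~5.1]{Grohe07:otherside}, cited above), this class is tractable if and only if the graphs excluding $G$ as a topological minor have bounded treewidth (under $\FPT \neq \WW$). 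Hence the whole theorem reduces to the following purely graph-theoretic claim: for a graph $G$ of maximum degree three, the class of graphs not containing $G$ as a topological minor has bounded treewidth if and only if $G$ is planar.

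First I would establish the ``only if'' direction of the graph-theoretic claim: if $G$ is non-planar, then every planar graph excludes $G$ as a topological minor (a topological minor of a planar graph is planar, being a subgraph of a subdivision of a planar graph), yet planar graphs have unbounded treewidth (e.g.\ the $n \times n$ grid). So the excluding class contains all planar graphs and thus has unbounded treewidth; by Grohe's theorem $\PG{G}$ is not topological-minor tractable. Second, for the ``if'' direction, suppose $G$ is planar \emph{and} has maximum degree three. Here I would invoke the classical Robertson--Seymour grid-minor / excluded-planar-graph theorem: for a planar graph $H$, the graphs with no $H$-minor have treewidth bounded by a function of $|V(H)|$. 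The extra hypothesis that $G$ has maximum degree at most three is exactly what converts ``minor'' into ``topological minor'': a graph contains $G$ as a minor if and only if it contains $G$ as a topological minor whenever $\Delta(G)\le 3$ (branch sets can be contracted to paths since each branch vertex has degree $\le 3$). Therefore excluding $G$ as a topological minor is the same as excluding $G$ as a minor, which by the excluded-planar-minor theorem bounds the treewidth; Grohe's theorem then gives topological-minor tractability.

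The main obstacle — or rather the step requiring the most care — is the interplay between ``minor'' and ``topological minor'' in the maximum-degree-three case, and making sure the reduction through Lemma~\ref{lem:graphtopminor} and Grohe's theorem is airtight in both directions. For tractability we need that membership in $\ForbTM{\PG{G}}$ is decidable in polynomial time given only the promise of bounded treewidth of the constraint graph; this follows because one can first compute $\CG{\PI{I}}$, test in polynomial time (for fixed $G$) whether $G$ is a topological minor, and if not, solve the bounded-treewidth binary CSP by standard dynamic programming over a tree decomposition. For the hardness direction we must check that Grohe's dichotomy genuinely applies: the class $\ForbTM{\PG{G}}$ is a structural class (it depends only on the constraint graph), so Theorem~5.1 of \cite{Grohe07:otherside} applies and unbounded treewidth gives $\WW$-hardness, hence (under $\FPT \neq \WW$) intractability, which rules out topological-minor tractability. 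Assembling these pieces yields the stated equivalence.
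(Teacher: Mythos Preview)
Your proposal is correct and follows essentially the same route as the paper: use Lemma~\ref{lem:graphtopminor} to identify $\ForbTM{\PG{G}}$ with a structural class, invoke Grohe's dichotomy to reduce to a bounded-treewidth question, use the degree-$\le 3$ equivalence of minors and topological minors, and appeal to the Robertson--Seymour excluded-planar-graph theorem. The only minor difference is that you give an explicit elementary argument for the non-planar direction (via planar graphs having unbounded treewidth), whereas the paper simply cites the biconditional form of the Robertson--Seymour result for both directions; your extra discussion of polynomial-time recognition is not needed for the statement as phrased.
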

\begin{proof}
One of the well-known results of Robertson and Seymour shows that the class of graphs obtained
by excluding $G$ as a minor has bounded treewidth if and only if $G$ is
planar~\cite{Robertson86:excluding} (see also~\cite[Theorem~12.4.3]{Diestel10:graph}). 
It is known that for a graph $G$ of maximum degree three and any graph $G'$, $G$
is a minor of $G'$ if and only if $G$ is a topological minor of
$G'$~\cite[Proposition~1.7.4~(ii)]{Diestel10:graph}. 
Thus, for a graph $G$ of maximum degree three, the class of graphs obtained by
excluding $G$ as a topological minor has bounded treewidth if and only if $G$ is
planar.
The theorem then follows from Lemma~\ref{lem:graphtopminor} and the fact that,
assuming $\FPT \neq \WW$, a structural class of binary CSP instances is
tractable if and only if the associated class of constraint graphs is of bounded
treewidth~\cite{Grohe07:otherside}.
\end{proof}
Unfortunately this result does not extend to graphs of higher degree, as the following 
example shows.
\begin{example}
Consider a star graph $G$ where the central vertex has degree 4.
Note that $G$ is planar.

In all subdivisions of $G$, the central vertex still has degree 4,
so it cannot occur as a topological minor in any graph of maximum degree three.
Hence, by Lemma~\ref{lem:graphtopminor}, $\PG{G}$ cannot occur as a topological minor in
any CSP instance whose constraint graph is a hexagonal grid.
Since the treewidth of the class of hexagonal grids is unbounded~\cite{Diestel10:graph},
this structural class of CSP instances is intractable, assuming $\FPT \neq \WW$,
by the results of~\cite{Grohe07:otherside}.
\end{example}

\section{Tractable classes that generalise acyclicity}
\label{sec:scheme}

In this section we will give several more examples of patterns that are 
topological-minor tractable. 
We conclude the section with Theorem~\ref{thm:P2}  
where we define several new tractable classes which properly extend 
the class of acyclic CSP instances discussed in Section~\ref{sec:structural}.

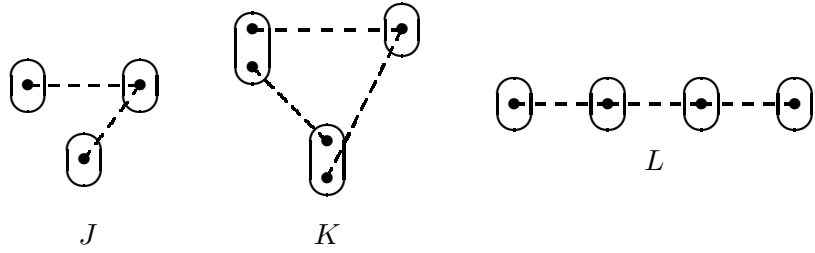
\begin{figure}
\thicklines \setlength{\unitlength}{0.7pt}
\centering

\begin{picture}(350,140)(50,-20)

\put(0,10){
\begin{picture}(90,90)(0,0)
\put(0,40){\usebox{\varone}} \put(60,40){\usebox{\varone}}
\put(30,0){\usebox{\varone}}
\dashline[50]{7}(40,20)(70,60) \dashline[50]{7}(10,60)(70,60)
\put(42,-20){\makebox(0,0){$J$}}
\end{picture}}

\put(120,10){ 
\begin{picture}(100,100)(0,0)
\put(0,60){\usebox{\vartwo}} \put(40,0){\usebox{\vartwo}}
\put(80,70){\usebox{\varone}} \dashline[50]{7}(10,70)(50,30)
\dashline[50]{7}(10,90)(90,90) \dashline[50]{7}(50,10)(90,90)
\put(50,-20){\makebox(0,0){$K$}}
\end{picture}}

\put(260,40){
\begin{picture}(130,100)(0,0)
\put(0,0){\usebox{\varone}} \put(50,0){\usebox{\varone}}
\put(100,0){\usebox{\varone}} \put(150,0){\usebox{\varone}}
\dashline[50]{7}(10,20)(60,20) \dashline[50]{7}(60,20)(110,20)
\dashline[50]{7}(110,20)(160,20) \put(85,-10){\makebox(0,0){$L$}}
\end{picture}}

\end{picture}

\caption{Three patterns which are topological-minor tractable.}
\label{fig:new}
\end{figure}

Consider the patterns shown in Figure~\ref{fig:new}. 
By Theorem~\ref{thm:pivotSPtractable}, 
$J$ is sub-pattern tractable and hence also topological-minor tractable,
by Lemma~\ref{lem:sp}.
However, the remaining patterns, $K$ and $L$ are more interesting.
\begin{theorem}
\label{thm:KisTMtractable}
The pattern $K$, shown in Figure~\ref{fig:new}, is 
sub-pattern $\NP$-complete but topological-minor tractable.
\end{theorem}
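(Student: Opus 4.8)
The statement has two halves. For the $\NP$-completeness under sub-patterns, I would appeal to Theorem~\ref{thm:pivotSPtractable}: $K$ is a negative pattern, so it suffices to check that $K$ does not occur as a sub-pattern of $Pivot(k)$ for any $k$. Looking at $K$ in Figure~\ref{fig:new}, it has a part (the left-hand part) that is \emph{distinguished} in the sense used in the proof of Proposition~\ref{prop:notstarlike}: two negative edges share a point there, while $Pivot(k)$ has its two shared points in a single central part with exactly three branches, and no part where two negative edges meet at one point other than via that merge --- a short combinatorial check shows no part-preserving homomorphism $K\to Pivot(k)$ exists. (Equivalently, one observes $K$ is not even star-like after removing positive edges, so Theorem~\ref{thm:pivotSPtractable} immediately gives sub-pattern $\NP$-completeness.) This half is routine.

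\textbf{The substantive half: topological-minor tractability.} The plan is to give a polynomial-time algorithm deciding satisfiability of any arc-consistent instance $I$ with $\TM{K}{\PI{I}}$ false; by Lemma~\ref{lem:AC} it is enough to handle arc-consistent instances. First I would unpack what forbidding $K$ as a topological minor says about $\PI{I}$. A subdivision of $K$ replaces each negative edge of $K$ by a path of parts; the two negative edges of $K$ meeting at the distinguished point become, after subdivision, two paths emanating from a common part, while the third negative edge (between the other two original parts of $K$) becomes a third path closing things into a triangle-like configuration of paths. So $\TM{K}{\PI{I}}$ failing should translate into a statement of the form: there is no triple of variables $u,v,w$ and values such that [certain incompatibility pattern along paths in the constraint graph] holds --- intuitively, it forbids a broken triangle of a particular kind spread out along paths. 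I would make this precise as a local structural property $(\star)$ of $\PI{I}$ and then show $(\star)$ can be checked, and exploited for solving, in polynomial time.

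\textbf{Algorithmic core.} With $(\star)$ in hand, the algorithm will be a variant of directional arc-consistency / the broken-triangle-property approach. Since $K$ refines the pattern underlying the broken-triangle property (BTP) along subdivided edges, I expect the key lemma to be: if $I$ is arc-consistent and $\TM{K}{\PI{I}}$ fails, then after establishing arc-consistency one can order the variables and eliminate them one at a time so that each eliminated variable is ``BTP-free relative to its neighbours via paths'', guaranteeing that a greedy/backtrack-free assignment extends. Concretely I would (i) show $(\star)$ is preserved under establishing arc-consistency (this is Lemma~\ref{lem:AC}), (ii) show that if $I$ is arc-consistent, non-empty, and satisfies $(\star)$, then there is a variable $v$ whose removal, followed by re-establishing arc-consistency, again satisfies $(\star)$ and does not empty any domain, and (iii) conclude by induction that $I$ has a solution, reconstructing it by un-eliminating variables in reverse order. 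Each step is polynomial.

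\textbf{Main obstacle.} The hard part will be step (ii): proving the existence of a safely-eliminable variable. This requires carefully arguing that the absence of $K$ as a topological minor --- i.e.\ the absence of its subdivisions, which are configurations \emph{spread along arbitrarily long paths} in the constraint graph --- gives enough leverage to find such a variable, since the ``obstruction'' is non-local. I expect to handle this by a minimal-counterexample or extremal argument: take a shortest offending path or a variable on the boundary of the constraint graph, and show that if no variable were safely eliminable, one could assemble an arc-consistency-supported subdivision of $K$ inside $\PI{I}$, contradicting the hypothesis. Getting the bookkeeping right between ``two negative edges sharing a point'' in $K$ and the corresponding arc-consistency supports in $I$ (the semantic distinction between positive and negative edges noted after Definition~\ref{def:subdivision} is exactly what makes the two subdivision points $z',z''$ appear) is where the care is needed.
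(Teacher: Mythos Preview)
Your handling of sub-pattern $\NP$-completeness is fine (and can be shortened to a one-line citation of Theorem~\ref{thm:pivotSPtractable}, as the paper does).

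For topological-minor tractability, your plan has a genuine gap: you never identify the concrete structural consequence of forbidding $K$ as a topological minor, and your step~(ii) is left as an unfulfilled hope. The paper's argument is both different from and considerably simpler than a BTP-style elimination. The key observation---which your description of subdivisions of $K$ comes close to but does not reach---is this: a subdivision of $K$ occurring as a sub-pattern in $\PI{I}$ is precisely an occurrence of the pattern $J$ (two negative edges meeting at a single point, Figure~\ref{fig:new}) on some triple $(x,y,z)$, with $y$ the variable where the two edges meet, \emph{together with} a path from $x$ to $z$ in the constraint graph of $I$ that avoids $y$. Hence, if $K$ does not occur as a topological minor in $\PI{I}$, then every occurrence of $J$ at $(x,y,z)$ forces $y$ to separate $x$ from $z$ in the constraint graph. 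This immediately yields a tree decomposition of $I$ into components in which $J$ does not occur as a sub-pattern, overlapping only at such articulation variables $y$. Each leaf component can then be solved, for each value of its unique articulation variable, in polynomial time using the known sub-pattern tractability of $J$ (Theorem~\ref{thm:pivotSPtractable}), and leaves are eliminated bottom-up.

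Your BTP-inspired approach is not obviously unworkable, but as written it contains no argument: you have not said what $(\star)$ is, and the existence of a safely-eliminable variable is exactly the thing to be proved. The paper sidesteps this entirely by reducing to a \emph{known} tractable class ($\ForbSP{J}$) via the decomposition, rather than eliminating variables one at a time. If you wanted to salvage an elimination-style proof, the articulation-point observation above is still the missing ingredient you would need; without it there is no evident way to find your ``safe'' variable.
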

\begin{proof}
By Theorem~\ref{thm:pivotSPtractable}, $K$ is sub-pattern $\NP$-complete.

To show that $K$ is topological-minor tractable, 
consider an instance $I$ in which the pattern $K$ does not occur as a topological minor.
If the pattern $J$ from Figure~\ref{fig:new} does not occur as a sub-pattern in $\PI{I}$ 
then we are done since, as noted above, \ForbSP{J} is tractable 
and thus $I$ can be solved in polynomial time. 

On the other hand, if $J$ does occur as a sub-pattern in $\PI{I}$, then we will build a special tree 
decomposition $T$ of the constraint graph of $I$, where each node of $T$ is a subset of the
vertices of the constraint graph of $I$, and all non-leaf nodes of $T$ have size 1.

In more detail, let $G_I$ be the constraint graph of $I$.
Suppose the pattern $J$, shown in Figure~\ref{fig:new}, occurs as a sub-pattern in $\PI{I}$
on the three parts corresponding to the triple of variables $(x,y,z)$ in $I$, 
with $y$ being the variable at which the two negative edges meet. 
Since $K$ does not occur as a topological minor in $I$, it follows that there is no path from $x$ to $z$ 
in $G_I$ that does not pass through $y$. 
Hence $y$ is an articulation point of $G_I$.

Let $C_1,\ldots,C_k$ be the components of $G_I\setminus\{y\}$, 
and denote by $I_{C_i}$ the sub-instance of $I$ on the variables of $C_i \cup \{y\}$.
We form a tree decomposition of $G_I$ as follows:  
the root of $T$ is the subset containing just the variable $y$ and has $k$ children. 
If the pattern $J$ does not occur as a sub-pattern in $C_i \cup \{y\}$, 
then the $i$-th child of the root is a leaf node corresponding to the sub-instance $I_{C_i}$. 
Otherwise, if the pattern $J$ does occur as a sub-pattern in $C_i \cup \{y\}$, 
then we proceed in the same fashion and decompose $C_i$ into a sub-tree rooted at the $i$-th child. 

Since \ForbSP{J} is tractable, any sub-instance corresponding to a leaf of this tree decomposition
can be solved in polynomial time for each possible assignment to its unique articulation variable
which joins it to its parent node in the tree-decomposition. Hence in polynomial time
we can solve this sub-instance, eliminate the corresponding leaf, and possibly eliminate 
some values in the domain of this articulation variable. 
After eliminating all non-trivial leaf nodes in this way, 
the remaining sub-instance of $G_I$ is tree structured and hence can be solved in polynomial time.
\end{proof}

We will show in Theorem~\ref{thm:line} below 
that the pattern $L$ shown in Figure~\ref{fig:new} is also topological-minor tractable. 
In order to do so, we will extend the proof technique used in 
Theorem~\ref{thm:KisTMtractable} to a \emph{generic scheme} for proving 
topological-minor tractability of patterns.

To develop our generic scheme we need some standard results from graph theory.
If $S$ is a set of vertices of a graph $G$, we write $G[S]$ for the induced graph on $S$. 

A {\em tree decomposition} of a graph $G = (V,E)$ is a tree $T$, together with a subset $V_t$
of the vertices of $G$ for each node $t \in T$, such that $\bigcup_{t \in T} V_t = V$,
each edge $e \in E$ is contained in $V_t$ for some $t \in T$, 
and for any vertex $v \in V$ the set $\{t \mid v \in V_t\}$ is a connected sub-tree of $T$.
The \emph{torsos} of a tree decomposition $(T,(V_t)_{t\in T})$ of a graph $G$
are the graphs $H_t$, $t\in T$, obtained from $G[V_t]$ by adding all the
edges $\{x,y\}$ such that $x,y\in V_t\cap V_{t'}$ where $t'$ is any neighbour of $t$ in $T$.

A \emph{Tutte decomposition} of a graph $G$ is a tree decomposition
$(T,(V_t)_{t\in T})$ of $G$, where
$|V_t\cap V_{t'}|\leq 2$ for every pair of neighbours $t$ and $t'$ in $T$, 
and the torso of each node is either three-connected, or a cycle, or has at most 2 vertices.
It is known that every finite graph has a Tutte decomposition of this
kind~\cite{Tutte66:connectivity}, and that such a decomposition can be found in linear time~\cite{HopTar73}.

\begin{example}
Figure~\ref{fig:Tutte} shows a graph and a possible Tutte decomposition.
\end{example}

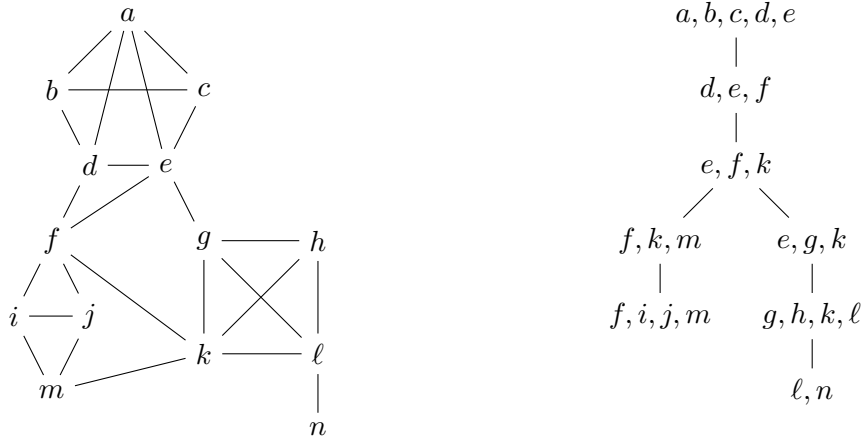
\begin{figure}[ht]
\begin{center}
\begin{tikzpicture}%[scale=1.5]
  \node (a) at (0,5) {$a$};
  \node (b) at (-1,4) {$b$};
  \node (c) at (+1,4) {$c$};
  \node (d) at (-0.5,3) {$d$};
  \node (e) at (+0.5,3) {$e$};
  \node (f) at (-1,2) {$f$};
  \node (g) at (+1,2) {$g$};
  \node (h) at (+2.5,2) {$h$};
  \node (i) at (-1.5,1) {$i$};
  \node
   (j) at (-0.5,1) {$j$};
  \node (k) at (+1,0.5) {$k$};
  \node (l) at (+2.5,0.5) {$\ell$};
  \node (m) at (-1,0) {$m$};
  \node (n) at (+2.5,-0.5) {$n$};

  \node (t1) at (8,5) {$a,b,c,d,e$};
  \node (t2) at (8,4) {$d,e,f$};
  \node (t3) at (8,3) {$e,f,k$};
  \node (t4a) at (7,2) {$f,k,m$};
  \node (t4b) at (9,2) {$e,g,k$};
  \node (t5a) at (7,1) {$f,i,j,m$};
  \node (t5b) at (9,1) {$g,h,k,\ell$};
  \node (t6b) at (9,0) {$\ell,n$};

  \draw (a) -- (b); \draw (a) -- (c); \draw (a) -- (d); \draw (a) -- (e); \draw (b) -- (c); \draw (b) -- (d); \draw (c) -- (e); \draw (d) -- (e); \draw (d) -- (f); \draw (e) -- (f); \draw (e) -- (g); \draw (f) -- (i); \draw (f) -- (j); \draw (f) -- (k); \draw (g) -- (h); \draw (g) -- (k); \draw (g) -- (l); \draw (h) -- (k); \draw (h) -- (l); \draw (i) -- (j); \draw (i) -- (m); \draw (j) -- (m); \draw (k) -- (l); \draw (k) -- (m); \draw (l) -- (n);

  \draw (t1) -- (t2); \draw (t2) -- (t3); \draw (t3) -- (t4a); \draw (t3) -- (t4b); \draw (t4a) -- (t5a); \draw (t4b) -- (t5b); \draw (t5b) -- (t6b);
\end{tikzpicture}
\end{center}
\caption{A graph and its Tutte decomposition.}\label{fig:Tutte}
\end{figure}

To demonstrate topological-minor tractability for a pattern $P$ we proceed as follows.
Let $I$ be an instance in which $P$ does not occur as a topological minor and
let $G_I$ be its constraint graph. 
We denote by $n$ the number of variables in $I$ 
and by $d$ the maximum domain size of any variable in $I$.

Build a Tutte decomposition of $G_I$, and
consider any leaf node $s$ in this decomposition.
The subset of variables associated with node $s$ will be denoted $S$, 
and the variables associated with the remainder of the nodes of the tree decomposition 
after removing the leaf $s$
will be denoted by $T$. Note that $S$ and $T$ share at most 2 variables.
Let $I[S]$ be the sub-instance of $I$ on $S$
and $I[T]$ be the sub-instance of $I$ on $T$.
Suppose that the following two assumptions hold:
\begin{enumerate}
\item[(\textbf{A1})]
$I[S]$ can be solved and its solutions projected onto the
variables shared with $T$ 
in polynomial time;
the resulting reduced instance on $T$ will be denoted by $I'[T]$.
\item[(\textbf{A2})]
$P$ does not occur as a topological minor in $\PI{I'[T]}$.
\end{enumerate}
Then it follows that a recursive algorithm, which
at each step chooses some leaf $s$ of the decomposition, 
and then solves the associated sub-problem $I[S]$
to obtain the reduced instance $I'[T]$,
will solve the original instance using a
polynomial (in $n$ and $d$) number of calls to the polynomial-time algorithm from
(\textbf{A1}).

In the proofs below we will omit the simple cases 
where $S$ and $T$ share only 1 variable,  
or $S$ contains at most 3 vertices, 
or the torso of $S$ is a cycle (and hence has treewidth 2
and is solvable in polynomial time).
Hence we will assume that the torso of $S$ contains more than three vertices 
and is three-connected.

Finally, note that if $S$ and $T$ share the variables $\{u,v\}$,
then we have the following:
\begin{itemize}
\item Any path in $G_I$ from a vertex in $S$ to a vertex in $T$ 
must pass through $u$ or $v$;
\item There must exist some path from $u$ to $v$ in $G_I[T]$,
which we will denote $path_{T}(u,v)$.
\end{itemize}

We now use this generic scheme to prove the tractability of pattern $L$ from
Figure~\ref{fig:new}.
\begin{theorem} 
\label{thm:line}
The pattern $L$, shown in Figure~\ref{fig:new}, is sub-pattern $\NP$-complete 
but topological-minor tractable.
\end{theorem}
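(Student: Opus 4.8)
The plan is to follow the generic scheme set up just above the statement. First, sub-pattern $\NP$-completeness of $L$ follows immediately from Theorem~\ref{thm:pivotSPtractable}: $L$ is a negative pattern consisting of a path of four parts joined by three negative edges, so it is not a sub-pattern of any $\mathrm{Pivot}(k)$ (it has no branching point, but its three negative edges form a path of length three, which does not embed into the star structure of $\mathrm{Pivot}(k)$); hence it is sub-pattern $\NP$-complete. So the real content is topological-minor tractability.

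For that, I would take an instance $I$ in which $L$ does not occur as a topological minor, let $G_I$ be its constraint graph, and build a Tutte decomposition of $G_I$. Pick a leaf node $S$ inducing a separation $\tuple{S,T}$ of order two, with separator $\{u,v\}$; by the remarks preceding the statement we may assume the torso of $S$ is three-connected with more than three vertices, and there is a path $path_T(u,v)$ inside $G_I[T]$. The key structural observation I would prove is this: since the torso of $S$ is three-connected, between $u$ and $v$ there are (at least) three internally disjoint paths inside the torso, and because $S$ is a \emph{leaf}, these can be arranged to lie inside $G_I[S]$ (replacing the one torso edge $uv$, if present, by $path_T(u,v)$ affects only one of them). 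Now if the subinstance $I[S]$ were to contain, on some variable $w\ne u,v$, a configuration that prevents the joint assignment $(a,b)$ to $(u,v)$ from extending — i.e. a point incompatible with $\tuple{u,a}$ and a (different) point incompatible with $\tuple{v,b}$ — then, combined with one of the internally disjoint $u$--$v$ paths inside $S$, together with $path_T(u,v)$, one obtains a subdivision of $L$ occurring as a sub-pattern in $\PI{I}$, contradicting our assumption. (The three negative edges of $L$ get subdivided into: the segment $u$--$w$ carrying the first incompatibility, the segment $w$--$v$ carrying the second, and the long cycle-completing segment $v$--$\dots$--$u$ through $path_T(u,v)$ and the disjoint path.) Hence for \emph{every} pair $(a,b)\in R_{uv}$ that is arc-consistent in $I[S]$, the assignment extends to all of $I[S]$ — in other words, $I[S]$ is "solvable over its separator" in a strong sense.

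This is exactly what is needed to verify assumptions \textbf{(A1)} and \textbf{(A2)} of the generic scheme. For \textbf{(A1)}: enforce arc-consistency on $I[S]$ (polynomial time, and by Lemma~\ref{lem:AC} it cannot create a topological minor of $L$); by the argument above, the set of joint assignments to $(u,v)$ that survive arc-consistency is exactly the set that projects from solutions of $I[S]$, so we can replace $R_{uv}$ in $I$ by this restricted relation, delete the interior variables of $S$, and obtain $I'[T]$ in polynomial time. For \textbf{(A2)}: passing from $I$ to $I'[T]$ only deletes variables and tightens one constraint, so $\PI{I'[T]}$ is (isomorphic to a sub-pattern of) $\PI{I}$ with some parts removed; since $L$ did not occur as a topological minor in $\PI{I}$ it does not occur in $\PI{I'[T]}$ either. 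The generic scheme then gives a polynomial-time recursive algorithm, establishing topological-minor tractability.

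The main obstacle — and the step I would spend the most care on — is the structural lemma in the middle: turning three-connectivity of the torso of the leaf $S$ into three genuinely internally disjoint $u$--$v$ paths \emph{inside $G_I[S]$} (not just in the torso, where the artificial edge $uv$ is available), and then checking carefully that the resulting configuration, once a "blocking" variable $w$ is found in $S$, really yields a \emph{subdivision} of $L$ occurring as a \emph{part-preserving} sub-pattern — in particular that the two negative edges at $w$ in $L$ map to genuinely distinct negative edges (this is where the two-points-per-negative-edge convention in Definition~\ref{def:subdivision} matters, matching the "$w$ forced to take value $a$ or $b$" semantics discussed after that definition). A secondary subtlety is handling the degenerate cases (separator order $<2$, $|S|\le 3$, torso a cycle) and the base case of the recursion, but these are routine and are precisely the cases the scheme permits us to omit.
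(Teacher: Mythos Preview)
Your NP-completeness argument is fine, but both halves of the tractability argument have genuine gaps.

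\medskip

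\textbf{(A1).} The core structural step does not match the shape of $L$. The pattern $L$ is a \emph{path} on four parts $U_1\!-\!U_2\!-\!U_3\!-\!U_4$ with three negative edges; after any subdivision it is still a path, never a cycle. Your plan places the three negative edges of $L$ on segments $u\!-\!w$, $w\!-\!v$, and a ``cycle-completing'' segment $v\!-\cdots-\!u$, which would realise a \emph{triangle} pattern (essentially $\PG{C_3}$ or $K$), not $L$. To exhibit $L$ as a topological minor one needs \emph{two} parts at which two negative edges meet in a single point (the images of $U_2$ and $U_3$), i.e.\ two occurrences of the pattern $J$, linked by a path; your configuration supplies at most one such part ($w$). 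Moreover, the hypothesis you impose on $w$ (``a point incompatible with $\tuple{u,a}$ and a different point incompatible with $\tuple{v,b}$'') neither forces $(a,b)$ to fail to extend, nor produces $J$ at $w$ (the two negative edges sit at different points). And even if some particular pair $(a,b)$ fails to extend to $I[S]$, there is no reason the obstruction is a single variable $w$ adjacent to both $u$ and $v$; it can be global. So the conclusion that arc-consistency alone decides the projection onto the separator is unsupported.

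The paper's argument is different in kind: it shows that in a leaf $S$ with three-connected torso, $J$ cannot occur on two \emph{disjoint} triples of variables (otherwise Menger's theorem yields a path joining the two occurrences while avoiding their centres, producing the $4$-path $L$ as a topological minor). Hence all $J$-occurrences pairwise intersect, so instantiating one triple (plus the separator variables $u,v$) eliminates every occurrence of $J$; the residual instance lies in $\ForbSP{J}$, which is tractable. This gives the polynomial-time projection onto $\{u,v\}$.

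\medskip

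\textbf{(A2).} Tightening $R_{uv}$ does not make $\PI{I'[T]}$ a sub-pattern of $\PI{I}$: it \emph{adds} negative edges between the parts of $u$ and $v$ (pairs that were positive in $\PI{I}$ can become negative in $\PI{I'[T]}$), so the identity map is not a pattern homomorphism and your monotonicity claim fails. The paper handles this by observing that any newly negative pair $(p,q)$ arises from a path of non-trivial constraints inside $G_I[S]$, so a hypothetical occurrence of $L$ in $\PI{I'[T]}$ using the new edge can be re-routed through that path to give $L$ as a topological minor in the original $\PI{I}$, a contradiction.
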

\begin{proof}
By Theorem~\ref{thm:pivotSPtractable}, $L$ is sub-pattern $\NP$-complete.

To establish topological-minor tractability using the generic scheme described above 
we only only need to establish the two assumptions.

(\textbf{A1})\quad
Let $J$ be the pattern consisting of two intersecting negative edges, shown in Figure~\ref{fig:new}.
Suppose that $J$ occurs in $\PI{I[S]}$ as a sub-pattern on two \emph{disjoint} triples
of variables $(x,y,z)$ and $(x',y',z')$ in $I[S]$.
As explained above for the generic scheme, we can assume
that the torso of $S$ is 3-connected. 
It follows by Menger's theorem~\cite{Dirac66:Menger} that there are
three disjoint paths from $x$ to $x'$ in the torso of $S$.
There must be one of these paths, $\pi$, which does
not pass through $y$ or $y'$. 
We claim that there must be a subpath $\sigma$ of $\pi$ which begins at $x$ or
  $z$ and ends at $x'$ or $z'$ and which does not pass through any other
  variables in $\{x,y,z,x',y',z'\}$. To prove the claim first note that if
  $\pi$ does not pass through $z$ and $z'$ then $\pi$ satisfies the claim. If 
  $z$ appears on $\pi$ but $z'$ does not appear on $\pi$ then the subpath
  $\sigma$ of $\pi$ from $z$ to $x'$ satisfies
  the claim. A similar argument works for the case when $z'$ appears on $\pi$ but
  $z$ does not. If both $z$ and $z'$ appear on $\pi$ 
  then we have a subpath of $\pi$ from $z$ to $z'$.
Without loss of generality, suppose that $\sigma$ joins $x$ to $x'$. But then $L$ occurs
as a topological minor on the extended path $\sigma^{+}$
given by $z \rightarrow y \rightarrow x, \sigma,
x' \rightarrow y' \rightarrow z'$.

But this implies that $L$ occurs as a
topological minor in $\PI{I}$, since if $\sigma^{+}$ passes
by the edge $\{u,v\}$ in the torso of $S$, this edge can be replaced
by $path_T(u,v)$ which is a path from $u$ to $v$ in $T$, whose
existence was noted in the discussion above.  
Since this contradicts
our initial assumption, we can deduce that
$J$ does not occur in $\PI{I[S]}$ as a sub-pattern on two \emph{disjoint} triples.

We can therefore deduce that all pairs of triples of variables
$(x,y,z)$, $(x',y',z')$ for which $J$ occurs as a sub-pattern in
$\PI{I[S]}$ intersect, i.e., $\{x,y,z\} \cap \{x',y',z'\} \neq \emptyset$.
Now, consider an arbitrary
triple of variables $(x,y,z)$ on which $J$ occurs as a sub-pattern.
It follows that the instance which results after any instantiation (and removal) of
the three variables $x,y,z$ contains no occurrence of $J$ as a sub-pattern, since for
each triple of variables $(x',y',z')$ on which $J$ occurs in $I[S]$, at least
one of its variables has been eliminated by instantiation.

Thus, after instantiation of at most three variables, $\PI{I[S]}$
does not contain $J$ as a sub-pattern. This also holds for
any version of $I[S]$ obtained by instantiating the variables $u,v$.
As noted above, \ForbSP{J} is tractable.
We can therefore determine in polynomial time which instantiations of
$u,v$ can be extended to a solution of $I[S]$. 
We remove the pair $(p,q)$ from $R_{uv}$ in $I$
whenever the assignment of $p$ to $u$ and $q$ to $v$ 
cannot be extended to a solution to $I[S]$.
Finally, we delete all variables in $S$ from $I$ apart from $u$ and $v$. 
Proceeding in this way we construct $I'[T]$ in polynomial time as required.

(\textbf{A2}):\quad
Suppose, for a contradiction, that we introduce some occurrence of the pattern $L$ as a
topological minor in $\PI{I'[T]}$ when reducing $I$ to $I'[T]$. 
This occurrence of $L$ must use a newly-introduced edge in $I'[T]$.
During the reduction from $I$ to $I'[T]$, 
we can introduce negative (but not positive) edges in $\PI{I'[T]}$
between the parts corresponding to $u$ and $v$. 
Suppose that a negative edge $(p,q)$ is introduced by the reduction 
from $I$ to $I'[T]$. 
This can only be the case if there was a path $\pi = (u,w_1,\ldots,w_t,v)$ 
in the constraint graph $G_I[S]$ 
and hence a sequence of negative edges between the corresponding parts in $\PI{I[S]}$ 
linking $p$ to $q$.
This means that we can replace the newly-introduced edge 
in the occurrence of $L$ in $\PI{I'[T]}$ by a sequence of negative edges
so that $L$ occurs as a topological minor in $\PI{I}$ for the original instance $I$. 
This contradiction shows that we cannot introduce $L$ as a topological
minor in $\PI{I'[T]}$ when reducing $I$ to $I'[T]$.

Hence we have established both assumptions, so the result follows by our generic proof scheme.
Note that the number of instances of \ForbSP{J} that need to be solved
is $O(nd^5)$.
\end{proof}

As our final result in this section we show how the well-known tractable class
of acyclic instances can be generalised to obtain larger tractable classes
defined by forbidding the occurrence of certain patterns as topological minors. 
The main tool we use will again be the generic scheme based on Tutte decompositions
described above.
\begin{theorem} \label{thm:P2}
Let $P_0$ be any sub-pattern tractable pattern with three parts,
$U_1,U_2,U_3$ where there is at most one negative edge between $U_1$
and $U_2$, and between $U_2$ and $U_3$, and no edges between $U_1$ and $U_3$.

Let $P$ be a pattern with four parts $U_1,U_2,U_3,U_4$ obtained by extending $P_0$ as follows.
The pattern $P$ has six new points $p_1,p_2 \in U_1$, $q_1,q_2 \in U_4$, 
and $r_1,r_2 \in U_3$, together with three new
negative edges $\{p_1,r_1\}$, $\{p_2,q_1\}$, $\{q_2,r_2\}$ (see Figure~\ref{fig:newfromthm}).
Any such $P$ is topological-minor tractable.
\end{theorem}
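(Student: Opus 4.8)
The strategy is to apply the generic scheme based on Tutte decompositions exactly as in the proof of Theorem~\ref{thm:line}, so the entire task reduces to verifying assumptions (\textbf{A1}) and (\textbf{A2}) for this particular $P$. Let $I$ be an instance in which $P$ does not occur as a topological minor, let $G_I$ be its constraint graph, build a Tutte decomposition of $G_I$, and pick a leaf node $S$ inducing a separation $\tuple{S,T}$ with order-2 separator $\{u,v\}$; as the excerpt permits, I assume the torso of $S$ is three-connected with more than three vertices. The key structural observation I would isolate first is this: $P$ is $P_0$ together with a three-vertex ``connector'' gadget that, when its middle part $U_2$ is contracted along a path, forces an occurrence of $P$ as a topological minor. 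Concretely, the three new negative edges $\{p_1,r_1\},\{p_2,q_1\},\{q_2,r_2\}$ across parts $U_1,U_3,U_4$ mean that if $P_0$ occurs as a sub-pattern on variables realising $U_1,U_2,U_3$ and there is an additional variable (realising $U_4$) appropriately linked by paths in $G_I$ to the $U_1$ and $U_3$ variables avoiding the $U_2$ variable, then $P$ occurs as a topological minor in $\PI{I}$.

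**Establishing (\textbf{A1}).** I would argue, paralleling the ``two disjoint triples'' argument in Theorem~\ref{thm:line}, that two ``well-separated'' occurrences of $P_0$ inside $\PI{I[S]}$ are impossible. Suppose $P_0$ occurs as a sub-pattern on variables $(a,b,c)$ (realising $U_1,U_2,U_3$) and again on disjoint variables $(a',b',c')$. Since the torso of $S$ is three-connected, Menger's theorem gives three internally disjoint paths between suitable endpoints among $\{a,c\}$ and $\{a',c'\}$; at least one of them avoids both $b$ and $b'$, and hence (after replacing torso edges $\{x,y\}$ by $path_T(x,y)$ as in the generic scheme) yields a path in $G_I$ that, together with the two copies of $P_0$, supplies the connector structure needed to embed $P$ as a topological minor in $\PI{I}$ --- the second copy $(a',b',c')$ plays the role of the $U_4$-part after subdivision, because $U_4$ in $P$ carries only the two isolated negative edges $\{p_2,q_1\},\{q_2,r_2\}$, which a $U_1$- or $U_3$-like part of another $P_0$ copy can absorb. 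This contradiction forces all occurrences of $P_0$ in $\PI{I[S]}$ to pairwise intersect in a variable, so instantiating (and deleting) the at most three variables of one fixed occurrence destroys every occurrence of $P_0$; then $\ForbSP{P_0}$, which is tractable by hypothesis, lets us compute in polynomial time the set of $(p,q)\in R_{uv}$ extendable to a solution of $I[S]$, delete the pair of variables, and form $I'[T]$. The bookkeeping (at most $3$ instantiated variables plus $u,v$, so $O(nd^5)$ calls, and the subtlety that $P_0$ occurrences involving $u$ or $v$ must also be handled) mirrors Theorem~\ref{thm:line} and is routine.

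**Establishing (\textbf{A2}).** The reduction from $I$ to $I'[T]$ only adds negative (never positive) edges between the parts for $u$ and $v$, and each such new negative edge $(p,q)$ is certified by a path $u,w_1,\dots,w_t,v$ in $G_I[S]$ with a corresponding chain of negative edges in $\PI{I[S]}$ joining $p$ to $q$. So any occurrence of $P$ as a topological minor in $\PI{I'[T]}$ that uses a newly-introduced edge can be pulled back to an occurrence in $\PI{I}$ by expanding that edge into the chain of negative edges --- exactly the argument given for $L$ in Theorem~\ref{thm:line}. Here one must check that $P$ is a \emph{negative} pattern (all three new edges are negative, and $P_0$'s two edges are negative), so there is no obstruction arising from needing a positive edge where only negative ones were created; this is immediate from the statement.

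**Main obstacle.** The delicate point, and where I would spend the most care, is the argument in (\textbf{A1}) that a path avoiding the two ``middle'' variables genuinely yields $P$ --- rather than merely $P_0$ plus some path --- as a topological minor: I must match up the endpoints of the Menger path with the points $p_1,p_2\in U_1$, $r_1,r_2\in U_3$, $q_1,q_2\in U_4$ and the edges $\{p_1,r_1\},\{p_2,q_1\},\{q_2,r_2\}$ correctly, taking into account that subdivision of a negative edge in a pattern inserts \emph{two} new points (per Definition~\ref{def:subdivision}), and that ``no edges between $U_1$ and $U_3$'' in $P_0$ is what makes room for the subdivided connector path. Getting the correspondence exactly right, possibly with a small case analysis on which endpoints the avoiding subpath $\sigma$ connects, is the crux; everything else follows the template of Theorem~\ref{thm:line} and the generic scheme.
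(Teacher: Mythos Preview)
Your plan to use the generic Tutte-decomposition scheme is correct, and the overall shape of (\textbf{A2}) is right, but there are two genuine problems.

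\textbf{(A1) is the wrong argument.} You try to mimic Theorem~\ref{thm:line} by showing that \emph{two disjoint} occurrences of $P_0$ in $\PI{I[S]}$ force $P$ as a topological minor, and then conclude that all occurrences of $P_0$ pairwise intersect. But the Menger step you describe only yields \emph{one} path avoiding $b$ and $b'$, whereas the connector you attached to $P_0$ consists of \emph{two} independent $U_1$--$U_3$ routes: the direct negative edge $\{p_1,r_1\}$ and the length-two path $\{p_2,q_1\},\{q_2,r_2\}$ through $U_4$. A single avoiding path together with a spare copy of $P_0$ does not supply both of these, and your remark that ``the second copy plays the role of the $U_4$-part'' does not explain where the second $a$--$c$ path (realising the subdivided $\{p_1,r_1\}$) comes from. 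The paper's argument is simpler and avoids this difficulty entirely: a \emph{single} occurrence of $P_0$ on $(x,y,z)$ in a three-connected torso already forces $P$ as a topological minor, because Menger gives three internally disjoint $x$--$z$ paths, at least two of which avoid $y$; one of these realises the subdivided edge $\{p_1,r_1\}$ and the other (necessarily of length $\geq 2$) realises the subdivided path $U_1$--$U_4$--$U_3$. Hence $P_0$ does not occur in $\PI{I[S]}$ at all, and one solves $I[S]$ directly in $\ForbSP{P_0}$ for each of the $O(d^2)$ instantiations of $u,v$, with no need for the ``instantiate three more variables'' bookkeeping.

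\textbf{(A2) has a factual error.} You justify the pull-back by asserting that $P$ is a negative pattern because ``$P_0$'s two edges are negative''. This is false: the hypothesis only bounds the number of negative edges and says nothing about positive edges, and indeed most of the examples in Figure~\ref{fig:newfromthm} have positive edges in $P_0$. The paper's argument therefore also checks that any positive edge between the parts of $u$ and $v$ surviving in $I'[T]$ corresponds to a solution of $I[S]$, and hence to a chain of positive edges along the same path $\pi$ through $S$; combined with the observation that $P$ has at most one negative edge between any two parts, this is what lets the occurrence in $\PI{I'[T]}$ be lifted to one in $\PI{I}$.
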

\thicklines \setlength{\unitlength}{0.7pt}
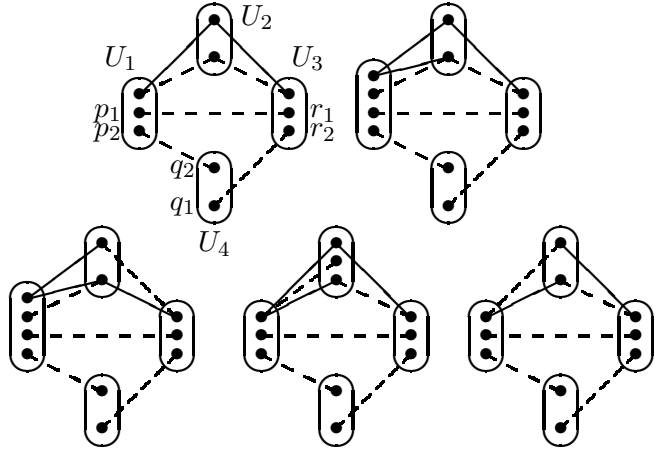
\begin{figure}[ht]
\centering

\begin{picture}(360,250)(0,0)

\put(60,130){\begin{picture}(100,120)(0,0)
\put(0,40){\usebox{\varthree}} \put(40,0){\usebox{\vartwo}}
\put(80,40){\usebox{\varthree}} \put(40,80){\usebox{\vartwo}}
\put(50,110){\line(1,-1){40}} \put(10,70){\line(1,1){40}}
\dashline[50]{7}(10,70)(50,90) \dashline[50]{7}(50,90)(90,70)
\dashline[50]{7}(10,60)(90,60) \dashline[50]{7}(10,50)(50,30)
\dashline[50]{7}(50,10)(90,50)
\put(0,90){\makebox(0,0){$U_1$}}
\put(100,90){\makebox(0,0){$U_3$}}
\put(73,112){\makebox(0,0){$U_2$}}
\put(50,-10){\makebox(0,0){$U_4$}}
\put(-7,50){\makebox(0,0){$p_2$}}
\put(-7,60){\makebox(0,0){$p_1$}}
\put(108,50){\makebox(0,0){$r_2$}}
\put(108,60){\makebox(0,0){$r_1$}}
\put(33,30){\makebox(0,0){$q_2$}}
\put(33,10){\makebox(0,0){$q_1$}}

\end{picture}}

\put(180,130){
\begin{picture}(100,120)(0,0)
\put(0,40){\usebox{\varfour}} \put(40,0){\usebox{\vartwo}}
\put(80,40){\usebox{\varthree}} \put(40,80){\usebox{\vartwo}}
\put(50,110){\line(1,-1){40}} \put(10,80){\line(4,1){40}}
\put(10,80){\line(4,3){40}} \dashline[50]{7}(10,70)(50,90)
\dashline[50]{7}(50,90)(90,70) \dashline[50]{7}(10,60)(90,60)
\dashline[50]{7}(10,50)(50,30) \dashline[50]{7}(50,10)(90,50)
\end{picture}}

\put(0,10){\begin{picture}(100,120)(0,0)
\put(0,40){\usebox{\varfour}} \put(40,0){\usebox{\vartwo}}
\put(80,40){\usebox{\varthree}} \put(40,80){\usebox{\vartwo}}
\put(50,90){\line(2,-1){40}} \put(10,80){\line(4,1){40}}
\put(10,80){\line(4,3){40}} \dashline[50]{7}(10,70)(50,90)
\dashline[50]{7}(50,110)(90,70) \dashline[50]{7}(10,60)(90,60)
\dashline[50]{7}(10,50)(50,30) \dashline[50]{7}(50,10)(90,50)
\end{picture}}

\put(120,10){
\begin{picture}(100,120)(0,0)
\put(0,40){\usebox{\varthree}} \put(40,0){\usebox{\vartwo}}
\put(80,40){\usebox{\varthree}} \put(40,80){\usebox{\varthree}}
\put(10,70){\line(1,1){40}} \put(50,110){\line(1,-1){40}}
\put(10,70){\line(2,1){40}} \dashline[50]{7}(10,70)(50,100)
\dashline[50]{7}(50,90)(90,70) \dashline[50]{7}(10,60)(90,60)
\dashline[50]{7}(10,50)(50,30) \dashline[50]{7}(50,10)(90,50)
\end{picture}}

\put(240,10){
\begin{picture}(100,120)(0,0)
\put(0,40){\usebox{\varthree}} \put(40,0){\usebox{\vartwo}}
\put(80,40){\usebox{\varthree}} \put(40,80){\usebox{\vartwo}}
\put(50,110){\line(1,-1){40}} \put(10,70){\line(2,1){40}}
\dashline[50]{7}(10,70)(50,110) \dashline[50]{7}(50,90)(90,70)
\dashline[50]{7}(10,60)(90,60) \dashline[50]{7}(10,50)(50,30)
\dashline[50]{7}(50,10)(90,50)
\end{picture}}

\end{picture}

\caption{Topological-minor tractable patterns derived from sub-pattern tractable
patterns.}
\label{fig:newfromthm}
\end{figure}
\begin{proof}
The proof uses the generic scheme described in this section, so we
only need to establish the two assumptions.

(\textbf{A1})\quad
Suppose first that $P_{0}$ occurs as a sub-pattern in $\PI{I[S]}$ on the
triple of variables $(x,y,z)$. As explained above, when using the generic scheme
we will assume that the torso of $S$ is three-connected. Then, by Menger's theorem
there are three disjoint paths $\pi_1, \pi_2, \pi_3$ from $x$ to $z$
in the torso of $S$. 
Hence there must be two of these paths, say $\pi_1$ and $\pi_2$,
which do not pass through $y$. But this implies that $P$ occurs as a
topological minor in $\PI{I}$, since if either $\pi_1$ or $\pi_2$ passes
through the edge $\{u,v\}$ in the torso of $S$, this edge can be replaced
by $path_T(u,v)$ which is a path from $u$ to $v$ in $G_I[T]$, whose
existence was shown in the discussion of the generic scheme above. 
Since this contradicts our initial assumption, we can assume that $P_0$ does
{\em not} occur as a sub-pattern in $\PI{I[S]}$. This also holds for any
sub-problem of $I[S]$ obtained by instantiating the variables $u,v$.
Therefore, by the sub-pattern tractability of $P_{0}$, we can determine
in polynomial time which instantiations of $u,v$ can be extended to a
solution of $I[S]$. 
We remove the pair $(p,q)$ from $R_{uv}$ in $I$
whenever the assignment of $p$ to $u$ and $q$ to $v$ 
cannot be extended to a solution to $I[S]$.
Finally, we delete all variables in $S$ from $I$ except for $u$ and $v$. 
Proceeding in this way we construct $I'[T]$ in polynomial time, as required.

(\textbf{A2})\quad
Suppose, for a
contradiction, that we introduce the pattern $P$ as a topological
minor of $\PI{I'[T]}$ when reducing $I$ to $I'[T]$. 
This occurrence of $P$ must use a newly-introduced negative edge.
Observe that, by definition, $P$ contains at
most one negative edge between any two parts. 
Suppose that a
negative edge $(p,q)$ is introduced by the reduction from $I$ to $I'[T]$. 
This can only be the case if there was a path 
$\pi = (u,w_1,\ldots,w_t,v)$ in the constraint graph $G_I[S]$ 
and hence a sequence of negative edges between the corresponding parts in $\PI{I[S]}$ 
linking $p$ to $q$.
Furthermore, in $I'[T]$, if there is a positive edge $(p',q')$
between the parts corresponding to $u$ and $v$
then there is necessarily a solution to $I[S]$ including the assignments 
$p'$ to $u$ and $q'$ to $v$ (and hence a solution on the subinstance $I[\pi]$ of 
$I[S]$ on the path $\pi = (u,w_1,\ldots,w_t,v)$ in $I[S]$). 
This means that we can replace the edge $(p,q)$ in the occurrence of 
$P$ in $I'[T]$ by a sequence of negative edges 
so that $P$ occurs as a topological minor in $\PI{I}$ for the original instance $I$.
This contradiction shows that we cannot introduce an occurrence of $P$ as a
topological minor in $\PI{I'[T]}$ when reducing $I$ to $I'[T]$.

Hence we have established both assumptions, so the result follows by our generic proof scheme.
Note that the number of instances of $\ForbSP{P_0}$ that need to be solved is $O(nd^2)$.
\end{proof}

By~\cite[Theorem 1]{Cooper15:dam}, all sub-pattern tractable patterns $P_0$ satisfying the conditions of 
Theorem~\ref{thm:P2} can be reduced to sub-patterns of one of five specific patterns.
Extending each of these to a pattern $P$ as described in Theorem~\ref{thm:P2}
gives the five topological-minor tractable patterns shown in Figure~\ref{fig:newfromthm}. 
For each of these patterns $P$,
the pattern shown in Figure~\ref{fig:acyclic} occurs as a sub-pattern
and hence as a topological minor of $P$. Thus, by the transitivity of
occurrence as a topological minor, each tractable class
\ForbTM{P} necessarily contains all acyclic binary CSP instances.

\section{Detection of topological minors}\label{sec:detecting}

For every fixed undirected graph $H$, there is an $O(n^3)$ time
algorithm that tests, given a graph $G$ with $n$ vertices, if $H$ is a topological
minor of $G$~\cite{DBLP:conf/stoc/GroheKMW11}.

However, for detecting topological minors in {\em patterns} the situation is
different. 
Characterising all patterns $P$ for which it is possible to decide in polynomial time whether
$P$ occurs as a topological minor in a given pattern $P'$
remains an open problem. However, we have the following partial results.

By Lemma~\ref{lem:graphtopminor}, deciding whether a negative pattern 
of the form $\PG{G}$ for some graph $G$ occurs as a topological
minor in a pattern $P'$ amounts to detecting whether  
$G$ is a topological minor of the constraint graph of $P'$, and hence
can be achieved in polynomial time~\cite{DBLP:conf/stoc/GroheKMW11}.
By Proposition~\ref{prop:star},
deciding whether a star-like negative pattern occurs as a topological minor in an instance 
can also be achieved in
polynomial time because this is equivalent to deciding whether it
occurs as a sub-pattern, which is achievable in polynomial time by exhaustive
search.
\begin{proposition}
For each of the patterns $J$, $K$ or $L$ shown in Figure~\ref{fig:new}, deciding whether that 
pattern occurs as a topological minor in a given instance $I$ can be done in polynomial
time.
\end{proposition}
\begin{proof}
The pattern $J$ shown in Figure~\ref{fig:new} is star-like, and hence the result
follows from the observation just made.
For the pattern $K$ shown in Figure~\ref{fig:new} it is sufficient to discover 
by exhaustive search all occurrences of $J$ as a sub-pattern of $\PI{I}$ 
on the three parts corresponding to the triple of variables $(x,y,z)$ in $I$, 
with $y$ being the variable at which the two negative edges meet, 
and then check for each one whether $x$ and $z$ are connected in $\CG{I} \setminus y$.
 
For the pattern $L$ shown in Figure~\ref{fig:new} it is sufficient to consider all 
pairs of occurrences of $J$ as a sub-pattern of $\PI{I}$ on parts corresponding to 
$(x,y,z)$ and $(x',y',z')$
(where the negative edges meet in parts $y$ and $y'$).
We can then check that either $(y,z)$ and $(x',y')$ coincide, or $z$ and $x'$ coincide,
or $z$ and $x'$ are connected by a path in $\CG{I}$
that does not pass through any of the parts $x,y,y',z'$.
\end{proof}

For each of the patterns shown in Figure~\ref{fig:newfromthm} the complexity of deciding whether it
occurs as a topological minor in a given instance $I$ is currently unknown.
However, in polynomial time we can build a Tutte decomposition for $I$ and decide whether
each of the sub-problems associated with its nodes are members of 
\ForbSP{P_0} for the appropriate pattern $P_0$, and this is the only condition required 
to solve $I$ in polynomial time using the algorithm described in the proof of Theorem~\ref{thm:P2}. 

Our next result shows that for some patterns (such as the 4-part pattern $M$ shown in
Figure~\ref{fig:patternPXX}), it is $\coNP$-complete to determine whether
the pattern occurs as a topological minor in an arbitrary given pattern.

\begin{figure}[ht]
\thicklines \setlength{\unitlength}{0.7pt}
\centering

\begin{picture}(300,80)(0,0)

\put(80,0){
\begin{picture}(140,50)(0,-20)
\put(0,0){\usebox{\vartwo}} \put(40,0){\usebox{\vartwo}}
\put(80,0){\usebox{\vartwo}} \put(120,0){\usebox{\vartwo}}
\put(10,30){\line(1,0){40}} \put(10,30){\line(2,-1){40}}
\put(10,10){\line(2,1){40}} \put(50,30){\line(1,0){40}}
\put(90,30){\line(1,0){40}} \put(90,30){\line(2,-1){40}}
\put(90,10){\line(2,1){40}} \dashline[50]{7}(10,10)(50,10)
\dashline[50]{7}(90,10)(130,10)
\put(75,-20){\makebox(0,0){$M$}}
\end{picture}
}
\end{picture}
\caption{A pattern that is coNP-complete to detect as a topological minor.}
\label{fig:patternPXX}
\end{figure}
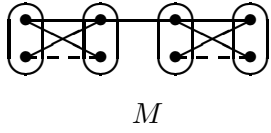

\begin{figure}[ht]
\thicklines \setlength{\unitlength}{0.7pt}
\centering

\begin{picture}(300,380)(0,-40)

\put(0,170){
\begin{picture}(300,130)(-20,-25)
\put(0,20){\usebox{\varone}} \put(40,0){\usebox{\vartwo}}
\put(80,0){\usebox{\vartwo}}
\put(140,0){\usebox{\vartwo}}
\put(200,0){\usebox{\vartwo}}
\put(40,50){\usebox{\vartwo}} \put(80,50){\usebox{\vartwo}}
\put(140,50){\usebox{\vartwo}}
\put(200,50){\usebox{\vartwo}} \put(240,20){\usebox{\varone}}
\put(10,40){\line(2,1){40}} \put(10,40){\line(4,-3){40}}

\put(50,60){\line(1,0){60}}
\thinlines \dottedline{2}(110,60)(125,60) \thicklines
\put(125,60){\line(1,0){45}}
\thinlines \dottedline{2}(170,60)(185,60) \thicklines
\put(185,60){\line(1,0){25}}

\put(50,10){\line(1,0){60}}
\thinlines \dottedline{2}(110,10)(125,10) \thicklines
\put(125,10){\line(1,0){45}}
\thinlines \dottedline{2}(170,10)(185,10) \thicklines
\put(185,10){\line(1,0){25}}

\put(210,60){\line(2,-1){40}} \put(210,10){\line(4,3){40}}
\put(-13,30){\makebox(0,0){$p_{i-1}$}}
\put(270,30){\makebox(0,0){$p_{i}$}}
\put(40,99){\makebox(0,0){$v_{i1}$}}
\put(80,99){\makebox(0,0){$v_{i2}$}}
\put(190,99){\makebox(0,0){$v_{im}$}}
\put(40,-10){\makebox(0,0){$\overline{v}_{i1}$}}
\put(80,-10){\makebox(0,0){$\overline{v}_{i2}$}}
\put(190,-10){\makebox(0,0){$\overline{v}_{im}$}}
\put(130,-28){\makebox(0,0){(a)}}
\end{picture}}

\put(0,0){
\begin{picture}(150,140)(-30,0)
\put(0,50){\usebox{\varone}} \put(40,0){\usebox{\vartwo}}
\put(40,50){\usebox{\vartwo}} \put(40,100){\usebox{\vartwo}}
\put(80,50){\usebox{\varone}} \put(10,70){\line(1,-1){40}}
\put(10,70){\line(4,1){40}} \put(10,70){\line(2,3){40}}
\put(50,30){\line(1,1){40}} \put(50,80){\line(4,-1){40}}
\put(50,130){\line(2,-3){40}}
\put(-25,60){\makebox(0,0){$p_{n+r-1}$}}
\put(118,60){\makebox(0,0){$p_{n+r}$}}
\put(30,120){\makebox(0,0){$\overline{v}_{jr}$}}
\put(31,65){\makebox(0,0){$v_{kr}$}}
\put(30,20){\makebox(0,0){$\overline{v}_{\ell r}$}}
\put(50,-20){\makebox(0,0){(b)}}
\end{picture}}

\put(200,85){
\begin{picture}(90,50)(-20,-10)
\put(0,0){\usebox{\vartwo}} \put(40,0){\usebox{\vartwo}}
\put(10,30){\line(1,0){40}} \put(10,30){\line(2,-1){40}}
\put(10,10){\line(2,1){40}}
\put(74,20){\makebox(0,0){$p_{0}$}}
\put(-10,20){\makebox(0,0){$u$}}
\put(30,-20){\makebox(0,0){(c)}}
\end{picture}}

\put(200,0){
\begin{picture}(90,50)(-20,0)
\put(0,0){\usebox{\vartwo}} \put(40,0){\usebox{\vartwo}}
\put(10,30){\line(1,0){40}} \put(10,30){\line(2,-1){40}}
\put(10,10){\line(2,1){40}}
\put(72,20){\makebox(0,0){$w$}}
\put(-22,20){\makebox(0,0){$p_{n+m}$}}
\put(30,-20){\makebox(0,0){(d)}}
\end{picture}}

\end{picture}

\caption{The building blocks for the CSP
instance $I$ constructed in the proof of Theorem~\ref{thm:PXX}.} 
\label{fig:npc}
\end{figure}
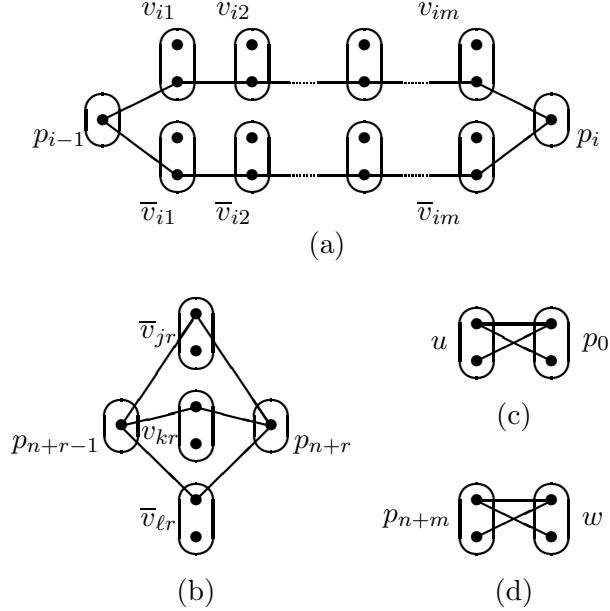

\begin{theorem}
\label{thm:PXX}
The problem of deciding $I \in \ForbTM{M}$ is $\coNP$-complete.
\end{theorem}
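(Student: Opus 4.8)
The plan is to prove both directions of $\coNP$-completeness by arguing that the complementary decision problem --- given $I$, does $\TM{M}{\PI{I}}$ hold? --- is $\NP$-complete. For membership of this complementary problem in $\NP$, the natural certificate is a subdivision $M'$ of $M$ together with a part-preserving homomorphism $h\colon M'\to\PI{I}$, and the only thing to check is that such a certificate may be taken of polynomial size. Since $h$ preserves parts it induces an \emph{injection} from the parts of $M'$ into the parts of $\PI{I}$ (the variables of $I$), so $M'$ has at most $|V|$ parts. A single subdivision operation performed at a pair of parts carrying $p$ positive and $q$ negative edges creates exactly one new part, of $p+2q$ points, and leaves $p$ positive and $q$ negative edges between each pair of consecutive parts of the resulting chain; since every pair of parts of $M$ carries at most three positive and at most one negative edge, every part of every subdivision of $M$ has at most five points, so $M'$ has at most $5|V|$ points. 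Hence the certificate is polynomial and polynomial-time checkable, and $I\in\ForbTM{M}$ lies in $\coNP$.

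For $\coNP$-hardness I would reduce from $3$-SAT, building from a $3$-CNF formula $\phi$ with variables $x_1,\dots,x_n$ and clauses $C_1,\dots,C_m$ a binary CSP instance $I=I_\phi$ such that $\phi$ is satisfiable if and only if $\TM{M}{\PI{I}}$. The instance $I_\phi$ is a ``chain'' assembled from the gadgets of Figure~\ref{fig:npc}, threaded through the variables $p_0,p_1,\dots,p_{n+m}$: the two ends of the chain carry the anchor gadgets of Figure~\ref{fig:npc}(c),(d) on extra leaf variables $u$ and $w$, each reproducing, between two variables, the configuration of three positive and one negative edge found at each end of $M$; between $p_{i-1}$ and $p_i$ for $1\le i\le n$ we place the variable block of Figure~\ref{fig:npc}(a), through which a positive path from $p_{i-1}$ to $p_i$ must follow one of two parallel strands (running through the variables $v_{ij}$, or through the variables $\overline v_{ij}$), thereby recording a truth value for $x_i$; and between $p_{n+r-1}$ and $p_{n+r}$ for $1\le r\le m$ we place the clause block of Figure~\ref{fig:npc}(b), whose three strands pass through the literal variables occurring in $C_r$. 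The crucial design feature is that a variable $v_{ij}$ (resp.\ $\overline v_{ij}$) is shared between variable block $i$ and clause block $j$ exactly when $x_i$ occurs positively (resp.\ negatively) in $C_j$, so that a \emph{simple} positive path threading the whole chain can use the strand of a literal of $C_j$ only if the corresponding strand of the relevant variable block was left unused --- i.e.\ only if that literal is satisfied.

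The forward direction is then routine: a satisfying assignment of $\phi$ prescribes which strand to take in each variable block, which leaves a satisfying literal-strand free in each clause block, yielding a simple positive path from $p_0$ to $p_{n+m}$; subdividing the single positive edge of $M$ between its two middle parts along this path and mapping the two ends of $M$ onto the two anchor gadgets gives a part-preserving homomorphism from a subdivision of $M$ into $\PI{I}$. The converse direction is the main obstacle. One must show that \emph{any} part-preserving homomorphism from a subdivision of $M$ into $\PI{I}$ is forced into this intended shape. The key facts are again that part-preservation makes $h$ injective on parts, so each subdivided edge-bundle of $M$ maps to a simple path of distinct variables of $I_\phi$; combined with the structure of the anchor gadgets at the leaves $u$ and $w$, this should pin the images of the two end parts of $M$ to those gadgets and force the connecting positive path of $M$ to run through the entire chain $p_0,\dots,p_{n+m}$, at which point its simplicity forces it to encode a single truth assignment satisfying every clause. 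The delicate part --- and where most of the work lies --- is ruling out all alternative ways a subdivision of $M$ could embed: in particular, the two negative edges of $M$ may themselves be subdivided into ``broken'' chains of negative edges, and the three positive edges at each end of $M$ may be stretched and partly collapsed along the chain, so the argument must verify that the gadgets of Figure~\ref{fig:npc} admit no such spurious embedding. Together with the $\coNP$ membership above, this establishes the theorem.
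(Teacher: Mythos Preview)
Your plan follows the paper's proof almost exactly: the same reduction from 3-SAT using the gadgets of Figure~\ref{fig:npc}, threaded through the chain $p_0,\dots,p_{n+m}$ with variable blocks, clause blocks, and two anchor gadgets at the ends. Your treatment of $\coNP$ membership is actually more careful than the paper's (which simply asserts it). One small slip: you have the polarity of the shared literal-variables reversed relative to the paper --- in the paper's convention $\overline{v}_{ir}$ (not $v_{ir}$) is the variable re-used in the clause block when $x_i$ occurs positively in $C_r$ --- but this is only a naming convention and does not affect correctness.

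Where your proposal stops short is precisely the point you flag as ``the delicate part''. You anticipate a case analysis over all ways a subdivision of $M$ could embed, worrying in particular about subdividing the two end bundles and about stretched or collapsed positive edges. The paper dispatches this with a single structural observation rather than a case analysis: in the completed instance $\PI{I}$, the \emph{only} pairs of parts joined by more than one positive edge are $\{u,p_0\}$ and $\{p_{n+m},w\}$. Since each end part of $M$ (and of any subdivision of $M$) sends positive edges from both of its two points into the adjacent part, a part-preserving homomorphism must place the two ends of $M$ onto these two anchor pairs. This immediately reduces the question to whether there is a path of positive edges from $p_0$ to $p_{n+m}$ that visits each part at most once --- exactly the encoding of a satisfying assignment you describe. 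So the missing ingredient in your plan is not a long verification but this one-line count of positive edges in $\PI{I}$; once you have it, the converse direction is short.
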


\begin{proof}
The problem is clearly in $\coNP$, so it suffices to give a reduction
from 3-SAT to the complement of the problem of deciding $I \in \ForbTM{M}$. 

Let $I_{SAT}$ be an instance of 3-SAT with variables
$x_1,\ldots,x_n$ and clauses $C_1,\ldots,C_m$. We will create a binary CSP
instance $I$ with variables  $\{u,w\} \cup \{p_i\mid i=0\dots n+m\}
\cup \{v_{ir},\overline{v}_{ir}\mid i=1\dots n, r=1\dots m\}$, such
that determining whether $\TM{M}{\PI{I}}$ is equivalent to deciding whether 
$I_{SAT}$ has a solution.  The instance $I$ that we create will be Boolean 
in the sense that all variables will have domain size at most two.  
(In fact all the variables $p_i$, except for $p_0$ and $p_{n+m}$, will have single-valued domains.)

Consider the patterns shown in Figure~\ref{fig:npc}, where each part is labelled
with a variable of $I$.   
Using these patterns we build a complete pattern corresponding to the instance $I$, as follows:
\begin{itemize}
\item
For each variable $x_i$ in $I_{SAT}$ we include a pattern $P_{x_i}$
of the form shown in Figure~\ref{fig:npc}(a).
\item
For each clause $C_r$ in $I_{SAT}$ we include a pattern $P_{C_r}$ 
of the form shown in Figure~\ref{fig:npc}(b), 
where the choice of variables for the three central parts 
depends on the literals in the clause $C_r$ in the following way: 
variable $v_{ir}$ corresponds to
$\neg x_i$ occurring in clause $C_r$ and variable $\overline{v}_{ir}$
corresponds to $x_i$ occurring in clause $C_r$.  
That is, the example shown in Figure~\ref{fig:npc}(b) would correspond to the
clause $x_j \vee \neg x_k \vee x_\ell$. 
\item
We also include the pattern shown in Figure~\ref{fig:npc}(c) and the pattern
shown in Figure~\ref{fig:npc}(d);
\item Finally, we complete the resulting pattern to obtain $\PI{I}$ 
by adding negative edges between all pairs of points in distinct parts that are not already 
directly connected by a positive or negative edge.
\end{itemize}

The only pairs of parts in $\PI{I}$ that are connected by more than one
positive edge are $\{u,p_0\}$ and $\{p_{n+m},w\}$.
So, if $M$ occurs as a topological minor in $\PI{I}$, then the 
points of $M$ must map injectively to these two pairs of parts.
Therefore, deciding whether $M$ occurs as a topological minor in $\PI{I}$ is
equivalent to deciding whether there is a path $\pi$ of positive
edges from $p_0$ to $p_{n+m}$ in $\PI{I}$
which passes through each part at most once. 

Any such path $\pi$ must pass through the points $p_0,p_1,\ldots,p_{n+m}$ in this order,
because the positive edges in $P_{x_i}$ ($1 \leq i \leq n$) 
use different points in each part (shown as the bottom of
the two points in Figure~\ref{fig:npc}) 
from the positive edges in $P_{C_r}$ ($1 \leq r \leq m$)
(which use the top points), so there are no short-cuts.

If such a path $\pi$ exists, then for each variable $x_i$ of
$I_{SAT}$, the path $\pi$ must select in $P_{x_i}$ either the upper path through
variables $v_{ir}$ ($r=1,\ldots,m$) or the lower path through
variables $\overline{v}_{ir}$ ($r=1,\ldots,m$). 
Thus $\pi$ selects a truth value for each variable $x_i$: TRUE if $\pi$ follows
the upper of these two paths, FALSE otherwise. 

Moreover, for each clause $C_r$ in $I_{SAT}$ the path $\pi$
must pass from $p_{n+r-1}$ to $p_{n+r}$ by one of the three paths in $P_{C_r}$ 
without passing through parts that have
been already used by $\pi$. Thus, for $\pi$ to exist it must have
already assigned TRUE to one of the literals of the clause $C_r$.

It follows that $M$ occurs as a topological minor of $\PI{I}$
if and only if $\PI{I}$ has an appropriate path of positive edges, which occurs if
and only if $I_{SAT}$ is satisfiable.
\end{proof}

The instance $I$ in the proof of Theorem~\ref{thm:PXX} is clearly
inconsistent since there are some constraint relations which are
empty. An instance is said to be \emph{globally consistent} if each
variable-value assignment $\tuple{v_i,a}$ can be extended to a
solution. We now give another example of a pattern which is
$\coNP$-complete to detect as a topological minor even in
globally-consistent instances. 

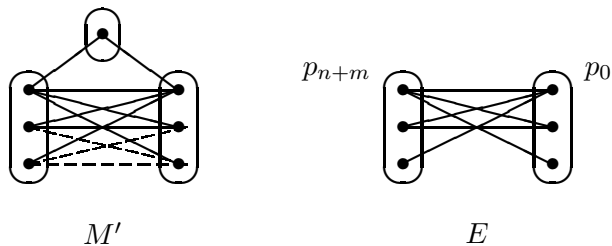
\begin{figure}[ht]
\centering
\thicklines \setlength{\unitlength}{0.7pt}
\newsavebox{\varthreebig}
\savebox{\varthreebig}(20,60){
\begin{picture}(20,60)(2,0)
\put(10,30){\oval(20,60)} \put(10,10){\makebox(0,0){$\bullet$}}
\put(10,30){\makebox(0,0){$\bullet$}} \put(10,50){\makebox(0,0){$\bullet$}}
\end{picture}}

\begin{picture}(350,130)(0,0)
\put(0,0){
\begin{picture}(150,130)(0,10)
\put(20,40){\usebox{\varthreebig}} \put(100,40){\usebox{\varthreebig}}
\put(60,100){\usebox{\varone}}
\put(30,90){\line(4,3){40}}  \put(110,90){\line(-4,3){40}}
\put(30,90){\line(1,0){80}} \put(30,70){\line(1,0){80}}
\put(30,90){\line(4,-1){80}} \put(30,70){\line(4,1){80}}
\put(30,90){\line(2,-1){80}} \put(30,50){\line(2,1){80}}
\thinlines
 \dashline[50]{7}(30,50)(115,50)
 \dashline[50]{7}(30,70)(115,50) \dashline[50]{7}(30,50)(115,70)
\thicklines \put(70,13){\makebox(0,0){$M'$}}
\end{picture}}

\put(200,0){
\begin{picture}(150,130)(0,10)
\put(20,40){\usebox{\varthreebig}} \put(100,40){\usebox{\varthreebig}}
\put(30,90){\line(1,0){80}} \put(30,70){\line(1,0){80}}
\put(30,90){\line(4,-1){80}} \put(30,70){\line(4,1){80}}
\put(30,90){\line(2,-1){80}} \put(30,50){\line(2,1){80}}
\put(135,100){\makebox(0,0){$p_{0}$}}
\put(-5,100){\makebox(0,0){$p_{n+m}$}}
\put(70,13){\makebox(0,0){$E$}}
\end{picture}}
\end{picture}

\caption{The pattern $M'$ and one of the building blocks for the
globally-consistent instance $I'$ in which detecting it is $\coNP$-complete.}
\label{fig:patternPX}
\end{figure}

\begin{theorem}
\label{thm:PX}
The problem of deciding $I \in \ForbTM{M'}$ 
for globally-consistent instances $I$ is $\coNP$-complete.
\end{theorem}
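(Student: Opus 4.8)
The plan is to mimic the structure of the proof of Theorem~\ref{thm:PXX}, but to modify the building blocks so that the resulting instance $I'$ is globally consistent while still encoding a 3-SAT instance in a way that forces a topological-minor occurrence of $M'$ precisely when the formula is satisfiable. Membership in $\coNP$ is immediate since checking whether a given mapping witnesses $\TM{M'}{\PI{I'}}$ is polynomial, so the work is all in the reduction from 3-SAT. First I would observe that $M'$ differs from $M$ by having three points in each of its two outer parts (rather than two), with extra positive and negative edges; the point of this is that $M'$ demands \emph{two} vertex-disjoint positive paths between the outer parts (plus the negative-edge structure), which gives us enough room to route a ``dummy'' path that makes every variable-value assignment extendable, thereby restoring global consistency without destroying the encoding.

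The core of the construction reuses the variable-gadget of Figure~\ref{fig:npc}(a) and the clause-gadget $E$ shown in Figure~\ref{fig:patternPX}, chained together through the articulation points $p_0,p_1,\dots,p_{n+m}$ exactly as before, so that a positive path from $p_0$ to $p_{n+m}$ that visits each part at most once corresponds to a satisfying assignment of $I_{SAT}$. The key new step is to argue two things: (1) that each individual building block is globally consistent and that gluing them at the $p_i$ points preserves global consistency --- here I would exhibit, for each variable-value pair, an explicit solution, using the freedom in the outer parts of $M'$ and the fact that the gadgets of Figure~\ref{fig:patternPX} have no empty constraint relations (unlike the gadgets in the proof of Theorem~\ref{thm:PXX}); and (2) that $M'$ maps into $\PI{I'}$ only by sending its outer parts to the two parts of $I'$ that carry the richer positive/negative edge pattern (the analogues of $\{u,p_0\}$ and $\{p_{n+m},w\}$ in the previous proof), so that detecting $M'$ as a topological minor again reduces to the existence of the desired path. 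I would then conclude, as in Theorem~\ref{thm:PXX}, that $\TM{M'}{\PI{I'}}$ iff $I_{SAT}$ is satisfiable, so $I' \notin \ForbTM{M'}$ iff $I_{SAT} \in$ 3-SAT, giving $\coNP$-hardness of membership in $\ForbTM{M'}$.

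The main obstacle I expect is reconciling the two competing demands on the gadgets: they must be rich enough that a path-threading argument still pins $M'$'s image to the intended parts and forces the path to traverse the variable and clause gadgets in order (no short-cuts), yet tame enough that every variable-value assignment extends to a global solution. In the proof of Theorem~\ref{thm:PXX} the empty relations were precisely what made the short-cut analysis clean, so removing them risks both creating spurious homomorphic images of $M'$ and allowing the path to cheat. I would handle this by carefully choosing which pairs of points receive positive versus negative edges when completing the patterns of Figure~\ref{fig:patternPX} to $\PI{I'}$: negative edges between the ``upper'' and ``lower'' copies within a part block the short-cuts, while a single extra value in each relevant domain provides the universal extension needed for global consistency, and the extra points in $M'$'s outer parts (together with $M'$'s own negative edges) ensure that no small part of $\PI{I'}$ other than the designated outer parts can host $M'$. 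Verifying that these choices simultaneously satisfy all requirements --- especially that the added global-consistency values cannot themselves be used to build an unintended positive path realizing $M'$ --- is the delicate, case-heavy part of the argument, but it is routine once the right gadget is fixed.
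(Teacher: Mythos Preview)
Your high-level plan (reduce from 3-SAT, reuse the path-threading encoding, then patch the instance to achieve global consistency) is the right one, but two concrete missteps would prevent the argument from going through as you describe it.

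First, you have misread the role of the pattern $E$ in Figure~\ref{fig:patternPX}. It is not a clause gadget: its two parts are labelled $p_0$ and $p_{n+m}$, and in the paper's construction it \emph{replaces} the two endpoint gadgets of Figures~\ref{fig:npc}(c) and~\ref{fig:npc}(d). The clause gadgets from Figure~\ref{fig:npc}(b) are kept unchanged. Relatedly, your description of $M'$ as having ``two outer parts'' that force ``two vertex-disjoint positive paths'' is off: $M'$ has three parts (two three-point parts and one singleton part), and the relevant subdivided positive path is the one through the singleton apex, realised in $I'$ as a path of positive edges from $p_0$ to $p_{n+m}$ of length greater than one. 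The dense positive edge structure between the two three-point parts of $M'$ is there so that those parts can only map to the two parts of $E$; it does not encode a second disjoint path.

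Second, your proposed mechanism for global consistency --- ``a single extra value in each relevant domain provides the universal extension'' --- fails for exactly the reason you flag but then dismiss as routine. A value in each domain that is compatible with all other assignments would give a positive path from the top point of $p_0$ to the top point of $p_{n+m}$ through these universal values regardless of whether $I_{SAT}$ is satisfiable, so $M'$ would always occur as a topological minor and the reduction would collapse. The paper's fix is more delicate: for every original assignment $\tuple{v,a}$ it adds, to the domain of each other variable $v'$, a \emph{fresh} value $b(v,a,v')$ that is compatible only with $\tuple{v,a}$ and with the other new values $b(v,a,v'')$, and incompatible with everything else. These new values form isolated ``solution cliques'', one per original assignment, which guarantee global consistency but cannot be threaded into a positive path between the designated top points of $p_0$ and $p_{n+m}$, so the equivalence with satisfiability survives.
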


\begin{proof}
We use a very similar construction to the one used in the proof of Theorem~\ref{thm:PXX}.
Let $I$ be the instance constructed in that proof. Let $I'$ be identical to $I$ except that:
\begin{itemize}
\item we replace the sub-instances obtained from the patterns
shown in Figure~\ref{fig:npc}(c) and Figure~\ref{fig:npc}(d)
with a single sub-instance obtained from the pattern  $E$
shown in Figure~\ref{fig:patternPX};
\item for each variable-value assignment $\tuple{v,a}$ of $I$, we create a solution
which is an extension of $\tuple{v,a}$,
by adding a new value $b(v,a,v')$ to the domain of each variable $v' \neq v$
which is compatible with $\tuple{v,a}$
and with all such values $b(v,a,v'')$ ($v'' \notin \{v,v'\}$),
but incompatible with all other variable-value assignments.
\end{itemize}
By construction, $I'$ is clearly globally-consistent. 
If $M'$ occurs as a topological minor of $\PI{I'}$, then the points of $M'$
must map injectively to the points of $E$,
and so again the question is whether there is a path 
(of length greater than 1) of positive edges
linking $p_0$ to $p_{n+m}$. As in the proof of Theorem~\ref{thm:PXX},
this path exists if and only if the instance $I_{SAT}$ is
satisfiable. Hence, the decision problem 
$I \in \ForbTM{P_{X}}$ for globally-consistent instances $I$ is $\coNP$-complete. 
\end{proof}

Theorems~\ref{thm:PXX} and \ref{thm:PX} show that not all classes defined by forbidding topological minors
can be recognized in polynomial time.
Certain uses of tractable classes require polynomial-time
recognition: in particular, the automatic recognition and resolution
of easy instances within general-purpose solvers. On the other hand,
polynomial-time recognition of a tractable class $\mathcal{C}$ is not
required for the construction of a polynomial-time solvable
relaxation in $\mathcal{C}$, nor in the proof (by a human being) that
a subproblem of CSP encountered in practice falls in $\mathcal{C}$.

\section{Augmented patterns}\label{sec:augmented}

For some CSP instances we have extra information such as an ordering on the variables
or on the domains (or both). 
In this section we introduce the idea of adding an additional relation to a pattern 
to allow us to capture information of this kind.
A pattern $P$, together with an additional relation on the points of $P$ 
will be called an \emph{augmented pattern}. 
We will demonstrate that augmented patterns can be used to define new hybrid tractable classes
that extend those described in earlier sections.

\begin{definition}
\label{def:extendedpatt}
An \emph{augmented pattern} is a pair $\tuple{P,R}$ where 
$P$ is a pattern and $R$ is a relation (of any arity) over the points of $P$.
The augmented pattern $\tuple{P,R}$ will be denoted $P_R$.
\end{definition}

Obvious examples of
relations that could be added to a pattern are disequality relations or partial orders on
points, and this idea has been explored in a number of
papers~\cite{Cohen12:pivot,Cooper10:BTP,cz18:lmcs}. 

\begin{definition}
A homomorphism between augmented patterns $P_R$ and $P'_{R'}$ is a 
homomorphism $h$ from $P$ to $P'$ such that 
for all tuples $\tuple{x_1,x_2,\ldots,x_k} \in R$,
the tuple $\tuple{h(x_1),h(x_2),\ldots,h(x_k)} \in R'$.
\end{definition}
Using this extended definition of homomorphism, we can extend the notion of 
occurring as a sub-pattern (Definition~\ref{def:sub-pattern})
and occurring as a topological minor (Definition~\ref{def:top-minor}) to augmented patterns
in the natural way.

Now we can extend Definitions~\ref{def:CSPSP} and~\ref{def:CSPTM}, as follows,
to define restricted classes of CSP instances
and associated relations by forbidding the occurrence of certain augmented patterns.
\begin{definition}
\label{def:ForbSPaugmented}
Let $m$ be a constant, and let $\cal S$ be a set of augmented patterns 
such that for each $P_R \in {\cal S}$ the relation $R$ has arity $m$.
Let $\operatorname{Rel}$ be a partial function that maps an instance $I$
to a relation $R_I$ of arity $m$ over the points of $\PI{I}$.

We denote by \ForbSP{{\cal S},\operatorname{Rel}} the set
of all binary CSP instances $I$ such that $\operatorname{Rel}(I)$ is defined
and for all $P_R \in S$ it is not the case that $\SP{P_R}{\PI{I}_{\operatorname{Rel}(I)}}$.

We denote by \ForbTM{{\cal S},\operatorname{Rel}} the set
of all binary CSP instances $I$ such that $\operatorname{Rel}(I)$ is defined
and for all $P_R \in S$ it is not the case that $\TM{P_R}{\PI{I}_{\operatorname{Rel}(I)}}$.

\end{definition}

One of the simplest ways to augment a pattern $P$ is by adding a binary 
disequality relation, $\neq$, to specify that some points of $P$ are distinct.
A homomorphism from an augmented pattern $P_{\neq}$
to an augmented pattern $Q_{\neq}$ must map points that are specified to be distinct
in $P$ to points that are specified to be distinct in $Q$.
In the next three theorems, 
we shall assume that for any instance $I$, \emph{all} points in $\PI{I}_{\neq}$
are specified to be distinct.
In other words, we shall assume that for any instance $I$ 
the function $\operatorname{Rel}$ introduced
in Definition~\ref{def:ForbSPaugmented} always returns the binary relation $\neq$ 
containing all pairs of distinct points of $I$. We will denote this function by
$\operatorname{Rel}_{\neq}$. 

\begin{figure}[ht]
\thicklines \setlength{\unitlength}{0.7pt}
\centering
\begin{picture}(380,125) 
\put(0,80){\usebox{\varii}}
\put(80,80){\usebox{\varii}} \put(130,80){\usebox{\varii}}
\put(180,60){\usebox{\varii}} \put(230,80){\usebox{\varii}}
\put(280,80){\usebox{\varii}} \put(360,80){\usebox{\varii}}
\put(0,20){\usebox{\varii}} \put(80,20){\usebox{\varii}}
\put(130,20){\usebox{\varii}} 
\dashline[50]{7}(10,110)(35,100)
\dashline[50]{7}(65,100)(90,90) \dashline[50]{7}(90,110)(140,90)
\dashline[50]{7}(140,110)(190,90) \dashline[50]{7}(190,90)(240,110)
\dashline[50]{7}(240,90)(290,110) \dashline[50]{7}(290,90)(315,100)
\dashline[50]{7}(345,370,110) \dashline[50]{7}(10,30)(35,40)
\dashline[50]{7}(65,40)(90,50) \dashline[50]{7}(90,30)(140,50)
\dashline[50]{7}(140,30)(190,70) \dashline[50]{7}(345,100)(370,110)
\put(50,100){\makebox(0,0){.\;.\;.}} \put(50,40){\makebox(0,0){.\;.\;.}}
\put(330,100){\makebox(0,0){.\;.\;.}}

 \put(172,90){\makebox(0,0){$p$}} \put(172,70){\makebox(0,0){$q$}}
\put(258,47){\makebox(0,0){$p \neq q$}}
\end{picture}

\caption{The augmented pattern Pivot$_{\neq}$($k$).} 
\label{fig:pivotneq}
\setlength{\unitlength}{1pt}
\end{figure}
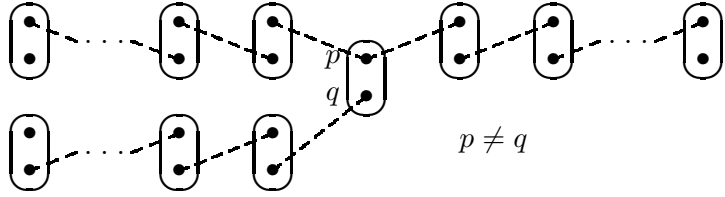

Now consider the augmented pattern Pivot$_{\neq}$($k$) which is obtained from 
the pattern Pivot($k$) defined in Definition~\ref{def:pivotk}
by adding a disequality relation specifying that the two points in the central node
are distinct, as shown in Figure~\ref{fig:pivotneq}.
Forbidding this pattern from occurring as a sub-pattern 
results in a larger class of instances than forbidding the pattern Pivot($k$),
but our next result shows that this larger class is still tractable.

\begin{theorem}
The augmented pattern Pivot$_{\neq}$($k$), shown in Figure~\ref{fig:pivotneq}, is sub-pattern tractable.
\end{theorem}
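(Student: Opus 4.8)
Let $P$ denote the augmented pattern Pivot$_{\neq}$($k$) of Figure~\ref{fig:pivotneq}, and let $I$ be an instance with no occurrence of $P$ as a sub-pattern in $\PI{I}_{\neq}$. By Lemma~\ref{lem:AC} we may first establish arc-consistency, so assume $I$ is arc-consistent. The plan is to exploit the precise way in which forbidding $P$ differs from forbidding the bare pattern Pivot($k$), and then to run a variable-elimination procedure in the spirit of the algorithm behind Theorem~\ref{thm:pivotSPtractable}. A sub-pattern occurrence of Pivot($k$) in $\PI{I}$ maps the two central points to two points of a single part; if these images are distinct we obtain an occurrence of $P$, which is excluded. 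Hence in $I$ the only occurrences of Pivot($k$) are the \emph{degenerate} ones, in which the two central points are identified — equivalently, occurrences of the pattern $T_k$ obtained from Pivot($k$) by merging its two central points, rooted at a single point $x_{v,a}$.

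The structural heart of the argument is the claim that a degenerate occurrence forces a strong local restriction. Suppose $T_k$ is rooted at $x_{v,a}$ and $|D(v)| \ge 2$, and let $b \in D(v)$ with $b \neq a$. The occurrence of $T_k$ provides three pairwise variable-disjoint length-$k$ negative-edge paths issuing from $x_{v,a}$ (all avoiding $v$); absence of $P$ centred at $v$ with central points mapped to $(x_{v,a},x_{v,b})$ then forbids any length-$k$ negative path from $x_{v,b}$ whose variables avoid some chosen two of those three paths. Letting the chosen pair range over all three possibilities, one deduces that every length-$k$ negative path from $x_{v,b}$ must meet the fixed set of $3k$ variables spanned by the three branches of the tripod, and an intersection argument sharpens this further. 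Consequently all the ``branching'' long-range negative structure attached to $v$ is concentrated on the single value $a$, so that, as far as length-$k$ negative paths are concerned, $I$ behaves like an instance in which Pivot($k$) itself is forbidden.

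With this in hand the algorithm is recursive. Given an arc-consistent sub-instance $I'$ still avoiding $P$: if $I'$ contains no occurrence of Pivot($k$) at all, solve it directly by Theorem~\ref{thm:pivotSPtractable}; otherwise choose a variable $v$ carrying a (necessarily degenerate) occurrence of Pivot($k$) and, using the structural claim, eliminate $v$ — projecting it out of the instance and, where needed, instantiating a bounded number of the $O(k)$ variables identified above — in polynomial time, then re-establish arc-consistency and check, via Lemma~\ref{lem:AC}, that no new occurrence of $P$ has been created. Each step strictly reduces the number of variables, so after polynomially many steps the remaining instance is solved directly by Theorem~\ref{thm:pivotSPtractable}, and the recursion is unwound to recover a solution of $I$ or to report that none exists. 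Throughout, detecting occurrences of the fixed patterns $P$ and Pivot($k$) as sub-patterns can be done in polynomial time by exhaustive search.

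The step I expect to be the main obstacle is the structural claim together with the soundness of the elimination: one must verify, uniformly in $k$, that forbidding the augmented pattern $P$ rather than the bare pattern Pivot($k$) yields exactly enough rigidity to make the degenerate occurrences harmless, and that removing a variable carrying such an occurrence cannot reintroduce a forbidden augmented pattern elsewhere. The bookkeeping is delicate because the three branches of Pivot($k$) are disjoint paths of $k$ negative edges spanning $3k$ distinct variables, so all overlaps — among the branches, and between the branches and the central variable — must be tracked carefully when rerouting negative paths or pruning domains; moreover the disequality relation means one has to maintain injectivity on the central part at each rerouting step.
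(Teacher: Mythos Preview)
Your opening observation is correct and shared with the paper: since Pivot$_{\neq}(k)$ is forbidden, any occurrence of Pivot($k$) in $\PI{I}$ must have its two central points identified at some single point $x_{v,a}$. But from there your argument diverges from the paper's and does not close.

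The paper's proof is not an elimination algorithm at all. After reducing (as you also suggest) to connected instances with no articulation variable, it argues by contradiction that in such an instance Pivot($2k$) cannot occur as a sub-pattern; hence the instance lies in $\ForbSP{\text{Pivot}(2k)}$ and is solved directly by Theorem~\ref{thm:pivotSPtractable}. The doubling from $k$ to $2k$ is the crucial trick you are missing: if Pivot($2k$) occurred degenerately at $x_{v,a}$, then for any other value $b \in D(v)$ the negative edge from $x_{v,b}$ to some part $W$ can be extended, via the 2-connectivity of the constraint graph (no articulation variable), to a path from $W$ to one of the $2k$-branches that avoids $V$; the extra length of the branches guarantees this path can be prolonged to length $k$, yielding a genuine Pivot$_{\neq}(k)$ occurrence with distinct central points $x_{v,a}$ and $x_{v,b}$, a contradiction.

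Your ``structural claim'' is where the gap lies. You correctly note that any length-$k$ negative path from $x_{v,b}$ must meet at least two of the three branches of the tripod at $x_{v,a}$, but the leap from this to ``$I$ behaves like an instance in which Pivot($k$) itself is forbidden'' is unjustified, and the subsequent elimination step (``projecting out $v$'', ``instantiating a bounded number of the $O(k)$ variables'') is not an algorithm: projecting out a variable in a binary CSP generally creates non-binary constraints, and you give no argument for why a bounded number of instantiations suffices or why the pattern cannot reappear. You flag exactly this as the main obstacle, and indeed it is one; the paper sidesteps it entirely by the $2k$ argument and never eliminates a variable carrying a degenerate occurrence.
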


\begin{proof}
Let $I \in \ForbSP{Pivot_{\neq}(k),\operatorname{Rel}_{\neq}}$ for some constant $k$.
If $\PI{I}$ has a point $x_{v,a}$ which belongs to no negative
edge (i.e., it is compatible with all assignments to all other variables), 
then we can clearly remove all points in the same part as $x_{v,a}$
without introducing the pattern or affecting the existence of a solution. 
Thus we can assume without loss of generality that $\PI{I}$ contains no such points. 
A similar remark holds if $\PI{I}$ has any parts containing just a single point.

We can also assume without loss of generality that the constraint
graph of $I$ is connected. 
A variable $v$ is called an \emph{articulation variable} of $I$ if removing 
$v$ from $I$ disconnects the constraint graph of $I$. 
Any instance can be decomposed into a tree of components which only intersect at
articulation variables. 
It therefore suffices to show that any instance $I$ without
articulation variables can be solved in polynomial time,
so we shall assume that $I$ has no articulation variables. 

If Pivot($2k$) does not occur as a sub-pattern in $\PI{I}$ then, 
by Theorem~\ref{thm:pivotSPtractable} we have that $I$ is tractable.

To deal with the remaining case, 
assume that Pivot($2k$) occurs as a sub-pattern in $\PI{I}$ 
with the central part $U$ of Pivot($2k$) mapping to part $V$ of $\PI{I}$.
Let $S_{2k}$ be the set of parts of $\PI{I}$ to which the parts of Pivot($2k$) are mapped. 

Since Pivot$_{\neq}$($k$) does not occur as a sub-pattern in $\PI{I}_{\neq}$ 
(and hence neither does Pivot$_{\neq}$($2k$)), the two points in the central
part $U$ of Pivot($2k$) must map to the \emph{same} point in \PI{I}, 
which we denote by $x_{v,a}$. 

By our assumptions, we know that there is
another (distinct) value $b$ in the domain of $v$ which belongs to a negative
edge in \PI{I}, connecting part $V$ to some other part $W$.
If $W$ is only connected to $S_{2k}$ in the constraint graph of $\PI{I}$ 
via $V$, then $v$ is an articulation variable of $I$, which 
contradicts our assumption.
Hence, there is a path $\pi$ in the constraint graph of $\PI{I}$ 
linking $W$ to some part $Y \in S_{2k}$ such that $Y \neq V$.
 
By choosing $\pi$ to be minimal, we can assume that no other parts on the path $\pi$
belong to $S_{2k}$. Now, since $Y$ must lie on one of the three
branches of the occurrence of Pivot($2k$) in $\PI{I}$, we can extend $\pi$ by
following this branch from $Y$ either towards or away from the
central part $V$, in order to obtain a path of length at least $k$. 
This length-$k$ path, together with the first $k$ variables of 
the other two branches of Pivot($2k$), gives an occurrence of the pattern 
Pivot$_{\neq}$($k$) in $\PI{I}_{\neq}$, 
which contradicts our choice of $I$, so we are done.
\end{proof}

\begin{figure}[ht]
\thicklines \setlength{\unitlength}{0.7pt}
\centering

\begin{picture}(260,140)(20,-20)

\put(0,0){
\begin{picture}(100,110)(0,-10)
\put(0,60){\usebox{\vartwo}} \put(40,0){\usebox{\vartwo}}
\put(80,70){\usebox{\varone}} \dashline[50]{7}(10,70)(50,30)
\dashline[50]{7}(10,90)(90,90) \dashline[50]{7}(50,10)(90,90)
\put(55,-20){\makebox(0,0){$K_{\neq}$}}

\put(-8,90){\makebox(0,0){$p$}} \put(-8,70){\makebox(0,0){$q$}}
\put(32,30){\makebox(0,0){$p'$}} \put(32,10){\makebox(0,0){$q'$}}
\put(105,30){\makebox(0,0){\shortstack{ $p \neq q$ \\ $p' \neq q'$ }}}
\end{picture}
}

\put(180,0){
\begin{picture}(100,110)(0,-10)
\put(0,60){\usebox{\vartwo}} \put(40,0){\usebox{\vartwo}}
\put(80,60){\usebox{\vartwo}} \dashline[50]{6}(10,70)(50,30)
\dashline[50]{6}(10,90)(90,90) \dashline[50]{6}(50,10)(90,70)
\put(55,-20){\makebox(0,0){$\PG{C_3}_{\neq}$}}

\put(-8,90){\makebox(0,0){$p$}} \put(-8,70){\makebox(0,0){$q$}}
\put(32,30){\makebox(0,0){$p'$}} \put(32,10){\makebox(0,0){$q'$}}
\put(110,90){\makebox(0,0){$p''$}} \put(110,70){\makebox(0,0){$q''$}}
\put(130,20){\makebox(0,0){\shortstack{ $p \neq q$ \\ $p' \neq q'$ \\ $p'' \neq q''$ }}}
\end{picture}
}

\end{picture}

\caption{Two augmented patterns which are topological-minor tractable.}
\label{fig:neq}
\end{figure}
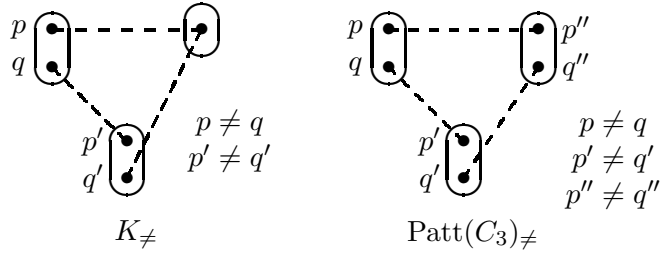

Now consider the augmented pattern $K_{\neq}$, shown in Figure~\ref{fig:neq}, 
which is obtained from the pattern $K$ shown in Figure~\ref{fig:new} 
by adding a disequality relation to specify that any two points in
the same part are distinct.
We now show that forbidding $K_{\neq}$ from occurring as a topological
minor results in a tractable class (which is larger than the class obtained by 
forbidding the pattern $K$ as a topological minor discussed in Theorem~\ref{thm:KisTMtractable}).

\begin{theorem}
\label{thm:KneqTMtractable}
The augmented pattern $K_{\neq}$, shown in Figure~\ref{fig:neq}, is sub-pattern NP-complete but topological-minor tractable.
\end{theorem}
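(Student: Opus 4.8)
The idea is to mimic the proof of Theorem~\ref{thm:KisTMtractable}, but track disequality information carefully. The sub-pattern $\NP$-completeness of $K_{\neq}$ is immediate: forbidding $K_{\neq}$ as a sub-pattern (under $\operatorname{Rel}_{\neq}$) defines exactly the same class as forbidding $K$ as a sub-pattern, since in any instance $\PI{I}$ all points are distinct, so a part-preserving homomorphism from $K$ automatically respects $\neq$; hence $K$ sub-pattern $\NP$-complete (Theorem~\ref{thm:pivotSPtractable}) gives the same for $K_{\neq}$. For topological-minor tractability, I would take an instance $I$ with $\TM{K_{\neq}}{\PI{I}_{\operatorname{Rel}_{\neq}}}$ failing, and reconsider where the argument of Theorem~\ref{thm:KisTMtractable} used the absence of $K$ as a topological minor.

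\textbf{First step.} Recall that $K$ is the pattern with a ``$J$-shape'' (two negative edges meeting at a middle part) plus a third negative edge from one end to a further part, so that an occurrence of $K$ as a topological minor is obtained from an occurrence of $J$ on a triple $(x,y,z)$ together with a path in $G_I$ from $x$ to $z$ avoiding $y$. In Theorem~\ref{thm:KisTMtractable} this let us split $G_I$ at articulation vertices into components containing no $J$-sub-pattern. For $K_{\neq}$, the two points of $J$ lying in the ``middle'' part of $K$ are now required to be distinct, i.e.\ $K_{\neq}$ occurs as a topological minor precisely when there is an occurrence of $J$ on $(x,y,z)$ \emph{with the two neighbours of $y$ inside $y$'s part being distinct points}, together with the avoiding $x$--$z$ path. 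So the decomposition I want is: split $G_I$ at vertices $y$ such that $J$ occurs at some triple $(x,y,z)$ \emph{using two distinct points in the part of $y$}, since for such a $y$, the absence of $\TM{K_{\neq}}{\PI{I}}$ forces every $x$--$z$ path to pass through $y$.

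\textbf{Second step.} The key point is then that within each component of this decomposition, $J$ does \emph{not} occur as a sub-pattern on any triple with two distinct points in the middle part; equivalently, every occurrence of $J$ in a component uses a \emph{single} point in the central part, i.e.\ is an occurrence of the ``merged-$J$'' pattern $\mathrm{Pivot}(1)$. So the residual class on each component is $\ForbSP{J'}$ for the one-point-merged version $J'$ of $J$; this $J'$ is $\SP{J'}{\mathrm{Pivot}(1)}$ (indeed it essentially \emph{is} $\mathrm{Pivot}(1)$), hence by Theorem~\ref{thm:pivotSPtractable} it is sub-pattern tractable. Thus each leaf component, for each assignment to its unique articulation variable, can be solved in polynomial time; eliminate leaves and the values they exclude at the articulation variable, repeat. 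This is exactly the argument of Theorem~\ref{thm:KisTMtractable}, with $\ForbSP{J}$ replaced by $\ForbSP{J'}$.

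\textbf{Main obstacle.} The delicate part is verifying that removing leaf components and propagating to the articulation variable does not create a new occurrence of $K_{\neq}$ as a topological minor elsewhere — that is, an $\textbf{(A2)}$-style argument. Since the reduction only removes domain values and removes variables of a fully-solved leaf component (it does \emph{not} add edges, unlike the Tutte-decomposition scheme), no new negative edges appear, so no new topological-minor occurrence of $K_{\neq}$ can be introduced; but I would need to state this cleanly, including that the articulation-variable domain shrinkage is sound. A secondary subtlety is making sure the articulation-based decomposition interacts correctly with the $\neq$ relation: the disequality relation $\operatorname{Rel}_{\neq}$ on a subinstance is just the restriction of the one on $I$, so occurrences of $K_{\neq}$ as a topological minor restrict correctly to components, which is what licenses the claim ``no $J$ on a distinct-point triple inside a component.'' Once these checks are in place, the result follows exactly as in Theorem~\ref{thm:KisTMtractable}.
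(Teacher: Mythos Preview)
Your sub-pattern $\NP$-completeness argument has a small slip: the classes $\ForbSP{K}$ and $\ForbSP{K_{\neq},\operatorname{Rel}_{\neq}}$ are \emph{not} equal. A part-preserving homomorphism from $K$ to $\PI{I}$ need not be injective, so it may send the two points in a part of $K$ to a single point of $\PI{I}$; such a map witnesses $\SP{K}{\PI{I}}$ but not $\SP{K_{\neq}}{\PI{I}_{\neq}}$. Only the containment $\ForbSP{K}\subseteq\ForbSP{K_{\neq},\operatorname{Rel}_{\neq}}$ holds, and that suffices for $\NP$-completeness, so the conclusion survives.

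The topological-minor argument, however, has a genuine gap rooted in a misreading of $K_{\neq}$. In $K$ the \emph{single-point} part is exactly the part where the two negative edges of the embedded $J$ meet; the disequalities in $K_{\neq}$ live on the \emph{other} two parts (the two-point parts containing $p,q$ and $p',q'$). Thus ``$J$ on $(x,y,z)$ using two distinct points in the part of $y$'' is not meaningful: the central part of $J$ has one point, and so does the corresponding part of $K$. What the absence of $K_{\neq}$ as a topological minor actually tells you is: if $J$ occurs on $(x,y,z)$ with meeting point at $y$ and there is an $x$--$z$ path $\pi$ avoiding $y$, then at \emph{at least one} of $x,z$ the $J$-edge endpoint and the adjacent $\pi$-edge endpoint coincide. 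That is strictly weaker than ``$y$ is an articulation vertex separating $x$ from $z$'', so your tree decomposition into $J$-free (or ``merged-$J$''-free) pieces does not go through. Your description of the residual class is correspondingly confused: $J$ already has a single point in every part, so ``one-point-merged $J$'' and $\mathrm{Pivot}(1)$ are not the objects you want.

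The paper takes a different route: it shows that arc-consistency is a decision procedure for $\ForbTM{K_{\neq},\operatorname{Rel}_{\neq}}$, by induction on the number of variables, proving that in any arc-consistent instance in the class every single-variable assignment extends to a solution. The key step is that after assigning $v=a$ and deleting $v$, any two former neighbours of $v$ lie in different connected components of the constraint graph of $I[v=a]$; this is precisely where the two disequalities in $K_{\neq}$ are used (the point $x_{v,a}$ plays the role of the single-point part, and the two neighbours supply the two parts with distinct points). One then applies the inductive hypothesis to each component. If you want to rescue an articulation-style argument, you would have to handle the ``shared endpoint at $x$ or $z$'' case explicitly, which essentially forces you towards an argument of this kind rather than a direct adaptation of Theorem~\ref{thm:KisTMtractable}.
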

\begin{proof}
By Theorem~\ref{thm:pivotSPtractable}, the (negative) pattern $K$ shown in
Figure~\ref{fig:new} is sub-pattern $\NP$-complete.
Since $\ForbSP{K}\subseteq\ForbSP{K_{\neq},\operatorname{Rel}_{\neq}}$, we have that $K_{\neq}$ is also 
sub-pattern $\NP$-complete. 

To show that $K_{\neq}$is topological-minor tractable
we will show that establishing arc-consistency is 
sufficient to decide the existence of a solution for any instance in $\ForbTM{K_{\neq},\operatorname{Rel}_{\neq}}$. 

By Lemma~\ref{lem:AC}, 
without loss of generality we need consider only arc consistent instances. 
We will show, by induction on the number of variables, that in any arc-consistent instance 
$I \in \ForbTM{K_{\neq},\operatorname{Rel}_{\neq}}$, any assignment to a single variable 
can be extended to a solution of $I$. 
This is certainly true for instances on up to two variables, by the
definition of arc consistency.

Now assume that $I$ has more than two variables,
and consider the assignment of the value $a$ to the variable $v$.
Let $I[v=a]$ be the
instance obtained from $I$ by making this assignment,
eliminating variable $v$ and eliminating from the domain of 
all other variables $w$ all values $b$ such that $\tuple{a,b} \notin R_{vw}$.
By arc consistency, none of the resulting domains in $I[v=a]$ is empty, 
i.e., for each variable $w$ there is a value $c_w$ in the domain of $w$
such that $\tuple{a,c_w} \in R_{vw}$. 
By the absence of $K_{\neq}$ as a topological minor in $\PI{I}_{\neq}$, 
we can deduce that all variables $w$ that were connected to $v$ in the constraint graph of $I$
are not connected in the constraint graph of $I[v=a]$. 

Let $S_1,\ldots,S_m$ be the connected components
of the constraint graph of $I[v=a]$. 
For any $k=1,\ldots,m$, consider the subinstance $I[S_k]$ 
of the original instance $I$ on the variables of $S_k$. 
Clearly, each $I[S_k] \in \ForbTM{K_{\neq},\operatorname{Rel}_{\neq}}$ and each $I[S_k]$ is arc-consistent.
Furthermore, since at least the variable $v$ has been eliminated
from the original set of variables, we know that each $I[S_k]$ has strictly
fewer variables than $I$ (even if $m=1$). 
Hence, by our inductive hypothesis,
the assignment of any value $c_w$ to any variable $w$ in $I[S_k]$
can be extended to a solution $s_k$ to $I[S_k]$. 
The solutions $s_k$ ($k=1,\ldots,m$) together with the assignment of $a$ to $v$ 
then form a solution to $I$ and the result follows by induction.
\end{proof}

Now consider the augmented pattern $\PG{C_3}_{\neq}$, shown in Figure~\ref{fig:neq}, 
which is obtained from the pattern $\PG{C_3}$ shown in Figure~\ref{fig:acyclic}
by adding a disequality relation specifying that any two points in the same part are distinct.
We now show that forbidding $\PG{C_3}_{\neq}$ from occurring as a topological
minor results in a tractable class (which is larger than the 
class of acyclic instances obtained by forbidding the pattern $\PG{C_3}$ 
as a topological minor discussed in Proposition~\ref{prop:acyclic}).

\begin{theorem}\label{thm:CneqTMtractable}
The augmented pattern $\PG{C_3}_{\neq}$, shown in Figure~\ref{fig:neq}, is sub-pattern NP-complete 
but topological-minor tractable.
\end{theorem}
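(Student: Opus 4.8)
The plan is to mirror the proofs of Theorem~\ref{thm:KneqTMtractable} and Proposition~\ref{prop:acyclic}. For sub-pattern $\NP$-completeness, observe that $\PG{C_3}$ is a negative pattern whose three parts form a triangle in its constraint graph; since a part-preserving homomorphism is injective on parts and sends negative edges to negative edges, a part-preserving homomorphism from $\PG{C_3}$ to a pattern $Q$ would force a triangle in $\CG{Q}$. Every pattern Pivot$(k)$ is built from a star graph, so its constraint graph is acyclic and admits no such triangle; hence no part-preserving homomorphism from $\PG{C_3}$ to any Pivot$(k)$ exists, and Theorem~\ref{thm:pivotSPtractable} makes $\PG{C_3}$ sub-pattern $\NP$-complete. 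A homomorphism respecting $\neq$ is in particular a homomorphism, so $\ForbSP{\PG{C_3}}\subseteq\ForbSP{\PG{C_3}_{\neq},\operatorname{Rel}_{\neq}}$, whence $\PG{C_3}_{\neq}$ is sub-pattern $\NP$-complete as well.

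For topological-minor tractability I would prove that establishing arc-consistency decides solvability for every instance of $\ForbTM{\PG{C_3}_{\neq},\operatorname{Rel}_{\neq}}$. By Lemma~\ref{lem:AC} only arc-consistent instances $I$ need be considered, and it suffices to show, by induction on the number of variables, that such an $I$ is globally consistent, i.e. every assignment to a single variable extends to a solution; the case of at most two variables is immediate from arc-consistency. For the inductive step, fix $v$ and $a\in D(v)$, assign $a$ to $v$, delete $v$, prune from every $D(w)$ all $b$ with $\tuple{a,b}\notin R_{vw}$, and re-establish arc-consistency, obtaining an instance $J$. Deleting $v$ and pruning domains cannot create a topological-minor occurrence, so (also using Lemma~\ref{lem:AC}) $J\in\ForbTM{\PG{C_3}_{\neq},\operatorname{Rel}_{\neq}}$, and $J$ has strictly fewer variables than $I$; hence by the inductive hypothesis it would suffice to show that $J$ has no empty domain, after which a solution of $J$ together with $v=a$ solves $I$. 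The crucial step is therefore to rule out a domain wipe-out. Here I would read a subdivision of $\PG{C_3}_{\neq}$ occurring as a topological minor in $\PI{I}_{\neq}$ as a cycle of $\CG{\PI{I}}$ carrying three distinguished ``branch'' variables at each of which the two incident negative edges use \emph{distinct} values, and analyse the cascade of value-deletions that an empty domain would trigger: tracing each deletion back to the loss of support that caused it produces a connected subgraph of $\CG{\PI{I}}$ linking the wiped-out variable to $v$; since $I$ was arc-consistent this subgraph cannot be essentially a tree (arc-consistent tree-structured instances are globally consistent), so it contains a cycle, on which $v$ uses the single value $a$ for both of its incident negative edges whereas every pruned neighbour $w$ of $v$ uses, for its incident cycle-edge, a value different from the one removed from $D(w)$; combining these observations the cycle supplies three variables each of which can be made a distinct-value branch, producing $\PG{C_3}_{\neq}$ as a topological minor of $\PI{I}_{\neq}$ --- a contradiction.

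\noindent The main obstacle is precisely this last point. Unlike the situation for $K_{\neq}$ in Theorem~\ref{thm:KneqTMtractable}, forbidding $\PG{C_3}_{\neq}$ does \emph{not} force the former neighbours of the assigned variable to become pairwise disconnected --- a triangle through $v$ may persist --- so the clean component-decomposition argument of that proof does not transfer verbatim, and the technical heart of the proof is the book-keeping showing that no cycle left after the assignment can be completed into an occurrence of $\PG{C_3}_{\neq}$ with three independently distinct-value branch variables, hence that re-establishing arc-consistency never empties a domain. (A Tutte-decomposition argument in the spirit of Section~\ref{sec:scheme} would be the natural fallback should the direct cascade analysis prove unwieldy.)
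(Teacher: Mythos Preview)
Your treatment of sub-pattern $\NP$-completeness is correct and matches the paper.

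The tractability argument, however, is built on a false premise: arc-consistency is \emph{not} a decision procedure for $\ForbTM{\PG{C_3}_{\neq},\operatorname{Rel}_{\neq}}$. Take three variables $v_1,v_2,v_3$, each with domain $\{0,1\}$, and constraints $R_{v_1v_2}=R_{v_2v_3}=\{(0,0),(1,1)\}$, $R_{v_1v_3}=\{(0,1),(1,0)\}$. This instance is arc-consistent but inconsistent. It lies in $\ForbTM{\PG{C_3}_{\neq},\operatorname{Rel}_{\neq}}$: since the constraint graph has only three vertices, only the trivial subdivision could occur, and a direct check over the six part-bijections shows that the three $\neq$ constraints can never be satisfied simultaneously (each attempt forces $c=1-c$ for some domain value $c$). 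Your cascade analysis therefore cannot succeed; concretely, assigning $v_1=0$ wipes out $D(v_3)$ via the single 3-cycle, but on that cycle only two of the three variables can serve as distinct-value branch points, never all three, so the forbidden pattern is not extractable.

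The paper's proof differs in two essential respects. First, the decision procedure is \emph{singleton} arc-consistency, not arc-consistency; the SAC hypothesis is precisely what guarantees that the restricted neighbourhood subinstance defined below is itself arc-consistent. Second, rather than tracing a deletion cascade, the paper bootstraps from Theorem~\ref{thm:KneqTMtractable}. Given a SAC instance $I$ and an assignment $v=a$, it lets $N$ be the set of variables joined to $x_{v,a}$ by a negative edge, forms $I[N]$ with domains cut down to the values compatible with $a$, and introduces an auxiliary pattern $J'_{\neq}$ (two negative edges meeting at a part whose two points are declared distinct) which satisfies $\SP{J'_{\neq}}{\PG{C_3}_{\neq}}$ and $\SP{J'_{\neq}}{K_{\neq}}$. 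Absence of $\PG{C_3}_{\neq}$ as a topological minor in $I$ forces absence of $J'_{\neq}$, hence of $K_{\neq}$, as a topological minor in $I[N]$; Theorem~\ref{thm:KneqTMtractable} then yields a solution $s_N$ on $N$. The extension to the remaining variables is handled by a further layering via subinstances $I[S_u]$ for $u\in N$ and the inductive hypothesis. Your Tutte-decomposition fallback is also not the route taken.
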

\begin{proof}
By Theorem~\ref{thm:pivotSPtractable}, the (negative) pattern $\PG{C_3}$ shown in
Figure~\ref{fig:acyclic} is sub-pattern $\NP$-complete.
Since $\ForbSP{\PG{C_3}}\subseteq\ForbSP{\PG{C_3}_{\neq},\operatorname{Rel}_{\neq}}$, we have that $\PG{C_3}_{\neq}$ is
also sub-pattern $\NP$-complete. 

Singleton arc consistency (SAC) is an operation which consists in
applying the following operation on an instance $I$ until
convergence: if the instance $I[v=a]$ obtained by making the
assignment of the value $a$ to the variable $v$ and establishing arc consistency
is empty, then eliminate $a$ from the domain of $v$ in $I$.
To show that $\PG{C_3}_{\neq}$ is topological-minor tractable
we will show that SAC is a decision procedure for $\ForbTM{\PG{C_3}_{\neq},\operatorname{Rel}_{\neq}}$. 

Since establishing SAC cannot introduce any occurrence of the pattern, 
we need only consider instances that are singleton-arc-consistent 
(i.e., where no more eliminations are possible by SAC). 
We will show, by induction on the number of variables, that in any 
singleton-arc-consistent instance $I \in \ForbTM{\PG{C_3}_{\neq},\operatorname{Rel}_{\neq}}$,
any assignment to a single variable can be extended to a solution to $I$.
This is certainly true for instances on up to two variables, by the
definition of arc consistency.

Now assume that $I$ has more than two variables,
and consider the assignment of the value $a$ to the variable $v$.
Let $N$ be the set of parts of $\PI{I}$ that are connected by a negative edge to
$x_{v,a}$. We can assume that $N \neq \emptyset$, otherwise we could make
the assignment $a$ to variable $v$ without affecting the rest of the
instance $I$, and thus reduce $I$ to an instance on fewer variables
(which by our inductive hypothesis would have a solution).

Now let $I[N]$ be the subinstance of $I$ on the variables corresponding to 
parts in $N$, with the domain of each variable $w$ of $I[N]$ 
reduced to those values $c$ such that $(a,c) \in R_{vw}$. 
Since $I$ is singleton arc-consistent, $I[N]$ is arc-consistent. 

Let $J'_{\neq}$ be the augmented pattern shown in Figure~\ref{fig:proofP}.
\begin{figure}[ht]
\thicklines \setlength{\unitlength}{0.7pt}
\centering

\begin{picture}(460,110)(0,0)

\put(180,0){
\begin{picture}(100,110)(0,-10)
\put(0,60){\usebox{\vartwo}} \put(40,10){\usebox{\varone}}
\put(80,70){\usebox{\varone}} \dashline[50]{7}(10,70)(50,30)
\dashline[50]{7}(10,90)(90,90)
\put(55,-10){\makebox(0,0){$J'_{\neq}$}}

\put(-8,90){\makebox(0,0){$p$}} \put(-8,70){\makebox(0,0){$q$}} 
\put(70,30){\makebox(0,0){$r_1$}}
\put(140,50){\makebox(0,0){$p \neq q$}} 
\put(110,90){\makebox(0,0){$r_2$}}
\end{picture}
}

\end{picture}
\caption{The augmented pattern $J'_{\neq}$ used in the proof of Theorem~\ref{thm:CneqTMtractable}}
\label{fig:proofP}
\end{figure}
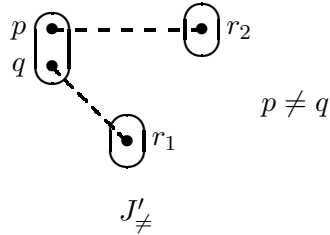
Note that $\SP{J'_{\neq}}{\PG{C_3}_{\neq}}$.
Now, since $\PG{C_3}_{\neq}$ does not occur as a topological minor in $\PI{I}_{\neq}$, 
we can deduce that $J'_{\neq}$ does not occur as a topological minor in $\PI{I[N]}$. 
Hence, $K_{\neq}$ does not occur as a topological minor in $\PI{I[N]}$ either, 
since $\SP{J'_{\neq}}{K_{\neq}}$. 
By the proof of Theorem~\ref{thm:KneqTMtractable}, 
any arc-consistent instance in $\ForbTM{K_{\neq},\operatorname{Rel}_{\neq}}$ has a solution, 
so $I[N]$ has a solution which we denote by $s_N$. 

Let $u$ be a variable of $I[N]$ and denote by $a_u$ the value assigned to $u$ by $s_N$. 
Let $I_u$ be the subinstance of $I$ on 
all variables of $I$ except $\{v\} \cup (N \setminus \{u\})$.

Let $S_u$ be the set of variables $w$ of $I_u$ which are either 
(1) $u$ itself, 
(2) directly constrained by the assignment of $a_u$ to $u$ 
(i.e., variables $w$ such that $\tuple{a_u,b} \notin R_{uw}$ for some $b$ in the domain of $w$),
or
(3) such that the pattern $J'_{\neq}$ occurs as a topological minor in $\PI{I_u}_{\neq}$ 
with the point $r_1$ of $J'_{\neq}$ mapping to $x_{u,a_u}$
and the point $r_2$ of $J'_{\neq}$ mapping to some point $x_{w,b}$ for some $b$.

Let $I[S_u]$ be the subinstance of $I$ on the set of variables $S_u$. 
Clearly $I[S_u]$ is singleton arc-consistent (since $I$ is), 
and has fewer variables than $I$ (since $v \notin S_u$). 
Hence, by our inductive hypothesis, the assignment
of value $a_u$ to variable $u$ can be extended to a solution $s_u$ of $I[S_u]$.

Now let $u' \in N \setminus \{u\}$. 
By the absence of $\PG{C_3}_{\neq}$ as a topological minor in $\PI{I}$, we can deduce
that no assignment in $s_u$ can be incompatible with any assignment
to a variable $y$ in $S_{u'} \setminus S_u$, except
possibly in the case that the assignment to $y$ is directly incompatible
with both the assignment of $a_u$ to $u$ and $a_{u'}$ to $u'$.
In this latter case, the solution $s_{u'}$ projected onto $S_{u'} \setminus S_u$ 
is necessarily consistent with $s_u$. 

Hence, by a simple inductive argument, we can create a consistent partial assignment
composed of the assignment of $a$ to $v$,
and the assignments specified by $s_N$ and each $s_u$ 
(projected onto the not-yet-assigned variables). 

The rest of the instance $I$, if it is non-empty, 
is not constrained by this partial assignment and by
our inductive hypothesis has a solution; combining these partial solutions 
gives a solution to $I$. 
\end{proof}

Classes of the CSP that are defined by specifying a restricted set of 
constraint relations over some fixed domain $D$ are known as \emph{language
classes}~\cite{DBLP:journals/jacm/JeavonsCG97,Feder98:monotone}.
Every known tractable language
class~\cite{DBLP:journals/jacm/JeavonsCG97,Barto14:survey} of CSP instances 
is characterised by an operation $f:D^k \rightarrow D$ with
the property that for all constraints $R_{uv}$, and all pairs
$\tuple{p_1,q_1},\tuple{p_2,q_2},\ldots,\tuple{p_k,q_k} \in R_{uv}$, the pair
$\tuple{f(p_1,p_2,\ldots,p_k),f(q_1,q_2,\ldots,q_k)} \in R_{uv}$; such an
operation is known as a \emph{polymorphism} of the constraint
relations~\cite{Barto14:survey,DBLP:journals/jacm/JeavonsCG97}.

We now show that using augmented patterns we can characterise every known
tractable language class using a single forbidden augmented sub-pattern.

\begin{theorem}
\label{thm:languageasForbSP}
Every tractable language class of binary CSP instances that is characterised by
a polymorphism $f$ 
is equal to $\ForbSP{P_R,\operatorname{Rel}_f}$ for some augmented pattern $P_R$ and 
function $\operatorname{Rel}_f$.
\end{theorem}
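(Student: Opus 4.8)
The plan is to encode the polymorphism condition directly as a forbidden augmented pattern. Suppose the tractable language class is characterised by a $k$-ary polymorphism $f:D^k\to D$ over a fixed finite domain $D$. The key idea is that a constraint relation $R_{uv}$ fails to be preserved by $f$ exactly when there exist $k$ tuples $\tuple{p_1,q_1},\dots,\tuple{p_k,q_k}\in R_{uv}$ such that $\tuple{f(p_1,\dots,p_k),f(q_1,\dots,q_k)}\notin R_{uv}$. Since $D$ is fixed and finite, there are only finitely many such "bad" configurations: each is a choice of $k+1$ pairs of domain values, the first $k$ of which lie in the relation and the last of which does not, with the last being the pointwise image under $f$ of the first $k$. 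I will build a single augmented pattern $P_R$ that matches precisely such a configuration inside $\PI{I}$.

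First I would define the pattern $P$ on two parts $U$ (for variable $u$) and $V$ (for variable $v$), with $2(k+1)$ points: points $a_1,\dots,a_{k+1}\in U$ and $b_1,\dots,b_{k+1}\in V$. For $i=1,\dots,k$ put a positive edge $(a_i,b_i)$ (encoding $\tuple{p_i,q_i}\in R_{uv}$) and put a negative edge $(a_{k+1},b_{k+1})$ (encoding the image pair being disallowed). Next I would choose the relation $R$ on the points of $P$ and the function $\operatorname{Rel}_f$ so that, when $P$ is mapped into $\PI{I}$, the $U$-points land on $x_{u,d}$-points with the correct domain values and the image constraint is enforced. Concretely, $\operatorname{Rel}_f(I)$ is defined to be the $2(k+1)$-ary relation on the points of $\PI{I}$ consisting of all tuples $\tuple{x_{u,d_1},\dots,x_{u,d_k},x_{u,d_{k+1}},x_{v,e_1},\dots,x_{v,e_k},x_{v,e_{k+1}}}$ such that $u,v$ range over distinct variables, $d_{k+1}=f(d_1,\dots,d_k)$ and $e_{k+1}=f(e_1,\dots,e_k)$; and $R$ on $P$ is the single tuple $\tuple{a_1,\dots,a_k,a_{k+1},b_1,\dots,b_k,b_{k+1}}$ listing the points of $P$ in the corresponding order. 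A part-preserving homomorphism of augmented patterns must then send all $a_i$ into one part and all $b_i$ into another, must respect the positive and negative edges, and must send the tuple in $R$ into $\operatorname{Rel}_f(I)$ — which is exactly the statement that there is a witnessing violation of the polymorphism $f$ by some constraint $R_{uv}$ of $I$.

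With this setup the equivalence is essentially by unwinding definitions: $\SP{P_R}{\PI{I}_{\operatorname{Rel}_f(I)}}$ holds iff some constraint of $I$ is not preserved by $f$, hence $I\in\ForbSP{P_R,\operatorname{Rel}_f}$ iff every constraint of $I$ admits $f$ as a polymorphism, i.e.\ iff $I$ lies in the language class defined by $f$. One subtlety to check carefully is the handling of repeated domain values: since a homomorphism need not be injective on points, and since $D$ is finite, we must allow the $a_i$ (resp.\ $b_i$) to collapse onto equal domain values, which is fine because the polymorphism condition itself allows repeated argument tuples; the relation $\operatorname{Rel}_f(I)$ is defined on all tuples, with repetitions permitted, so this causes no problem. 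Another point is that $\PI{I}$ has a negative edge between $x_{u,d}$ and $x_{v,e}$ precisely when $\tuple{d,e}\notin R_{uv}$ and a positive edge precisely when $\tuple{d,e}\in R_{uv}$, so the edge constraints in $P$ translate exactly into membership and non-membership in the relation. The main obstacle, such as it is, is getting the bookkeeping of the auxiliary relation $\operatorname{Rel}_f$ right — in particular ensuring it is a genuine partial function of the instance $I$ (it is: it is totally defined, determined by the domains and the fixed operation $f$) and that it has the fixed arity $2(k+1)$ demanded by Definition~\ref{def:ForbSPaugmented}; once this is in place the rest of the argument is a routine verification.
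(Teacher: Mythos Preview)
Your proof is correct and follows essentially the same approach as the paper: a two-part pattern with $k$ positive edges and one negative edge, together with an auxiliary relation encoding the graph of $f$. The only cosmetic difference is that the paper uses a $(k{+}1)$-ary relation $\operatorname{Rel}_f$ with \emph{two} tuples in $R$ (one per part), whereas you package both parts into a single $2(k{+}1)$-ary tuple; this is an equivalent encoding and the verification goes through identically.
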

\begin{proof}
The $k$-ary operation $f:D^k \rightarrow D$ can be specified by a $(k+1)$-ary relation $R_f$ over $D$
where $R_f = \{\tuple{a_1,\ldots,a_{k+1}} \mid a_{k+1} = f(a_1,\ldots,a_k)\}$.
Define $\operatorname{Rel}_f$ to be the function that maps any CSP instance $I$ over $D$
to the relation $R$ over the points of $\PI{I}$, where 
$R = \{\tuple{x_{v,a_1},\ldots,x_{v,a_{k+1}}} \mid \tuple{a_1,\ldots,a_{k+1}} \in R_f\}$.

The class of all instances $I$ over domain $D$ 
for which all constraint relations admit $f$ as a 
polymorphism, is precisely the class of instances defined by 
$\ForbSP{P_R,\operatorname{Rel}_f}$ where $P = \tuple{X,E^\sim,E^+,E^-}$ with
\begin{itemize}
\item $X = U\cup V$, 
where $U = \{p_1,p_2,\ldots,p_{k+1}\}$ and $V = \{q_1,q_2,\ldots,q_{k+1}\}$;
\item $E^\sim = (U \times U) \cup (V \times V)$;
\item $E^+ = \{(p_i,q_i) \mid p_i \in U, q_i \in V, i=1,2,\ldots,k\}$;
\item $E^- = \{(p_{k+1},q_{k+1})\}$;
\end{itemize}
and $R = \{(p_1,p_2,\ldots,p_{k+1}),(q_1,q_2,\ldots,q_{k+1})\}$,
as illustrated in Figure~\ref{fig:tableau}.
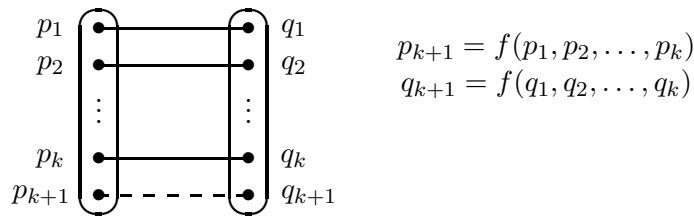
\begin{figure}[ht]
\centering
\thicklines \setlength{\unitlength}{0.7pt}
\newsavebox{\varkdotsbig}
\savebox{\varkdotsbig}(20,100){
\begin{picture}(20,100)(2,40)
\put(10,45){\oval(20,110)} 
\put(10,0){\makebox(0,0){$\bullet$}}
\put(10,20){\makebox(0,0){$\bullet$}} 
\put(10,50){\makebox(0,0){$\vdots$}}
\put(10,70){\makebox(0,0){$\bullet$}}
\put(10,90){\makebox(0,0){$\bullet$}}
\end{picture}}

\begin{picture}(150,120)(80,40)
\thicklines
\put(20,90){\usebox{\varkdotsbig}} 
\put(100,90){\usebox{\varkdotsbig}}
\put(30,140){\line(1,0){80}} 
\put(30,120){\line(1,0){80}} 
\put(30,70){\line(1,0){80}} 
\dashline[50]{7}(30,50)(110,50)

\put(5,140){\makebox(0,0){$p_1$}}
\put(5,120){\makebox(0,0){$p_2$}}
\put(5,70){\makebox(0,0){$p_k$}}
\put(0,50){\makebox(0,0){$p_{k+1}$}}
\put(135,140){\makebox(0,0){$q_1$}}
\put(135,120){\makebox(0,0){$q_2$}}
\put(135,70){\makebox(0,0){$q_k$}}
\put(143,50){\makebox(0,0){$q_{k+1}$}}

\put(270,130){\makebox(0,0){$p_{k+1} = f(p_1,p_2,\ldots,p_k)$}}
\put(270,110){\makebox(0,0){$q_{k+1} = f(q_1,q_2,\ldots,q_k)$}}
\end{picture}
\caption{The augmented pattern $P_R$ used in the proof of Theorem~\ref{thm:languageasForbSP}.}
\label{fig:tableau}
\end{figure}
\end{proof}

We remark that the algebraic dichotomy conjecture~\cite{Bulatov05:classifying},
which is a refinement of the dichotomy conjecture of Feder and
Vardi~\cite{Feder98:monotone}, implies that \emph{every} tractable language is
characterised by a single polymorphism, and thus under this conjecture
Theorem~\ref{thm:languageasForbSP} applies to \emph{all} tractable language
classes of binary CSP instance over a fixed domain.

\section{Conclusions and open problems}\label{sec:conclusion}

The notion of a pattern occurring as a topological minor, introduced here,
allows a new approach to the definition of tractable classes of CSP instances.
We have shown that this approach, together with the notion of augmented
patterns, can unify the description of all tractable structural and language
classes, as well as allowing new and more general tractable classes to be
identified. We therefore believe that it has great potential for systematically
identifying all tractable classes of the CSP. 

One long-term goal is to characterise precisely which patterns $P$
are topological-minor tractable and for which such patterns $P$,
\ForbTM{P} is recognisable in polynomial time. For
example, Figure~\ref{fig:open} shows three simple patterns whose
topological minor tractability is currently open. 

\thicklines \setlength{\unitlength}{0.7pt}
\begin{figure}[ht]
\centering

\begin{picture}(360,150)(0,0)

\put(0,50){
\begin{picture}(100,100)(0,0)
\put(0,60){\usebox{\vartwo}} \put(40,10){\usebox{\varone}}
\put(80,70){\usebox{\varone}} \dashline[50]{7}(10,70)(50,30)
\dashline[50]{7}(10,90)(90,90) \dashline[50]{7}(50,30)(90,90)
\end{picture}}

\put(20,0){
\begin{picture}(130,100)(0,0)
\put(0,0){\usebox{\varone}} \put(50,0){\usebox{\varone}}
\put(100,0){\usebox{\varone}} \put(150,0){\usebox{\varone}}
\put(200,0){\usebox{\varone}} \dashline[50]{7}(10,20)(60,20)
\dashline[50]{7}(60,20)(110,20) \dashline[50]{7}(110,20)(160,20)
\dashline[50]{7}(160,20)(210,20)
\end{picture}}

\put(220,20){
\begin{picture}(140,120)(0,0)
\put(0,80){\usebox{\varthree}} \put(60,0){\usebox{\varthree}}
\put(60,60){\usebox{\varthree}} \put(120,80){\usebox{\varthree}}
\dashline[50]{7}(10,90)(70,20) \dashline[50]{7}(70,10)(130,90)
\dashline[50]{7}(10,110)(130,110) \dashline[50]{7}(10,100)(70,90)
\dashline[50]{7}(70,80)(130,100) \dashline[50]{7}(70,30)(70,70)
\end{picture}}
\end{picture}

\caption{Three patterns whose topological-minor tractability is open.}

\label{fig:open}

\end{figure}
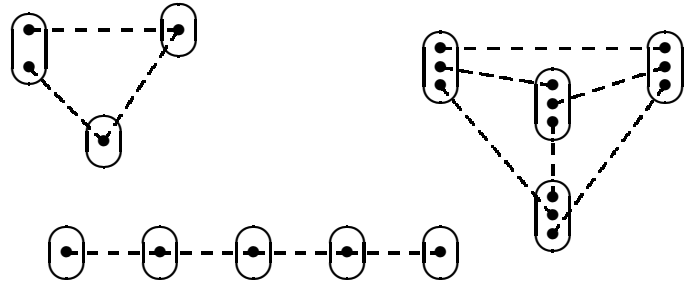

Another avenue of future
research is the discovery of other applications for topological
minors, such as in variable elimination~\cite{ccez15:jcss}.
Indeed, perhaps the most interesting open question is whether the notion of topological
minor, introduced in this paper, will find
applications other than the definition of tractable classes of the CSP.
We have seen that certain classic results from graph theory can lead
to results concerning topological minors of CSP instances. An
intriguing avenue for future research is to build bridges in the
other direction. For example, a corollary of the proof of
Theorem~\ref{thm:PXX} is that finding a path linking two given
vertices and which passes at most once through each part of an
$n$-partite graph is $\NP$-hard. Another way of expressing this is that
finding a \emph{heterochromatic path} linking two given vertices in a
vertex-coloured graph is $\NP$-hard~\cite{DBLP:journals/endm/LiZB01,broersma2005paths}.

To achieve further progress it may
well be necessary to further refine or modify the definition of a
topological minor given here. We regard this work as simply a first
step towards a general topological theory of complexity for
constraint satisfaction problems.

\newcommand{\noopsort}[1]{}\newcommand{\Zivny}{\noopsort{ZZ}\v{Z}ivn\'y}

\end{document}